\newcommand{\blind}{0}
\newtheorem{assumption}{Assumption}
\newtheorem{theorem}{Theorem}
\newtheorem{proposition}{Proposition}
\newtheorem{lemma}{Lemma}
\newtheorem{example}{Example}
\newtheorem{definition}{Definition}
\def\sumi{\sum_{i=1}^n}
\def\sumj{\sum_{j=1}^n}
\newcommand{\bs}{\boldsymbol}
\newcommand{\prob}{\mathbb{P}}
\newcommand{\mt}{\mathcal{T}}
\newcommand{\E}{\mathbb{E}}
\newcommand{\HT}{\textnormal{HT}}
\newcommand{\Fisher}{\textnormal{F}}
\newcommand{\Lin}{\textnormal{L}}
\newcommand{\NDF}{\textnormal{ND-F}}
\newcommand{\ND}{\textnormal{ND}}
\newcommand{\NDL}{\textnormal{ND-L}}
\newcommand{\opt}{\textnormal{opt}}
\newcommand{\haj}{\textnormal{Haj}}
\def\Var{\textnormal{Var}}
\def\Cov{\textnormal{Cov}}
\def\op{o_{\mathbb{P}}}
\def\Op{O_{\mathbb{P}}}
\def\cT{\mathcal{T}}
\DeclareMathOperator*{\argmin}{arg\,min}
\begin{document}

\def\spacingset#1{\renewcommand{\baselinestretch}%
{#1}\small\normalsize} \spacingset{1}


\if1\blind
{
  \title{\bf Adjusting auxiliary variables under approximate neighborhood interference}
    \date{}
  \maketitle
}\fi

\if0\blind
{
	\title{\bf Adjusting auxiliary variables under approximate neighborhood interference}
	
	\author{Xin Lu$^1$, Yuhao Wang$^{2,3}$ and Zhiheng Zhang$^2$\thanks{The authors contributed equally to this work, names are in alphabetical order. Correspondence should be addressed to XL: \url{lux20@mails.tsinghua.edu.cn}.} \\ \\
    $^1$ Department of Statistics and Data Science, Tsinghua University\\
    $^2$ Institute for Interdisciplinary Information Sciences, Tsinghua University\\
    $^3$ Shanghai Qi Zhi Institute
    }
	\date{}
	\maketitle
}\fi

\bigskip
\begin{abstract}
Randomized experiments are the gold standard for causal inference. However, traditional assumptions, such as the Stable Unit Treatment Value Assumption (SUTVA), often fail in real-world settings where interference between units is present. Network interference, in particular, has garnered significant attention. Structural models, like the linear-in-means model, are commonly used to describe interference; but they rely on the correct specification of the model, which can be restrictive. Recent advancements in the literature, such as the Approximate Neighborhood Interference (ANI) framework, offer more flexible approaches by assuming negligible interference from distant units.
In this paper, we introduce a general framework for regression adjustment for the network experiments under the ANI assumption. This framework expands traditional regression adjustment by accounting for imbalances in network-based covariates, ensuring precision improvement, and providing shorter confidence intervals. We establish the validity of our approach using a design-based inference framework, which relies solely on randomization of treatment assignments for inference without requiring correctly specified outcome models.
\end{abstract}

\noindent%
{\it Keywords:}  Causal inference, regression adjustment, design-based inference, network interference, exposure mapping 
\vfill
\newpage
\spacingset{1.9} 
\section{Introduction}\label{sec:introduction}
Randomized experiments are the gold standard for causal inference. Standard randomized experiments assume the stable unit treatment value assumption  \citep[SUTVA,][]{rubin1980randomization}, namely that the treatment of one unit does not affect the outcomes of any other units. However, this assumption often fails in real settings where interference between units exists. One prominent example is interference that arises from network links, also commonly known as the network interference. In response, researchers across disciplines, such as economics, social science, and public health, have conducted randomized experiments on networks to explore interference or the spillover effect \citep{cai2015social, paluck2016changing,haushofer2018long,BEAMAN2018147, carter2021subsidies}.

The analysis of network experiments with interference typically depends on the nature of the interference in certain contexts. One common approach is to restrict interference through structural models. For instance, the well-known linear-in-means model \citep{manski1993identification, bramoulle2009identification} assumed that a unit's outcome is influenced by the average outcomes and characteristics of its network neighbors. Similarly, \cite{li2022random} considered interference arising from the treated proportion of network neighbors. In contrast, \cite{sussman2017elements} proposed a heterogeneous linear model that accounts for individual heterogeneity rather than relying solely on the treated proportion. Additionally, \cite{aronow2017estimating} introduced the concept of exposure mapping, offering a more flexible framework for representing interference through a low-dimensional function of treatment assignments.

Despite their broad applications, the requirement of accurately specified structural models can be quite stringent. This has led to a growing interest in analyzing experiments without any model specification of the exposure mapping function~\citep{savje2024causal,leung2022causal,viviano2023causal}. Within this literature, the ``Approximate Neighborhood Interference" (ANI) assumption as a popular framework, posits that interference from more distant units should be negligible \citep{leung2022causal,hoshino2023causal,gao2023causal}.

Beyond observed outcomes, baseline covariates, such as demographic variables (e.g., gender and age), are often collected before experiments. Regression adjustment is commonly applied in the analysis stage to incorporate these covariates for more accurate treatment effect estimation (i.e., reducing the asymptotic variacne) \citep{fisher1935design,Lin2013Agnostic}. While regression adjustment is well-studied under SUTVA, its application in settings with interference has received much less attention. Exceptions include \cite{ren4783803design}, which explored regression adjustment under stratified interference \citep{hudgens2008toward,imai2021causal}; and \cite{gao2023causal}, which extended Fisher’s and Lin’s regressions to the ANI settings.

In this paper, we contribute to the literature by introducing a general framework for regression adjustment under the ANI assumption. Our framework encompasses three key elements: auxiliary variables, normalizing functions applied to these variables, and network-dependent regression. These elements expand the traditional no-interference regression adjustment framework in three ways. First, the auxiliary variables broaden the concept of covariates to include variables that depend on both covariates and network structure. For example, in the linear-in-means model, one unit’s outcome is influenced by its own covariate, the covariate average, and the treated proportion of its network neighbors, all of which serve as auxiliary variables. Accounting for all of them in the regression adjustment stage is necessary when the outcome follows a linear-in-means model. However, variables such as the treated proportion of network neighbors have been overlooked by the SUTVA framework. By introducing the concept of auxiliary variables, we fill this gap.

Second, we generalize the centering procedure applied to covariates to ensure consistency in Lin’s regression-based estimator, treating it as a special case of normalization functions. We demonstrate that a provably more efficient regression-adjusted estimator cannot be achieved with simply centered covariates. To resolve this, we propose an a new normalization procedure applied to auxiliary variables, aimed at enhancing precision for downstream analysis. The pursuit of precision improvement in regression adjustment dates back to \cite{freedman2008regression}'s critique that Fisher's regression may compromise asymptotic precision. Responding to this critique, \cite{Lin2013Agnostic} proposed Lin's regression, which guarantees precision improvement, a property that has become a central focus in regression adjustment literature \citep{negi2021revisiting}.

Finally, in order to ensure that our proposed normalizing procedure can yield an estimator that is provably more efficient than without adjustment, we propose a network dependent regression adjustment  that minimizes the Heteroscedasticity and Autocorrelation Consistent (HAC) variance estimator \citep{newey1987simple} to account for the network dependencies. This approach guarantees shorter confidence intervals and, when combined with our proposed normalizing procedure, further improves the precision. This network-dependent regression adjustment aligns with literature that formulates regression adjustments by minimizing the estimated variance \citep{ding2024first, li2020rerandomization, cochran1977sampling}. 

The asymptotic results is based on the design-based inference framework which is prevalent in the causal inference literature \citep[e.g.][]{Lin2013Agnostic,bloniarz2016lasso, li2017general,aronow2017estimating,leung2022causal}. By relying only on randomization in the treatment assignments, design-based inference provide a valid inference of treatment effects without requiring additional model or distributional assumptions on the potential outcomes. 
The remainder of the paper is organized as follows. In~\Cref{sec:framework}, we establish the design-based inference framework, introduce the ANI assumption, and review the existing regression adjustment method literature. In~\Cref{ND}, we present the network-dependent regression adjustment and discuss its asympototic properties. In~\Cref{general}, we introduce the concept of auxiliary variables and our new normalizing function, show how to combine the network-dependent regression adjustment approach introduced in~\Cref{ND} with the new variables and the normalization procedure to boost the efficiency. In~\Cref{sec:experiment}, we conduct a simulation and real data analysis to evaluate the finite-sample performance of the proposed methods. In~\Cref{sec:conclusion}, we conclude the paper by a discussion of future research.

 \section{Framework, notations and literature review}\label{sec:framework}
\subsection{Framework}   
 We consider a finite population of $n$ units, $\mathcal{N}_n = \{1,\ldots,n\}$. We observe a network structure among units, which is described by an adjacency matrix $\bs{A} = (A_{ij})_{i,j\in \mathcal{N}_n}$ where $A_{ij}\in \{0,1\}$. $A_{ij}=1$ suggests that $i$ and $j$ are connected. We write the set of all possible networks by $\mathcal{{A}}_n$, and write $\bs{1}_n$ as the vector of all ones of length $n$. For simplicity, we assume the network is undirected and has no self-links so that $\bs{A}$ is symmetric and $A_{ii}=0$ for $i\in\mathcal{N}_n$. We consider an experiment with binary treatment, and let the random variable $D_i\in \{0,1\}$ denote the treatment assigned to individual $i$, whose distribution is known a priori. Moreover, we assume that the random variables {$\{D_i\}_{i\in \mathcal{N}_n}$ are mutually independent.}
 
 Let $Y_i$ be the observed outcome of unit $i$ after the experiment and $\bs{Y} := (Y_1,\ldots,Y_n)^\top$. Given $\bs{d}\in \{0,1\}^n$, we define $Y_i(\bs{d})$ as the potential outcome of unit $i$ if the population is assigned with treatment $\bs{d}$. By construction, $Y_i = Y_i(\bs{D})$. Apparently, here we assume the potential outcomes depend not only on one's own treatment, but also that of others, thus violating the classical SUTVA assumption \citep{imbens2015causal}.

Since we only observe one realization in the $2^n$ potential outcomes for each unit, it is, in general, impossible to define identifiable causal estimand without further assumptions. The predominant method is to define causal estimand through a low-dimensional function called exposure mapping~\citep{aronow2017estimating}. The exposure mapping is a pre-specified function $\bs{T}:\mathcal{N}_n\times \{0,1\}^n\times \mathcal{A}_n\rightarrow \mathcal{T}$ where $\mathcal{T} \subseteq \mathbb{R}^{d_{\bs{T}}}$ for some $d_{\bs{T}} \in \mathbb{N}$ is a finite set. With a slight abuse of notation, we denote its realization for unit $i$ as $\bs{T}_i := \bs{T}(i,\bs{D},\bs{A})$. 

Equipped with $\bs{T}_i$, we may define the expected outcome of $i$ under exposure mapping value $\bs{t}$ as~\citep{leung2022causal}:
\[
\mu_i(\bs{t}) = \sum_{\bs{d}\in \{0,1\}^n} \prob(\bs{D} = \bs{d}| \bs{T}_i=\bs{t}) Y_i(\bs{d}).
\]
 Let $\mu(\bs{t}) =\sum_{i=1}^n \mu_i(\bs{t})/n$ be its finite-population average.
We define the causal estimand as the change of the expected average potential outcomes when changing $\bs{T}_i$ from $\bs{t}$ to $\bs{t}^\prime$: $\tau(\bs{t},\bs{t}^\prime) = \mu(\bs{t})-\mu(\bs{t}^\prime)$ \citep{leung2022causal,gao2023causal}. For instance, we can define the exposure mapping  by  $\bs{T}_i = (D_i, \sum_{j=1}^n D_j A_{ij}/\sum_{j=1}^n A_{ij})$, where the first component captures the direct effect of the treatment; and the second component captures the spillover effect through the treated proportion of neighbors. In the rest of this paper, for simplicity, we suppress the dependency of the variables on the pair $(\bs{t},\bs{t}^\prime)$ when the context is clear, such that $\tau_i$ and $\tau$ are the individual and average treatment effects, respectively. Also, since we consider an asymptotic regime where the population size $n$ goes to $\infty$ and the randomness comes only from treatment assignment $\bs{D}$, all quantities, including network structure $\bs{A}$ depend implicitly on $n$. We suppress their dependency on $n$ for simplicity when the context is clear.
 
\subsection{Finite-population average treatment effect estimation under network interference without covariate information}
\label{sec:estimator-without-covariate-information}

Just as the literature for finite-population average treatment effect estimation with SUTVA assumption, under the context of network interference, there are two famous estimators for $\tau$: the Horvitz-Thompson estimator and the Hajek estimator. Given any $\bs{t} \in \mathcal{T}$, let $\pi_i(\bs{t}) = \mathbb{P}(\bs{T}_i = \bs{t})$ for $i=1,\ldots,n$, the Horvitz-Thompson (HT)  estimator \citep{leung2022causal} is defined as the inverse probability weighting of the outcomes: 
\begin{equation*}\begin{aligned}
\hat{\tau}_\HT := \hat{\mu}_{\operatorname{HT}}(\bs{t}) - \hat{\mu}_{\operatorname{HT}}(\bs{t}'), \bs{t}, \bs{t}' \in \mathcal{T}, \text{where~} \hat{\mu}_{\operatorname{HT}}(\bs{t}) := \frac{1}{n} \sum_{i=1}^n\frac{{Y}_i \bs{1}(\bs{T}_i = \bs{t})}{\pi_i(\bs{t})}.
\end{aligned}
\end{equation*}
The HT estimator has the advantage of being unbiased, as long as all the $\pi_i$'s are not equal to $0$ or $1$. In the large sample limit, \citet{leung2022causal} recently further proved that the above estimator is asymptotically consistent under the following five regularity assumptions on the interference structure and potential outcomes, which will also be used in this work.

The first two assumptions are standard regularity conditions on potential outcomes and propensity score. Similar condition has also been assumed in the literature without interference.
\begin{assumption}[Overlap]
	\label{a:overlap}
	There exists constants $\underline{\pi},\overline{\pi}\in (0,1)$, such that $\pi_i(\bs{t}) \in [\underline{\pi},\overline{\pi}]$ for all $i\in \mathcal{N}_n$ and $\bs{t}\in \mathcal{T}$.
\end{assumption}
Assumption~\ref{a:overlap} requires the probability of exposure mapping value bounded away from $0$. Assumption~\ref{a:overlap} depends on the distribution of $\bs{D}$, the structure of the network and the specification of $\bs{T}_i$. For example, Assumption~\ref{a:overlap} is usually violated when we define exposure mapping by $\bs{T}_i \equiv \bs{1}(\sum_{j=1}^n A_{ij} D_j=0)$ and $\sum_{j=1}^n  A_{ij} \rightarrow \infty$ as $n$ grows.
\begin{assumption}[Bounded outcome]
	\label{a:bounded-outcome}
	There exists some constant $0 <c_Y<\infty$, such that $|Y_i(\bs{d})|<c_Y$ for all $i\in \mathcal{N}_n$ and $\bs{d}\in \{0,1\}^n$.
\end{assumption}

The third assumption is about the exposure mapping function. To formally describe it, given any $i, j \in \mathcal{N}_n$, we write $\ell_{\bs{A}}(i,j)$ be the path distance between $i$ and $j$, specifically, $\ell_{\bs{A}}(i,i)=0$. Let $\mathcal{N}(i,s;\bs{A}) := \{j\in \mathcal{N}_n: \ell_{\bs{A}}(i,j)\leq s\}$ denote $i$'s $s$-neighborhood. Let $\bs{d}_{\mathcal{N}(i,s;\bs{A})} = (d_j)_{j\in \mathcal{N}(i,s;\bs{A})}$ and $\bs{A}_{\mathcal{N}(i,s;\bs{A})}= (A_{kl})_{k,l\in \mathcal{N}(i,s;\bs{A})}$ be the subvector and submatrix of $\bs{d}$ and $\bs{A}$ restricted on $\mathcal{N}(i,s;\bs{A})$, respectively. Then we have

\begin{assumption}[Exposure mapping]
	\label{a:bounded-exposure-mapping}
	There exists some $K \in\mathbb{N}$ such that for all $i\in\mathcal{N}_n$, $\bs{A},\bs{A}^\prime \in \mathcal{A}_n,$ and $\bs{d},\bs{d}^\prime\in \{0,1\}^n$, it holds that $\bs{T}(i,\bs{d},\bs{A}) = \bs{T}(i,\bs{d}^\prime,\bs{A}^\prime)$ if $\mathcal{N}(i,K;\bs{A}) = \mathcal{N}(i,K;\bs{A}^\prime)$, $\bs{d}_{\mathcal{N}(i,K;\bs{A})} = \bs{d}^\prime_{\mathcal{N}(i,K;\bs{A}^\prime)}$ and $\bs{A}_{\mathcal{N}(i,K;\bs{A})} = \bs{A}^\prime_{\mathcal{N}(i,K;\bs{A}^\prime)}$.
\end{assumption}
Informally, Assumption~\ref{a:bounded-exposure-mapping} requires that the exposure mapping of each unit depends only on unit's own $K$-neighborhood. This restricts the causal estimand we are interested in.

The last two assumptions are also called the the approximate neighborhood interference (ANI) model assumption, which was original proposed by~\citet{leung2022causal}. Intuitively, these assumptions state that the spillover effects from other units further apart should be negligible.
\begin{assumption}[ANI]
	\label{a:ANI}
	Let 
	\begin{align*}
		\theta_{n,s} := \max_{i\in\mathcal{N}_n}\mathbb{E}[|Y_i(\bs{D})-Y_i(\bs{D}^{(i,s)})|],  
	\end{align*}
where $\bs{D}^{(i,s)}$ is random vector with $j$-th coordinate being equal to $D_j$ if and only if  $j\in \mathcal{N}(i,s;\bs{A})$, and equal to the $j$-th coordiate of $\bs{D}$'s i.i.d. copy $\bs{D}^\prime$ otherwise. Then we have
	$\sup_n \theta_{n,s}\rightarrow 0$, as $s\rightarrow \infty$.
\end{assumption}

\begin{assumption}[Weak dependency for LLN]
	\label{a:weak-dependency-for-LLN}
	Let $M_n^\partial(s) = n^{-1}\sum_{i=1}^n|\mathcal{N}^\partial(i,s;\bs{A})|$, where $\mathcal{N}^\partial(i,s;\bs{A}) = \{j:\ell_{\bs{A}}(i,j) = s\}$ is the set of units exactly $s$ path distance from $i$; let $\tilde{\theta}_{n,s} = \theta_{n,\lfloor s/2\rfloor}\bs{1}\{s>2\max\{K,1\}\}+\bs{1}\{s\leq2\max\{K,1\}\}$ where $\lfloor \cdot \rfloor$ is the floor function and $K$ is as in Assumption~\ref{a:bounded-exposure-mapping}. Then
	$\sum_{s=0}^n M_n^\partial (s)\tilde{\theta}_{n,s}=o(n)$.
\end{assumption}

With the above and a few other regularity conditions that will be mentioned below, \citet{leung2022causal} prove that the HT estimator is asymptotically normal; and the variance can be consistently estimated in the large sample limit. For more information about these assumptions, we refer the readers to \citet{leung2022causal}.

Besides HT estimator, \citet{gao2023causal} shows that the Hajek estimator can also be used for consistent and asymptotically normal estimation of population average treatment effects:
\begin{equation*}
	\begin{aligned}
		\hat{\tau}_{\operatorname{Haj}} := \hat{\mu}_{\operatorname{Haj}}(\bs{t}) - \hat{\mu}_{\operatorname{Haj}}(\bs{t}'), \bs{t}, \bs{t}' \in \mathcal{T}, \text{where~} \hat{\mu}_{\operatorname{Haj}}(\bs{t}) := \frac{\sum_{i=1}^n {Y}_i \bs{1}(\bs{T}_i = \bs{t})/\pi_i(\bs{t})}{ \sum_{i=1}^n \bs{1}(\bs{T}_i = \bs{t})/\pi_i(\bs{t})}.
	\end{aligned}
\end{equation*}

Compared to HT estimator, although the Hajek estimator is not unbiased anymore, its asymptotic variance is usually smaller than the former. Specifically, \citet{gao2023causal} proved that the asymptotic variance of Hajek esitmator is equal to 
\begin{equation*}
	\sigma_\haj^2 = \Var(n^{-1/2}\sum_{i=1}^n V_{\haj,i}) \quad\textrm{with}\quad V_{\haj,i} = \frac{({Y}_i-{\mu}(\bs{t})) \bs{1}(\bs{T}_i = \bs{t})}{\pi_i(\bs{t})} - \frac{({Y}_i-{\mu}(\bs{t}^\prime))\bs{1}(\bs{T}_i = \bs{t}^\prime)}{\pi_i(\bs{t}^\prime)},
\end{equation*}
which is usually smaller than that of the HT estimator~\citep{leung2022causal} that is equal to:
\[
\sigma_\HT^2 = \Var(n^{-1/2}\sum_{i=1}^n V_{\HT,i}) \quad\textrm{with}\quad V_{\HT,i} = \frac{{Y}_i \bs{1}(\bs{T}_i = \bs{t})}{\pi_i(\bs{t})} - \frac{{Y}_i\bs{1}(\bs{T}_i = \bs{t}^\prime)}{\pi_i(\bs{t}^\prime)}-({\mu}(\bs{t})-{\mu}(\bs{t}^\prime)).
\]
Moreover, the Hajek estimator is also popular due to its invariance under location shift and stability \citep{gao2023causal}.

\subsection{Regression adjustment for average treatment effect estimation with interference}
\label{sec:review-of-Gao-and-Ding-regression}

Under the SUTVA model, regression adjustment is usually used to improve the precision of average treatment effect estimation~\citep{freedman2008regression,Lin2013Agnostic,lei2021regression,lu2023debiased}. With the existence of interference, \citet{gao2023causal} propose two regression adjustment methods via \emph{weighted} least square regression: Fisher's regression based estimator and Lin's regression based estimator. Let $\bs{X}_i \in \mathbb{R}^p$ be the \emph{centered} covariate for unit $i$, i.e., $\bar{\bs{X}} = 0$ where $\bar{\bs{X}}$ is the empirical average of $\bs{X}_i$'s; let $\bs{X} := (\bs{X}_1, \ldots, \bs{X}_n)^\top \in \mathbb{R}^{n \times p}$ be the covariate matrix; let $\omega_i := 1 /{\pi_i(\bs{T}_i)}$ be the weight of unit $i$ for regression; Fisher's regression based estimator is defined as $\hat{\tau}_{\Fisher} := \hat{\alpha}_{\Fisher}(\bs{t})-\hat{\alpha}_{\Fisher}(\bs{t}^\prime)$, where $\hat{\alpha}_{\Fisher}(\bs{t}), \hat{\alpha}_{\Fisher}(\bs{t'})$ are computed according to the following convex program: 
\begin{align*}
\{\hat{\alpha}_{\Fisher}(\tilde{\bs{t}}), \tilde{\bs{t}}\in\mathcal{T}\},  \hat{\bs{\beta}}_{\Fisher} :=
\argmin_{\{\alpha(\tilde{\bs{t}}), \tilde{\bs{t}}\in\mathcal{T}\},\bs{\beta}}~\sum_{i \in \mathcal{N}_n}\omega_i\Big(Y_i-\alpha(\bs{T}_i)-\bs{X}_i^\top\bs{\beta}\Big)^2.
\end{align*}
Another estimator is Lin's regression based estimator~\citep{gao2023causal}, which is defined as $\hat{\tau}_{\Lin} := \hat{\alpha}_{\Lin}(\bs{t})-\hat{\alpha}_{\Lin}(\bs{t}^\prime)$, where 
\begin{align*}
 \{\hat{\alpha}_{\Lin}(\tilde{\bs{t}}), \hat{\bs{\beta}}_{\Lin} (\tilde{\bs{t}}),\tilde{\bs{t}} \in \mathcal{T}\} := \argmin_{\{\alpha(\tilde{\bs{t}}),\bs{\beta}(\tilde{\bs{t}}),\tilde{\bs{t}} \in \mathcal{T}\}}\sum_{i \in \mathcal{N}_n}~\omega_i\Big(Y_i-\alpha(\bs{T}_i)-\bs{X}_i^\top\bs{\beta}(\bs{T}_i)\Big)^2.
\end{align*}

\cite{gao2023causal} demonstrates that under additional regularity conditions, $\hat{\tau}_{\dagger}$ for $\dagger \in \{\Fisher, \Lin\}$ remain asymptotically normal estimators for $\tau$. However, their asymptotic variances are not provably smaller than without using covariate adjustments. In fact, we can give a counterexample that shows that $\hat{\tau}_{\dagger}$ may hurt the precision compared to the unadjusted estimator. We give a constructive setting when the asymptotic variance of Fisher's and Lin's regression adjusted estimator is strictly less efficient than the unadjusted estimator in the Supplementary Material.

In the next section, we propose a network-dependent regression adjustment, which guarantees variance reduction compared to that without covariate adjustment.

\section{Network-dependent regression adjustment}\label{ND}


\subsection{Ideally optimal linear adjusted estimators}

As we have discussed in Section~\ref{sec:review-of-Gao-and-Ding-regression}, the regression adjusted estimators defined by~\citet{gao2023causal} are not necessarily guaranteed to improve efficiency compared to without performing regression adjustment at all; the next important question then, is how to propose a provably more efficient regression adjustment estimator. 
To motivate our new estimator, we first introduce a broader class of estimators that takes Hovits-Thompson, Hajek estimators as well as Fisher's and Lin's estimators as special cases. 

For notation simplicity, we first redefine HT and Hajek estimators as weighted averages: $\hat{\tau}_\star  = \frac{1}{n}\sum_{i=1}^n w_{\star,i} Y_i$, $\star = \{\HT,\haj\}$, where $w_{\HT,i}   = \bs{1}(\bs{T}_i = \bs{t})/\pi_i(\bs{t}) - \bs{1}(\bs{T}_i = \bs{t}^\prime)/\pi_i(\bs{t}^\prime)$ and
\begin{align*}
    w_{\haj,i} = \frac{\bs{1}(\bs{T}_i = \bs{t})/\pi_i(\bs{t})}{ n^{-1}\sum_{i=1}^n \bs{1}(\bs{T}_i = \bs{t})/\pi_i(\bs{t})} - \frac{ \bs{1}(\bs{T}_i = \bs{t}^\prime)/\pi_i(\bs{t}^\prime)}{n^{-1}\sum_{i=1}^n \bs{1}(\bs{T}_i = \bs{t}^\prime)/\pi_i(\bs{t}^\prime)}. 
\end{align*}

 With this, and write $\bs{\beta}_{\mathcal{T}} := \{\bs{\beta}(\tilde{\bs{t}}),\tilde{\bs{t}}\in\mathcal{T}\}$ as the set of regression coefficients indexed by $\mathcal{T}$, we define 
\[
\hat{\tau}_{\star}\big(\bs{\beta}_{\mathcal{T}}\big) := n^{-1}\sum_{i=1}^n\big\{Y_i-\bs{X}_i^\top \bs{\beta}(\bs{T}_i)\big\}w_{\star,i}.\]
With a slight abuse of notation, when all the $\bs{\beta}(\tilde{\bs{t}})$'s in $\bs{\bs{\beta}}_{\mathcal{T}}$ are the same (just like Fisher's estimator), we denote it by $\hat{\tau}_{\star}(\bs{\beta})$.

Apparently, $\hat{\tau}_{\haj}$ and $\hat{\tau}_\HT$ can be regarded as $\hat{\tau}_\star\equiv \hat{\tau}_{\star}(\bs{0})$, $\star \in \{\haj, \HT\}$. $\hat{\tau}_\Fisher$ and $\hat{\tau}_\Lin$ 
fall into these categories with $\hat{\tau}_\Fisher = \hat{\tau}_{\haj}(\hat{\bs{\beta}}_\Fisher)$ and $\hat{\tau}_\Lin = \hat{\tau}_{\haj}(\hat{\bs{\beta}}_{\Lin, \mathcal{T}})$, where $\hat{\bs\beta}_{\Lin, \mathcal{T}} := \{\hat{\bs\beta}_{\Lin}(\tilde{\bs{t}}), \tilde{\bs{t}} \in \mathcal{T}\}$.
With this, it is obvious that the asymptotic variance of $\sqrt{n}(\hat{\tau}_{\star}(\bs{\beta}_{\mathcal{T}}) - \tau)$ is equal to $\Var(n^{-1/2} \sum_{i=1}^n V_{\star, i}(\bs{\beta}_{\mathcal{T}}))$, where $V_{\star, i}(\bs{\beta}_{\mathcal{T}}))$ is the same as the $V_{\star, i}$ in~\Cref{sec:estimator-without-covariate-information}, but with $Y_i$ replaced by $Y_i - \bs{X}_i^\top \bs{\beta}(\bs{T}_i)$.

In this case, it is straightforward that an optimal regression adjustment coefficient should be the one that minimizes $\Var(n^{-1/2} \sum_{i=1}^n V_{\star, i}(\bs{\beta}_{\mathcal{T}}))$, i.e., 
\begin{align*}
    \tilde{\bs{\beta}}_{\star, \mathcal{T}}^\opt := \argmin_{\bs{\beta}_{\mathcal{T}}} \Var(n^{-1/2} \sum_{i=1}^n V_{\star, i}(\bs{\beta}_{\mathcal{T}})),
\end{align*}
or 
\[
\tilde{\bs{\beta}}_{\star}^\opt := \argmin_{\bs{\beta}} \Var(n^{-1/2} \sum_{i=1}^n V_{\star, i}(\bs{\beta})),
\]
when we restrict the regression coefficients associated with different $\bs{t}$'s to be the same just like Fisher's estimator. Though these regression coefficients can induce provably more efficient estimators than without covariates, they cannot be consistently estimated from the realized experiment. This requires us to propose new statistical approaches to estimate these quantities through real data, which we answer in the next subsection.

\subsection{A new variance estimator for network-dependent regression adjustment}\label{sec:varest}


As mentioned in the previous section, to successfully estimate $\tilde{\bs{\beta}}_{\star, \mathcal{T}}^\opt$ and $\tilde{\bs{\beta}}_{\star}^\opt$, an important first step is of course to estimate the variances of $\hat{\tau}_\star(\bs{\beta}_{\mathcal{T}})$ using the observed data. Define $\hat{\bs{\mu}}_{\star,\bs{X}}(\tilde{\bs{t}})$ ($\star \in \{\HT,\haj\}, \tilde{\bs{t}} \in \mt$), analogously as $\hat{\mu}_{\star}(\tilde{\bs{t}})$ with $Y_i$ replaced with $\bs{X}_i$. Let $\hat{V}_{\star,i}(\bs{\beta}_{\mathcal{T}})$ be the empirical approximation of $V_{\star,i}(\bs{\beta}_{\mathcal{T}})$ where we replace $\mu(\bs{T}_i)$ and $\bs{\mu}_{\bs{X}}(\bs{T}_i)$ with their empirical approximations $\hat{\mu}_{\star}(\bs{T}_i)$ and $\hat{\mu}_{\star, \bs{X}}(\bs{T}_i)$, let $B_{ij} = \bs{1} (\ell_{\bs{A}}(i,j)<b_n)$, where $b_n$ is some prespecified bandwith to be chosen, we estimate the asymptotic variance of $\sqrt{n}(\hat{\tau}_{\star}(\bs{\beta}_{\mathcal{T}}) - \tau)$, which we denote by $\sigma_\star^2(\bs{\beta}_{\mathcal{T}})$, via
 \begin{align*} 
\hat{\sigma}_{\star}^2(\bs{\beta}_\mathcal{T}) := \frac{1}{n}\sum_{i=1}^n \sum_{j=1}^n  B_{ij} \hat{V}_{\star,i}(\bs{\beta}_{\mathcal{T}})\hat{V}_{\star,j}(\bs{\beta}_\mathcal{T}).
 \end{align*}
This is the HAC variance estimator used in \citet{leung2022causal} and \citet{hoshino2023causal}. We now discuss the asymptotic limits of $\hat\sigma_\star^2(\bs{\beta}_{\mathcal{T}})$ with a fixed $\bs{\beta}_{\mathcal{T}}$; to understand these limits, we need the following two assumptions.
\begin{assumption}
\label{a:bounded-X}
There exists some constant $c_{\bs{X}}>0$, such that, $\|\bs{X}_i\|_\infty \leq c_{\bs{X}}$ for all $i\in \mathcal{N}_n$.
\end{assumption}
\Cref{a:bounded-X} is the regulatity conditions on the covariates, analogous to \Cref{a:bounded-outcome} in \cite{gao2023causal}. To formally describe the second assumption, we let $M_n(s,k) = n^{-1}\sum_{i=1}^n |\mathcal{N}(i,s;\bs{A})|^k$ be the $k$-th moment of the $s$-neighborhood size and $\mathcal{J}_n(s, m)=\left\{(i, j, k, l) \in \mathcal{N}_n^4: k \in \mathcal{N}(i, m ; \bs{A}), l \in \mathcal{N}(j, m ; \bs{A}), \ell_{\bs{A}}(i, j)=s\right\}$, then we have

\begin{assumption}[Consistency of Variance Estimator]
 \label{a:consistency-for-variance-estimator}
(i) $\sum_{s=0}^n M_n^{\partial}(s) \tilde{\theta}_{n, s}^{1-\epsilon}=O(1)$ for some $\epsilon>0$, (ii) $ M_n\left(b_n, 1\right)=o\left(n^{1 / 2}\right)$, (iii) $M_n\left(b_n, 2\right)=o(n)$, (iv) $\sum_{s=0}^n\left|\mathcal{J}_n\left(s, b_n\right)\right| \tilde{\theta}_{n, s}=o\left(n^2\right)$. 
\end{assumption}

\Cref{a:consistency-for-variance-estimator} was proposed by \cite{leung2022causal}. It restricts the network structure and the rate of $\tilde{\theta}_{n,s}$.

\begin{proposition}
\label{prop:bias-of-variance-estimator-Lin-Fisher}
Under Assumption~\ref{a:overlap}--\ref{a:ANI}, \ref{a:bounded-X}, \ref{a:consistency-for-variance-estimator}, we have if we choose $b_n \rightarrow \infty$,  then for any $\star \in \{\HT,\haj\}$, 
\[
\hat{\sigma}^2_\star(\bs{\beta}_{\mathcal{T}})=\sigma^2_\star(\bs{\beta}_{\mathcal{T}})+ R(\bs{\beta}_{\mathcal{T}})+\op(1),
\]
where 
\[
R(\bs{\beta}_\mt) = \frac{1}{n}\sum_{i=1}^n\sum_{j=1}^n B_{ij} (\tau_i-\tau-\{\bs{\beta}(\bs{t})-\bs{\beta}(\bs{t}^\prime)\}^\top\bs{X}_i)(\tau_j-\tau-\{\bs{\beta}(\bs{t})-\bs{\beta}(\bs{t}^\prime)\}^\top\bs{X}_j).
\]
\end{proposition}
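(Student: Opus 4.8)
The plan is to reduce the statement to the known consistency result for the HAC variance estimator of the \emph{unadjusted} estimator in \citet{leung2022causal}, applied to a suitably defined pseudo-outcome. Since $\bs{\beta}_{\mathcal{T}}$ is fixed, I would introduce the pseudo-potential-outcome $\tilde{Y}_i(\bs{d}) := Y_i(\bs{d}) - \bs{X}_i^\top \bs{\beta}(\bs{T}(i,\bs{d},\bs{A}))$, with realized value $\tilde{Y}_i = Y_i - \bs{X}_i^\top\bs{\beta}(\bs{T}_i)$. The first step is to verify the algebraic identities that make the reduction exact: because the conditioning event $\{\bs{T}_i = \bs{t}\}$ forces $\bs{\beta}(\bs{T}(i,\bs{d},\bs{A})) = \bs{\beta}(\bs{t})$, one obtains $\tilde{\mu}_i(\bs{t}) = \mu_i(\bs{t}) - \bs{X}_i^\top\bs{\beta}(\bs{t})$, so the HT and Hajek mean estimators of $\tilde{Y}$ equal $\hat{\mu}_\star(\bs{t}) - \hat{\bs{\mu}}_{\star,\bs{X}}(\bs{t})^\top\bs{\beta}(\bs{t})$, and consequently $\hat{V}_{\star,i}(\bs{\beta}_{\mathcal{T}})$ and $\hat{\sigma}^2_\star(\bs{\beta}_{\mathcal{T}})$ coincide exactly with the corresponding unadjusted HAC quantities built from $\tilde{Y}$. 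In other words, $\hat{\sigma}^2_\star(\bs{\beta}_{\mathcal{T}})$ is literally the unadjusted HAC variance estimator for the pseudo-outcome, and $\sigma^2_\star(\bs{\beta}_{\mathcal{T}})$ is its true asymptotic variance, so the whole problem becomes an application of the known result to $\tilde{Y}$.

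Next I would check that $\tilde{Y}$ inherits the regularity conditions. Boundedness is immediate: since $\mathcal{T}$ is finite and $\bs{\beta}_{\mathcal{T}}$ is fixed, \Cref{a:bounded-outcome} and \Cref{a:bounded-X} give $|\tilde{Y}_i(\bs{d})| \leq c_Y + c_{\bs{X}}\max_{\tilde{\bs{t}}}\|\bs{\beta}(\tilde{\bs{t}})\|_1$, so a bounded-outcome condition holds for $\tilde{Y}$. The crucial point is the ANI condition: I would bound $|\tilde{Y}_i(\bs{D}) - \tilde{Y}_i(\bs{D}^{(i,s)})| \leq |Y_i(\bs{D}) - Y_i(\bs{D}^{(i,s)})| + \|\bs{X}_i\|_\infty\|\bs{\beta}(\bs{T}(i,\bs{D},\bs{A})) - \bs{\beta}(\bs{T}(i,\bs{D}^{(i,s)},\bs{A}))\|_1$ and observe that, by \Cref{a:bounded-exposure-mapping}, for every $s \geq K$ the vectors $\bs{D}$ and $\bs{D}^{(i,s)}$ agree on $\mathcal{N}(i,K;\bs{A})$, forcing $\bs{T}(i,\bs{D},\bs{A}) = \bs{T}(i,\bs{D}^{(i,s)},\bs{A})$ and annihilating the second term. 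Hence the ANI coefficient of $\tilde{Y}$ equals $\theta_{n,s}$ for $s \geq K$ and is bounded otherwise, so \Cref{a:ANI} and \Cref{a:consistency-for-variance-estimator} transfer unchanged to $\tilde{Y}$. With these verified, the consistency result of \citet{leung2022causal} yields $\hat{\sigma}^2_\star(\bs{\beta}_{\mathcal{T}}) = \sigma^2_\star(\bs{\beta}_{\mathcal{T}}) + \tilde{R} + \op(1)$, where $\tilde{R} = n^{-1}\sum_{i=1}^n\sum_{j=1}^n B_{ij}(\tilde{\tau}_i - \tilde{\tau})(\tilde{\tau}_j - \tilde{\tau})$ is the deterministic heterogeneity term of the pseudo treatment effects.

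Finally I would identify $\tilde{R}$ with $R(\bs{\beta}_{\mathcal{T}})$. From $\tilde{\mu}_i(\bs{t}) = \mu_i(\bs{t}) - \bs{X}_i^\top\bs{\beta}(\bs{t})$ one gets $\tilde{\tau}_i = \tau_i - \{\bs{\beta}(\bs{t}) - \bs{\beta}(\bs{t}^\prime)\}^\top\bs{X}_i$, and since the covariates are centered ($\bar{\bs{X}} = 0$) the average is unchanged, $\tilde{\tau} = \tau$; substituting gives $\tilde{\tau}_i - \tilde{\tau} = \tau_i - \tau - \{\bs{\beta}(\bs{t}) - \bs{\beta}(\bs{t}^\prime)\}^\top\bs{X}_i$, which is exactly the factor in $R(\bs{\beta}_{\mathcal{T}})$. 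I expect the main obstacle to be the bookkeeping needed to invoke \citet{leung2022causal} as a black box for the Hajek case: their argument is cleanest for the HT estimator, whereas the Hajek weights $w_{\haj,i}$ carry the random self-normalizer $n^{-1}\sum_{i=1}^n \bs{1}(\bs{T}_i=\tilde{\bs{t}})/\pi_i(\tilde{\bs{t}})$, so I would need to argue that this normalizer converges to $1$ and that the induced perturbation of $\hat{V}_{\haj,i}(\bs{\beta}_{\mathcal{T}})$ is $\op(1)$ in the HAC double sum, using the uniform boundedness of $\tilde{Y}$ together with the neighborhood-growth controls $M_n(b_n,1)=o(n^{1/2})$ and $M_n(b_n,2)=o(n)$ in \Cref{a:consistency-for-variance-estimator}.
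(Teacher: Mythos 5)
Your proposal is correct and follows essentially the same route as the paper: the paper also reduces to the pseudo-outcome $Y_i - \phi(\bs{G}_i)^\top\bs{\beta}$ (stated for general auxiliary variables in Proposition~2 and then specialized to $\phi(\bs{G}_i) = (\bs{X}_i^\top\bs{1}(\bs{T}_i=\bs{t}),\, \bs{X}_i^\top\bs{1}(\bs{T}_i=\bs{t}^\prime))^\top$), verifies that boundedness and the ANI/weak-dependence conditions transfer using Assumption~\ref{a:bounded-exposure-mapping} exactly as you do, invokes Theorem~4 of \citet{leung2022causal} as a black box for the HT case, and handles the Hajek self-normalization by hand precisely as you anticipate---establishing $\max_{i}|\hat{V}_{\haj,i}(\bs{\beta})-V_{\haj,i}(\bs{\beta})| = \Op(n^{-1/2})$ together with uniform boundedness and then killing the perturbation in the double sum via $M_n(b_n,1)=o(n^{1/2})$. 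The only cosmetic difference is that you work with the Fisher/Lin specialization from the start, whereas the paper derives Proposition~\ref{prop:bias-of-variance-estimator-Lin-Fisher} as a corollary of the general auxiliary-variable result (Proposition~\ref{prop:bias-of-variance-estimator}).
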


Apparently, in situations where all the $\bs{\beta}(\bs{t})$'s in $\mathcal{T}$ are identical to some $\bs{\beta}$, we have $R(\bs{\beta}) \equiv \frac{1}{n}\sum_{i=1}^n\sum_{j=1}^n B_{ij} (\tau_i-\tau)(\tau_j-\tau)$, i.e., that $R(\bs{\beta})$ stays the same, regardless of the choice of $\bs{\beta}$. We may simplify $R(\bs{\beta})$ as $R$ for concise. Although the bias terms $R(\bs{\beta}_\mt)$ are not necessarily non-negative, they exhibit the form of a HAC variance estimator for unit-level treatment effect. Under further conditions on the data generating process for $(\tau_i,\bs{X}_i)$, we can expect the bias to tend towards a non-negative value; refer to Theorem 4.2 of \cite{leung2019inference} for such conditions. Under SUTVA and $\bs{T}_i \equiv D_i$, if we set $b_n=0$, the bias reduces to 
\[ R(\bs{\beta}_\mt) = \frac{1}{n}\sum_{i=1}^n (\tau_i-\tau-\{\bs{\beta}(\bs{t})-\bs{\beta}(\bs{t}^\prime)\}^\top\bs{X}_i)^2,
\]
which is the well-known bias of the variance estimators of regression-adjusted estimators under completely randomized experiments. Noteworthy, by setting $\bs{\beta}_{\mathcal{T}}$ as $\bs{0}_\cT$, i.e., that all the $\bs{\beta}(\bs{t})$'s are set as zero vectors, $\sigma_\star^2(\bs{0}_\cT) + R(\bs{0}_\cT)$ is the asymptotic limit of the unadjusted variance estimators studied in~\citet{leung2022causal}.

\subsection{Network-dependent regression adjustment with new variance estimators}\label{sec:ndr}
Equipped with the above variance estimators, it is straightforward to define our network-dependent regression adjustment estimator as follows:
\begin{align*}
    &\hat{\tau}_{\star,\NDF} = \hat{\tau}_{\star}(\hat{\bs{\beta}}_{\star,\NDF}), \quad \hat{\bs{\beta}}_{\star,\NDF} := \argmin_{\bs{\beta}} \hat{\sigma}^2_{\star}(\bs{\beta});\\
    & \hat{\tau}_{\star,\NDL} = \hat{\tau}_{\star}(\hat{\bs{\beta}}_{\star,\NDL}),\quad \hat{\bs{\beta}}_{\star,\NDL} = \{\hat{\bs{\beta}}_{\star,\NDL}(\tilde{\bs{t}}),\tilde{\bs{t}}\in \mathcal{T}\} := \argmin_{\bs{\beta}_\mt} \hat{\sigma}^2_{\star}(\bs{\beta}_\mt).
\end{align*}
We use the notation $\NDL$ ($\NDF$) to denote the network-dependent regression that includes (excludes) covariate-exposure-mapping interactions, similar to Lin's (Fisher's) regression. We take $\hat{\sigma}^2_{\star}(\hat{\bs{\beta}}_{\star,\NDF})$ and $\hat{\sigma}^2_{\star}(\hat{\bs{\beta}}_{\star,\NDL})$ as the variance estimator of $n^{1/2}(\hat{\tau}_{\star,\NDF} - \tau)$ and $n^{1/2}(\hat{\tau}_{\star,\NDL} - \tau)$, respectively. We pair them to construct Wald-type confidence intervals for inference. 

Since as mentioned in~\Cref{prop:bias-of-variance-estimator-Lin-Fisher}, although $\hat\sigma_\star(\bs{\beta})$ is not a consistent estimator for $\sigma_\star(\bs{\beta})$, it has an inflation that does not depend on $\bs{\beta}$. This means that under certain conditions that we will discuss below, we have that in light of large sample,
\[
\hat{\bs{\beta}}_{\star, \NDF} = 
\argmin_{\bs{\beta}} \hat{\sigma}^2_{\star}(\bs{\beta}) \approx \argmin_{\bs{\beta}} \sigma^2_{\star}(\bs{\beta})  = \tilde{\bs{\beta}}^\opt_\star,
\]
so that $\hat\tau_{\star, \NDF}$ can achieve the same efficiency as $\hat\tau_\star(\tilde{\bs{\beta}}^\opt_\star)$. 
Regarding $\hat\tau_{\star, \NDL}$, since $R(\bs{\beta}_{\mt})$ depends on the coefficient when we do not restrict all the $\bs{\beta}(\bs{t})$'s to be identical, it is not necessarily guaranteed to achieve the same efficiency as $\hat\tau(\tilde{\bs{\beta}}^\opt_{\star,\mathcal{T}})$. Nevertheless, as we will show later, under certain regularity conditions, we can still have in the large sample limit, $\hat{\tau}_{\star,\NDL}$ is asymptotically normal. Moreover, its estimated variance is equal to 
\[
\hat{\sigma}_\star^2(\hat{\bs{\beta}}_{\star, \NDL}) = \min_{\bs{\beta}_\cT}\hat{\sigma}_\star^2(\bs{\beta}_\cT) \approx \min_{\bs{\beta}_\cT}\sigma_\star^2(\bs{\beta}_\cT) + R(\bs{\beta}_\cT) \le \sigma_\star^2(\bs{0}_\cT) + R(\bs{0}_\cT),
\]
so that $\hat\tau_{\star, \NDL}$ yields a confidence interval asymptotically shorter than the unadjusted estimators proposed by~\citet{leung2022causal}.


In the following, we provide theoretical guarantees of our network-dependent regression adjustment estimators. To study their asymptotic properties, we introduce \Cref{a:for-CLT-of-estimated-optimal-Fisher-Lin} and \ref{a:bias-same-order-with-estimator-Fisher-Lin}.  Let $\mathcal{H}_n(s,m)=\{(i,j,k,l)\in\mathcal{N}_n^4:k\in\mathcal{N}(i,m;\bs{A}),l\in\mathcal{N}(j,m;\bs{A}),\ell_{\bs{A}}\big(\{i,k\},\{j,l\}\big)=s\big\}$. Let $ \tilde{\bs{\beta}}_{\star,\NDL} := \argmin_{\bs{\beta}_\mt} {\sigma}_{\star}^2 (\bs{\beta}_\mt) + R(\bs{\beta}_\mt)$, which is the asymptotic limit of $\hat{\bs{\beta}}_{\star,\NDL}$ as we will show in the proof of \Cref{thm:estimator-with-optimal-estimated-precesion-Lin-Fisher}. 
\begin{assumption}
\label{a:for-CLT-of-estimated-optimal-Fisher-Lin}
Fix $\star \in \{\HT,\haj\}$. 
(i) 
Limit inferior of the smallest eigenvalues of the Hessian matrix of  ${\sigma}_{\star}^2(\bs{\beta}) + R(\bs{\beta})$ (with respect to $\bs{\beta}$) at $\tilde{\bs{\beta}}_{\star}^\opt$ are greater than zero; Limit inferior of the smallest eigenvalues of the Hessian matrix of  ${\sigma}_{\star}^2(\bs{\beta}_{\mt}) + R(\bs{\beta}_{\mt})$ (with respect to $\bs{\beta}_{\mt}$) at $\tilde{\bs{\beta}}_{\star,\NDL}$ is greater than zero.
(ii)
For $\dagger \in \{\tilde{\bs{\beta}}_{\star}^\opt, \tilde{\bs{\beta}}_{\star,\NDL}\}$, $\liminf_{n\rightarrow\infty}{\sigma}_{\star}^2 (\dagger)>0$
(iii) There exist $\epsilon>0$ and a sequence of positive constants $\left\{m_n\right\}_{n \in \mathbb{N}}$ such that $m_n \rightarrow \infty$, we have
$$
\max \left\{ \frac{1}{n^2} \sum_{s=0}^n\left|\mathcal{H}_n\left(s, m_n\right)\right| \tilde{\theta}_{n, s}^{1-\epsilon},  n^{-1 / 2} M_n\left(m_n, 2\right),  n^{3 / 2} \tilde{\theta}_{n, m_n}^{1-\epsilon}\right\} \rightarrow 0.
$$
\end{assumption}

\Cref{a:for-CLT-of-estimated-optimal-Fisher-Lin} (i) and (ii) are standard in the literature of regression adjustment. \Cref{a:for-CLT-of-estimated-optimal-Fisher-Lin} (i) reduces to the nonsingular condition of the finite-population covariance matrix of $\bs{X}_i$ in the SUTVA setting. \Cref{a:for-CLT-of-estimated-optimal-Fisher-Lin} (ii) ensures that the asymptotic variances are non-degenerated.  \Cref{a:for-CLT-of-estimated-optimal-Fisher-Lin} (iii) is from Assumption~6 in \cite{leung2022causal}, which requires the rate of $\tilde{\theta}_{n,s}$ to decay to $0$ fast enough and restricts the network structure. 

\begin{assumption}
    \label{a:bias-same-order-with-estimator-Fisher-Lin}
    The following $3$ terms are of order $O(1)$
    \[
    \frac{1}{n}\sumi\sumj B_{ij} (\tau_i-\tau)(\tau_j-\tau),\quad \frac{1}{n}\sumi\sumj B_{ij} (\tau_i-\tau )\bs{X}_j,\quad \frac{1}{n}\sumi\sumj B_{ij} \bs{X}_i\bs{X}_j^\top.
    \]
\end{assumption}
Under the assumptions of \Cref{prop:bias-of-variance-estimator-Lin-Fisher}, $\sigma^2_{\star}(\bs{\beta}_\mt)$ are $O(1)$. \Cref{a:bias-same-order-with-estimator-Fisher-Lin} ensures that the bias of the variance estimators, $R(\bs{\beta}_\mt)$, must have the same magnitude as the sampling variance for any adjustment coefficient. These estimators also exhibit the form of the HAC variance estimator. Under some regularity conditions on the data generating process of $(\bs{X}_i,\tau_i)$, we can expect \Cref{a:bias-same-order-with-estimator-Fisher-Lin} holds.

\begin{theorem}
\label{thm:estimator-with-optimal-estimated-precesion-Lin-Fisher}
Under Assumptions~\ref{a:overlap}--\ref{a:ANI}, \ref{a:bounded-X}--\ref{a:bias-same-order-with-estimator-Fisher-Lin}, we have (i)  $n^{1/2}(\hat{\tau}_{\star,\NDF}-\tau)/{\sigma}_{\star}(\tilde{\bs{\beta}}_{\star}^\opt)\xrightarrow{\textnormal{d}} \mathcal{N}(0,1)$, $n^{1/2}(\hat{\tau}_{\star,\NDL}-\tau)/{\sigma}_{\star}(\tilde{\bs{\beta}}_{\star,\NDL})\xrightarrow{\textnormal{d}} \mathcal{N}(0,1)$; and (ii) $\hat{\sigma}_\star^2(\hat{\bs{\beta}}_{\star,\NDF})= {\sigma}_{\star}^2(\tilde{\bs{\beta}}_{\star}^\opt)+R + \op(1)$, $\hat{\sigma}_{\star}^2(\hat{\bs{\beta}}_{\star,\NDL})= {\sigma}_{\star}^2(\tilde{\bs{\beta}}_{\star,\NDL})+R(\tilde{\bs{\beta}}_{\star,\NDL})+\op(1)$. 
\end{theorem}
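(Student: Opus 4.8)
The plan is to exploit the fact that the map $\bs{\beta}_\cT \mapsto \hat{\tau}_\star(\bs{\beta}_\cT)$ is \emph{affine}, so that for any target coefficient $\dagger$,
\[
n^{1/2}\big(\hat{\tau}_\star(\hat{\bs{\beta}}) - \tau\big) = n^{1/2}\big(\hat{\tau}_\star(\dagger) - \tau\big) - (\hat{\bs{\beta}} - \dagger)^\top \bs{g}_n, \quad \bs{g}_n := n^{1/2}\Big(n^{-1}\sum_{i=1}^n \bs{X}_i w_{\star,i}\Big),
\]
with the obvious per-exposure bookkeeping in the $\NDL$ case, where only the blocks of $\bs{g}_n$ at $\bs{t},\bs{t}'$ enter. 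This reduces part (i) to three ingredients: (A) consistency of the estimated coefficient, $\hat{\bs{\beta}}_{\star,\NDF} \cp \tilde{\bs{\beta}}^\opt_\star$ and $\hat{\bs{\beta}}_{\star,\NDL} \cp \tilde{\bs{\beta}}_{\star,\NDL}$; (B) a central limit theorem for the \emph{fixed}-coefficient estimator $\hat{\tau}_\star(\dagger)$; and (C) tightness of the cross term, $\bs{g}_n = \Op(1)$. Given these, the second summand is $\op(1)\cdot\Op(1) = \op(1)$, and Slutsky together with Assumption~\ref{a:for-CLT-of-estimated-optimal-Fisher-Lin}(ii), which keeps $\sigma_\star(\dagger)$ bounded away from $0$, yields the stated normal limits.

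For (B) I would observe that with $\bs{\beta}_\cT$ held fixed, $\hat{\tau}_\star(\bs{\beta}_\cT)$ is exactly the HT (resp.\ Hajek) estimator applied to the residualized potential outcomes $\tilde{Y}_i(\bs{d}) := Y_i(\bs{d}) - \bs{X}_i^\top\bs{\beta}(\bs{T}_i)$. Since $\bs{X}_i$ is nonrandom and bounded (Assumption~\ref{a:bounded-X}), the $\tilde{Y}_i$ inherit Assumptions~\ref{a:bounded-outcome}--\ref{a:ANI} verbatim; in particular $|\tilde{Y}_i(\bs{D}) - \tilde{Y}_i(\bs{D}^{(i,s)})| = |Y_i(\bs{D}) - Y_i(\bs{D}^{(i,s)})|$, so the ANI coefficients $\theta_{n,s}$ are unchanged, and the CLTs of \citet{leung2022causal} (for HT) and \citet{gao2023causal} (for Hajek) apply directly, giving $n^{1/2}(\hat{\tau}_\star(\dagger) - \tau)/\sigma_\star(\dagger) \xrightarrow{\mathrm{d}} \mathcal{N}(0,1)$ under Assumption~\ref{a:for-CLT-of-estimated-optimal-Fisher-Lin}(ii)--(iii). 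The same CLT applied to the centered covariates $\bs{X}_i$ treated as pseudo-outcomes shows each block of $\bs{g}_n$ is asymptotically normal, hence $\Op(1)$, delivering (C).

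The crux is (A). Here I would use that both $\hat{\sigma}^2_\star(\bs{\beta}_\cT)$ and its population surrogate $\sigma^2_\star(\bs{\beta}_\cT) + R(\bs{\beta}_\cT)$ are \emph{quadratic} in $\bs{\beta}_\cT$. Writing $\hat{\sigma}^2_\star(\bs{\beta}_\cT) = \bs{\beta}_\cT^\top \hat{C}\bs{\beta}_\cT - 2\hat{\bs{b}}^\top\bs{\beta}_\cT + \hat{a}$ and similarly for the limit, Proposition~\ref{prop:bias-of-variance-estimator-Lin-Fisher} evaluated at $\bs{0}$ and at finitely many shifted arguments (enough to identify every entry of the quadratic, linear, and constant parts) gives $\hat{C}\cp C$ and $\hat{\bs{b}}\cp \bs{b}$. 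Assumption~\ref{a:for-CLT-of-estimated-optimal-Fisher-Lin}(i) forces the limiting Hessian $2C$ to have eigenvalues bounded away from $0$, so $\hat{C}$ is invertible with probability tending to one and $\hat{\bs{\beta}} = \hat{C}^{-1}\hat{\bs{b}} \cp C^{-1}\bs{b}$, which is $\tilde{\bs{\beta}}^\opt_\star$ in the $\NDF$ case (there $R$ is constant in $\bs{\beta}$, so the minimizers of $\sigma^2_\star$ and of $\sigma^2_\star + R$ coincide) and $\tilde{\bs{\beta}}_{\star,\NDL}$ in the $\NDL$ case (by definition the minimizer of $\sigma^2_\star + R$). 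The main obstacle is precisely this argmin step: the limiting objects $\tilde{\bs{\beta}}^\opt_\star,\tilde{\bs{\beta}}_{\star,\NDL},C,\bs{b}$ themselves drift with $n$ under the finite-population, triangular-array asymptotics, so the pointwise consistency from Proposition~\ref{prop:bias-of-variance-estimator-Lin-Fisher} must be promoted to control uniform in $n$; the $\liminf$ eigenvalue bound in Assumption~\ref{a:for-CLT-of-estimated-optimal-Fisher-Lin}(i) and the $O(1)$ bounds in Assumption~\ref{a:bias-same-order-with-estimator-Fisher-Lin} are exactly what keep the inversion $\hat{C}^{-1}$ stable enough to carry this through.

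Finally, part (ii) follows by continuity. Using the quadratic representation and the coefficient convergence $\hat{C}\cp C$, $\hat{\bs{b}}\cp \bs{b}$, $\hat{a}\cp a$ established in (A), together with $\hat{\bs{\beta}} = \Op(1)$, I obtain the plug-in identity $\hat{\sigma}^2_\star(\hat{\bs{\beta}}) = \sigma^2_\star(\hat{\bs{\beta}}) + R(\hat{\bs{\beta}}) + \op(1)$; then $\hat{\bs{\beta}} \cp \dagger$ and continuity of the quadratics $\sigma^2_\star(\cdot)$ and $R(\cdot)$ give $\hat{\sigma}^2_\star(\hat{\bs{\beta}}) = \sigma^2_\star(\dagger) + R(\dagger) + \op(1)$. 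This is $\sigma^2_\star(\tilde{\bs{\beta}}^\opt_\star) + R + \op(1)$ for $\NDF$ (constant $R$) and $\sigma^2_\star(\tilde{\bs{\beta}}_{\star,\NDL}) + R(\tilde{\bs{\beta}}_{\star,\NDL}) + \op(1)$ for $\NDL$, as claimed.
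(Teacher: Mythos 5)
Your proposal is correct, and its skeleton matches the paper's: the paper likewise decomposes $\hat{\tau}_\star(\hat{\bs{\beta}})-\tau$ into a fixed-coefficient term plus an affine cross term, shows the cross term is $\op(n^{-1/2})$ using \Cref{lem:order-of-haj-ht-mean} together with the centering $\bar{\bs{X}}=\bs{0}$ (in the paper's general formulation, the condition $\bs{\tau}_{\phi(\bs{G})}=\bs{0}$), applies the CLT of \citet{leung2022causal} to the residualized outcomes, and proves part (ii) by exploiting the quadratic structure; indeed the paper proves a more general statement (\Cref{thm:estimator-with-optimal-estimated-precesion-SM}) and specializes it to $\phi(\bs{G}_i)\equiv(\bs{X}_i^\top\bs{1}(\bs{T}_i=\bs{t}),\bs{X}_i^\top\bs{1}(\bs{T}_i=\bs{t}^\prime))^\top$. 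Where you genuinely diverge is your step (A): the paper writes $\hat{\bs{\beta}}_{\star,\ND}$ in closed form as a ratio of matrix-valued HAC statistics, establishes $\psi$-dependence of the stacked vector $(V_{\star,i},\bs{V}_{\star,\phi(\bs{G}),i}^\top)^\top$ (\Cref{lem:W-i-psi-dependent}), and invokes the HAC law of large numbers of \citet{kojevnikov2021limit} entrywise to get convergence of those matrices to $\bs{H}_\star$ and $\bs{L}_\star$; you instead recover convergence of the quadratic, linear, and constant coefficients of $\hat{\sigma}^2_\star(\cdot)$ by polarization, evaluating the scalar statement of Proposition~\ref{prop:bias-of-variance-estimator-Lin-Fisher} at finitely many fixed arguments. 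Your route is more economical---it recycles an already-stated scalar result and requires no new dependence lemma---whereas the paper's route produces the explicit limit matrices $\bs{H}_\star,\bs{L}_\star$, which it then reuses to bound $\tilde{\bs{\beta}}_{\star,\ND}=O(1)$ and to carry out part (ii). A second, smaller difference: for the Hajek fixed-coefficient CLT you cite \citet{gao2023causal} directly (legitimate, since with $\bar{\bs{X}}=\bs{0}$ the residualized Hajek estimator still targets $\tau$), whereas the paper re-derives it from \citet{leung2022causal} via an explicit ratio decomposition. Two spots to tighten: the identity $|\tilde{Y}_i(\bs{D})-\tilde{Y}_i(\bs{D}^{(i,s)})|=|Y_i(\bs{D})-Y_i(\bs{D}^{(i,s)})|$ holds exactly only for $s\geq K$, since for smaller $s$ the exposure indicator can flip and contribute an extra bounded term (harmless, as $\tilde{\theta}_{n,s}=1$ for $s\leq 2\max\{K,1\}$); and the stability of $\hat{C}^{-1}\hat{\bs{b}}$ needs, beyond Assumption~\ref{a:bias-same-order-with-estimator-Fisher-Lin}, the $O(1)$ boundedness of the variance/covariance parts of the limiting coefficients, which follows from \Cref{lem:order-of-haj-ht-mean}.
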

 
\Cref{thm:estimator-with-optimal-estimated-precesion-Lin-Fisher}(i) demonstrate the asymptotic normality and asymptotic variance of network-dependent regression-based estimator. As discussed previously, $\hat{\tau}_{\star,\NDF}$ has the same asymptotic variance as $\hat{\tau}(\tilde{\bs{\beta}}^\opt_\star)$, i.e., that it is more efficient than without using regression adjustment at all; whilst $\hat{\tau}_{\star, \NDL}$, though still asymptotically normal, can have a variance larger than $\sigma_\star^2(\tilde{\bs{\beta}}^\opt_{\star, \mathcal{T}})$ and probably also greater than without any adjustments at all. 

\section{Regression adjustment with general auxiliary variables}\label{general}

In~\Cref{ND}, we have introduced a new regression adjustment estimator for experiments with interference. However, it still has two open questions. First, $\hat{\tau}_{\star,\NDL}$ can be less efficient than without adjustment; second, it does not effectively incorporate other pretreatment information for adjustment, such as the network topology.

In this section, we introduce general auxiliary variables, which include $\bs{X}_i$ as a special case; and a normalizing procedure, which ensures that the network dependent regression adjusted estimators are provably more efficient than unadjusted estimators.

\subsection{General auxiliary variables: definition and examples}


In SUTVA model, we perform regression adjustments on $\bs{X}_i$ to address imbalances and enhance the interpretability. However, when interference exists, 
merely adjusting for the imbalances in $\bs{X}_i$ may be insufficient. Empirical evidence suggests that the outcome of one unit may be influenced by the characteristics and treatments of its neighbors within the network. Such influence may result in an accumulation of imbalances in the observed outcomes. To develop a regression adjustment that adjusts for such imbalances, we define auxiliary variables $\bs{G}_i$ which extends $\bs{X}_i$, including a broad class of variables that may depend on $\bs{D}$ and $\bs{A}$.
 Let $\mathcal{X}_n$ be the value space of the data matrix $\bs{X}$, define auxiliary mapping as a prespecified function $\bs{G}: \mathcal{N}_n \times \mathcal{X}_n \times \{0,1\}^n\times \mathcal{A}_n\rightarrow \mathbb{R}^Q$ where $Q$ is the dimension of the projection space. With a slight abuse of notation, we let $\bs{G}_i = \bs{G}(i,\bs{X},\bs{D},\bs{A})$ be the auxiliary vector of $i$ and let $\bs{G}_i(\bs{d}) = \bs{G}(i,\bs{X},\bs{d},\bs{A})$ be its potential outcome.

Similar to the way we use $\bs{X}_i$ in the no-interference setting, our goal is to use $\bs{G}_i$ to improve the estimation precision of the average treatment effect. $\bs{G}_i$ may come from any working model, which is not necessarily correctly specified. We emphasize that, although our theory does not require $\bs{G}_i$ to be correctly specified, a well-chosen set of $\bs{G}_i$ that captures the nature of interference can significantly enhance both the precision and interpretability of the results. We end this subsection by giving two concrete examples of $\bs{G}_i$ that can be analyzed by our method.

\begin{example}[Fisher's regression and Lin's regression]
  Fisher's regression uses $\bs{G}_i\equiv\bs{X}_i$, while Lin's regression used $\bs{X}_i$ fully interacted with the exposure mapping indicator, leading to $\bs{G}_i \equiv \bs{X}_i\otimes\bs{z}_i$, where $\bs{z}_i = (\bs{1}(\bs{T}_i=\tilde{\bs{t}});\tilde{\bs{t}}\in \mathcal{T})$. Lin's regression is conceptually more reasonable if there is heterogeneity under different exposure mapping values.
\end{example}

As a by-product of the unified definition for auxiliary variables, we can view Lin's regression and Fisher's regression as the same regression but with different auxiliary variables. 

\begin{example}
\label{example:auxiliary-variable-linear-in-means}
 $\sum_{j=1}^n A_{ij}>0$ for $i \in\mathcal{N}_n$. Consider $Y_i$ generated by the following model:
\[
V_i =\alpha_0+\alpha_1\frac{\sum_{j=1}^n A_{ij}Y_j}{\sum_{j=1}^n A_{ij}}+\alpha_2 \frac{\sum_{j=1}^n A_{ij}D_j}{\sum_{j=1}^n A_{ij}}+\alpha_3 D_i+{\alpha}_4^\top \bs{X}_i+ \varepsilon_i,\quad\text{with}\quad Y_i = f(V_i),
\]
where $\alpha_j$, $j\in \{0,1,2,3\}$ are fixed scalars, $\alpha_1<1$ and ${\alpha}_4$ are fixed vector;  $\varepsilon_i$'s are random noise. For the linear-in-means model  $f(t) = t$, while for the nonlinear contagion model, $f(t) = \bs{1}(t \geq 0)$. Under this model, we can specify $\bs{G}_i \equiv (1,D_i, \sum_{j=1}^n A_{ij}D_j/\sum_{j=1}^n A_{ij},\bs{X}_i^\top)^\top$. Let $\bs{O} = (A_{ij}/\sum_{l=1}^n A_{il})_{i,j\in \mathcal{N}_n}$ and $\bs{\varepsilon} = (\varepsilon_1,\ldots,\varepsilon_n)^\top$. Under the linear-in-means model, we have $\bs{Y} = \alpha_0(\bs{I}_n-\alpha_1\bs{O})^{-1}\bs{1}_n+\alpha_2(\bs{I}_n-\alpha_1\bs{O})^{-1}\bs{O}\bs{D}+\alpha_3(\bs{I}_n-\alpha_1\bs{O})^{-1}\bs{D}+\alpha_4(\bs{I}_n-\alpha_1\bs{O})^{-1}\bs{X}+(\bs{I}_n-\alpha_1\bs{O})^{-1}\bs{\varepsilon}$. Since $(\bs{I}_n-\alpha_1\bs{O})^{-1} = \bs{I}_n + \sum_{k=1}^\infty \alpha_1^k\bs{O}^k$, we can include higher order polynomials of $\bs{O}$ such as $(\bs{O}^k\bs{D},\bs{O}^k\bs{X})$, $k\geq 2$, in set of the auxiliary variables to incorporate the higher order interference. 
\end{example}


\subsection{Towards more efficient treatment effect estimation via a new normalizing function}\label{sec:nddnorm}

Armed with the new $\bs{G}$, we renew the class of linear adjusted estimators by 
\[
\hat{\tau}_{\star}\big(\bs{\beta}\big) := n^{-1}\sum_{i=1}^n\big(Y_i-\phi(\bs{G}_i)^\top \bs{\beta}\big)w_{\star,i},\]
where $\phi$ is some normalizing procedure. Apparently, this represents a more general framework where Lin's and Fisher's estimators can be viewed as special cases. Specifically, $\hat{\tau}_\Fisher$ falls into this category by setting $\bs{G}_i$ as $\bs{X}_i$, $\bs{\beta}$ as $\hat{\bs{\beta}}_\Fisher$ and $\phi(\bs{G}_i)$ as $\bs{X}_i - \bar{\bs{X}}$; $\hat{\tau}_\Lin$ falls into this category by setting $\bs{G}_i$ as $\bs{X}_i\otimes \bs{z}_i$ where $\bs{z}_i := (\bs{1}(\bs{T}_i = \tilde{\bs{t}}),\tilde{\bs{t}}\in\mathcal{T})$, $\bs{\beta}$ as $\hat{\bs{\beta}}_{\Lin, \cT}$, and $\phi(\bs{G}_i)$ as $(\bs{X}_i - \bar{\bs{X}}) \otimes \bs{z}_i$. As discussed previously, under such choice of $\phi$, the resulting estimator is not always guaranteed to be more efficient than without adjustment. Our goal then, is to derive a new $\phi$ that guarantees both consistency and efficiency improvement of the point estimator, regardless of the choice of general auxiliary variables.

We consider bounded $\bs{G}_i$ similar to \Cref{a:bounded-X} and restrict the spillover from other units on $\bs{G}_i$. Such $\bs{G}_i$ is formalized by \Cref{a:bounded-G-dc}.
\begin{assumption}
\label{a:bounded-G-dc}
 There exists some constant $c_{\bs{G}}>0$, such that, $\|\bs{G}_i(\bs{d})\|_\infty \leq c_{\bs{G}}$ for all $i\in \mathcal{N}_n$ and $\bs{d}\in \{0,1\}^n$. Moreover, $\max_{i\in\mathcal{N}_n}\mathbb{E}\|\bs{G}_i(\bs{D})-\bs{G}_i(\bs{D}^{(i,s)})\|_{\infty} \leq c_{\bs{G}} \theta_{n,s}$.
\end{assumption}

Just as in~\Cref{sec:varest}, we define $\sigma_\star^2(\bs{\beta})=\Var(n^{-1/2} \sum_{i=1}^n V_{\star, i}(\bs{\beta}))$ as the asymptotic variance of $\hat{\tau}_{\star}\big(\bs{\beta}\big)$ and $V_{\star, i}(\bs{\beta})$ is the same as $V_{\star,i}$ but with $Y_i$ replaced by $Y_i-\phi(\bs{G}_i)^\top\bs{\beta}$. Since the definitions in~\Cref{sec:varest} can be viewed as a special case with $\phi(\bs{G}_i) = (\bs{X}_i-\bar{\bs{X}})\otimes \bs{z}_i$ under the framework of auxiliary variable, there is no abuse of notation. Analogously, we define the optimal adjustment coefficient $\tilde{\bs{\beta}}_{\star}^\opt := \argmin_{\bs{\beta}\in \mathbf{R}^Q}\sigma_\star^2(\bs{\beta})$.  We propose to estimate $\bs{\beta}$ via minimizing the variance estimator of $\hat{\tau}_{\star}\big(\bs{\beta}\big)$:
 \begin{align*}
\hat{\sigma}_{\star}^2(\bs{\beta}) := \frac{1}{n}\sum_{i=1}^n \sum_{j=1}^n  B_{ij} \hat{V}_{\star,i}(\bs{\beta})\hat{V}_{\star,j}(\bs{\beta})
 \end{align*}   
where $\hat{V}_{\star,i}(\bs{\beta})$ is an empirical approximation of $V_{\star, i}(\bs{\beta})$ with ${\mu}(\bs{T}_i)$ replaced with $\hat{\mu}_{\star}(\bs{T}_i)$ and $\bs{\mu}_{\phi(\bs{G})}(\bs{T}_i)$ replaced with $\hat{\bs{\mu}}_{\star,\phi(\bs{G})}(\bs{T}_i)$. Therefore, there is no abuse of notation. We define the network dependent estimator accordingly:
\[
 \hat{\tau}_{\star,\ND}:= \hat{\tau}_{\star}(\hat{\bs{\beta}}_{\star,\ND}), \quad \hat{\bs{\beta}}_{\star,\ND}= \argmin_{\bs{\beta}\in \mathbf{R}^Q}  \hat{\sigma}_{\star}^2(\bs{\beta}).
\]

However, as mentioned in~\Cref{thm:estimator-with-optimal-estimated-precesion-Lin-Fisher}, due to the dependence of variance inflation on the regression coefficient $\bs{\beta}$, such an estimate is not always guaranteed to result in a variance reduction. This fact is reiterated by the following parallel result of \Cref{prop:bias-of-variance-estimator-Lin-Fisher} on the bias of the variance estimator. 

\begin{proposition}
\label{prop:bias-of-variance-estimator}
Suppose we are under \Cref{a:bounded-G-dc}, but with $\bs{G}_i$ replaced by $\phi(\bs{G}_i)$. Suppose further Assumption~\ref{a:overlap}--\ref{a:ANI}, \ref{a:consistency-for-variance-estimator}, $\lim_{n\rightarrow \infty} b_n \rightarrow \infty$, then we have for any $\star \in \{\HT,\haj\}$, $\hat{\sigma}^2_{\star}(\bs{\beta})={\sigma}^2_{\star}(\bs{\beta})+R(\bs{\beta})+\op(1)$, where
\[
R(\bs{\beta}) = \frac{1}{n}\sum_{i=1}^n\sum_{j=1}^nB_{ij} \{\tau_i-\tau -\bs{\beta}^\top(\bs{\tau}_{\phi(\bs{G}),i}-\bs{\tau}_{\phi(\bs{G})})\}\{\tau_j-\tau -\bs{\beta}^\top(\bs{\tau}_{\phi(\bs{G}),j}-\bs{\tau}_{\phi(\bs{G})})\},
\]
and $\bs{\tau}_{\phi(\bs{G})} = n^{-1}\sum_{i=1}^n \bs{\tau}_{\phi(\bs{G}),i}$
\end{proposition}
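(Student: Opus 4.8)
The plan is to reduce \Cref{prop:bias-of-variance-estimator} to the already-established \Cref{prop:bias-of-variance-estimator-Lin-Fisher} by treating the regression-adjusted residual as a new, derived outcome. Fix $\bs{\beta}\in\mathbb{R}^Q$ and define the pseudo-outcome $\tilde{Y}_i := Y_i - \phi(\bs{G}_i)^\top\bs{\beta}$ together with its potential-outcome version $\tilde{Y}_i(\bs{d}) := Y_i(\bs{d}) - \phi(\bs{G}_i(\bs{d}))^\top\bs{\beta}$. The key observation is that $\hat{V}_{\star,i}(\bs{\beta})$ is exactly the empirical influence function of the \emph{unadjusted} estimator built from $\tilde{Y}_i$: since the Hajek and HT operators are linear in the outcome argument and share common weights, the estimated center $\hat{\mu}_\star(\bs{T}_i) - \hat{\bs{\mu}}_{\star,\phi(\bs{G})}(\bs{T}_i)^\top\bs{\beta}$ equals the Hajek/HT estimate of $\tilde{\mu}(\bs{T}_i) := \mu(\bs{T}_i) - \bs{\mu}_{\phi(\bs{G})}(\bs{T}_i)^\top\bs{\beta}$. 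Consequently $\hat{\sigma}_\star^2(\bs{\beta})$ equals the unadjusted HAC variance estimator applied to $\tilde{Y}$, so it suffices to invoke \Cref{prop:bias-of-variance-estimator-Lin-Fisher} in the special case $\bs{\beta}_\mt = \bs{0}_\cT$ — the unadjusted limit of~\citet{leung2022causal} — with $Y$ replaced throughout by $\tilde{Y}$.

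The second step is to verify that $\tilde{Y}$ inherits the hypotheses of that proposition. Boundedness (\Cref{a:bounded-outcome}) transfers immediately since $|\tilde{Y}_i(\bs{d})| \le c_Y + \|\bs{\beta}\|_1 c_{\bs{G}}$ by \Cref{a:bounded-outcome} for $Y$ and the sup-norm bound on $\phi(\bs{G}_i(\bs{d}))$ in \Cref{a:bounded-G-dc}. For approximate neighborhood interference (\Cref{a:ANI}), the triangle inequality gives
\[
\mathbb{E}\big|\tilde{Y}_i(\bs{D}) - \tilde{Y}_i(\bs{D}^{(i,s)})\big| \le \theta_{n,s} + \|\bs{\beta}\|_1\, \mathbb{E}\big\|\phi(\bs{G}_i(\bs{D})) - \phi(\bs{G}_i(\bs{D}^{(i,s)}))\big\|_\infty \le \big(1 + \|\bs{\beta}\|_1 c_{\bs{G}}\big)\theta_{n,s},
\]
so the ANI coefficient of $\tilde{Y}$ is a constant multiple of $\theta_{n,s}$. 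As \Cref{a:overlap}, \Cref{a:bounded-exposure-mapping} and \Cref{a:consistency-for-variance-estimator} concern only $\pi_i$, $\bs{T}_i$, the network $\bs{A}$, and the derived rate $\tilde{\theta}_{n,s}$ (itself built from $\theta_{n,s}$), rescaling by a constant preserves every rate requirement — the $O(1)$, $o(n)$, $o(n^2)$ and $\tilde{\theta}_{n,s}^{1-\epsilon}$ conditions — with the same $\epsilon$. Hence all hypotheses of \Cref{prop:bias-of-variance-estimator-Lin-Fisher} hold for $\tilde{Y}$.

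Finally I would read off the bias term. Applying the unadjusted case of \Cref{prop:bias-of-variance-estimator-Lin-Fisher} to $\tilde{Y}$ gives $\hat{\sigma}_\star^2(\bs{\beta}) = \sigma_\star^2(\bs{\beta}) + R + \op(1)$ with $R = n^{-1}\sum_{i=1}^n\sum_{j=1}^n B_{ij}(\tilde{\tau}_i - \tilde{\tau})(\tilde{\tau}_j - \tilde{\tau})$, where $\tilde{\tau}_i := \tilde{\mu}_i(\bs{t}) - \tilde{\mu}_i(\bs{t}')$ is the exposure contrast of $\tilde{Y}_i$ and $\tilde{\tau}$ its finite-population average. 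Because the conditional expectation defining $\mu_i(\bs{t})$ is linear in the outcome, $\tilde{\tau}_i = \tau_i - \bs{\beta}^\top\bs{\tau}_{\phi(\bs{G}),i}$ and $\tilde{\tau} = \tau - \bs{\beta}^\top\bs{\tau}_{\phi(\bs{G})}$, so $\tilde{\tau}_i - \tilde{\tau} = \tau_i - \tau - \bs{\beta}^\top(\bs{\tau}_{\phi(\bs{G}),i} - \bs{\tau}_{\phi(\bs{G})})$, which reproduces the claimed $R(\bs{\beta})$.

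The genuine obstacle is the second step rather than the algebra: one must confirm that passing from $Y$ to the $\bs{D}$-dependent residual $\tilde{Y}$ does not degrade the network-dependency rates. This is precisely why \Cref{a:bounded-G-dc} is imposed with $\bs{G}$ replaced by $\phi(\bs{G})$ and, crucially, ties the fluctuation of $\phi(\bs{G}_i)$ to the \emph{same} $\theta_{n,s}$ governing $Y$; without this alignment the pseudo-outcome could carry a slower-decaying ANI coefficient and \Cref{prop:bias-of-variance-estimator-Lin-Fisher} would fail to transfer.
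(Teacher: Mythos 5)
Your core reduction --- defining the pseudo-outcome $\tilde{Y}_i = Y_i - \phi(\bs{G}_i)^\top\bs{\beta}$, checking that it inherits \Cref{a:bounded-outcome} and \Cref{a:ANI} with ANI coefficient $(1+\|\bs{\beta}\|_1 c_{\bs{G}})\theta_{n,s}$, noting that a constant rescaling preserves every rate condition in \Cref{a:consistency-for-variance-estimator}, and reading off $R(\bs{\beta})$ from $\tilde{\tau}_i - \tilde{\tau} = \tau_i - \tau - \bs{\beta}^\top(\bs{\tau}_{\phi(\bs{G}),i} - \bs{\tau}_{\phi(\bs{G})})$ --- is exactly the paper's strategy, and for $\star = \HT$ your argument coincides with the paper's: it applies Theorem 4 of \citet{leung2022causal} to the pseudo-outcome, which is legitimate because Leung's HAC result is stated for the HT estimator, plug-in centering included.

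The genuine gap is the Hajek case. Your invocation of \Cref{prop:bias-of-variance-estimator-Lin-Fisher} at $\bs{\beta}_\mt = \bs{0}_\cT$, glossed as ``the unadjusted limit of \citet{leung2022causal},'' fails on two counts. First, it is circular within this paper: \Cref{prop:bias-of-variance-estimator-Lin-Fisher} has no independent proof --- the paper proves it as a corollary of \Cref{prop:bias-of-variance-estimator} (the statement you are proving) by taking $\phi(\bs{G}_i) \equiv (\bs{X}_i^\top\bs{1}(\bs{T}_i=\bs{t}),\bs{X}_i^\top\bs{1}(\bs{T}_i=\bs{t}^\prime))^\top$. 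Second, the fallback attribution to Leung is wrong for $\star = \haj$: Leung's theorem covers only the HT variance estimator, whereas $\hat{V}_{\haj,i}$ carries the estimated means $\hat{\mu}_{\haj}(\tilde{\bs{t}})$ and $\hat{\bs{\mu}}_{\haj,\phi(\bs{G})}(\tilde{\bs{t}})$ inside the inverse-propensity-weighted indicator terms, so no off-the-shelf result applies. This is where the paper does its real work: it decomposes $\hat{\sigma}^2_{\haj}(\bs{\beta})$ into the oracle-centered HAC term and cross term (handled by the \emph{proof technique} of Leung's Theorem 4 applied to $V_{\haj,i}(\bs{\beta})$), the bias $R(\bs{\beta})$, and the plug-in error $n^{-1}\sum_{i}\sum_{j}\{\hat{V}_{\haj,i}(\bs{\beta})\hat{V}_{\haj,j}(\bs{\beta}) - V_{\haj,i}(\bs{\beta})V_{\haj,j}(\bs{\beta})\}B_{ij}$. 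Killing the last term requires a rate lemma showing $\hat{\mu}_{\haj}(\tilde{\bs{t}})-\mu(\tilde{\bs{t}}) = \Op(n^{-1/2})$ and $\hat{\bs{\mu}}_{\haj,\phi(\bs{G})}(\tilde{\bs{t}})-\bs{\mu}_{\phi(\bs{G})}(\tilde{\bs{t}}) = \Op(n^{-1/2})$ (which itself uses \Cref{a:consistency-for-variance-estimator}(i) to guarantee $\sum_{s} M_n^\partial(s)\tilde{\theta}_{n,s}=O(1)$), hence $\max_i|\hat{V}_{\haj,i}(\bs{\beta})-V_{\haj,i}(\bs{\beta})| = \Op(n^{-1/2})$, followed by \Cref{a:consistency-for-variance-estimator}(ii), $M_n(b_n,1)=o(n^{1/2})$, to conclude the error is $\Op(n^{-1/2})M_n(b_n,1)=\op(1)$. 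Your linearity observation --- while correct as the algebraic identity that $\hat{\sigma}^2_{\haj}(\bs{\beta})$ equals the unadjusted Hajek HAC estimator of $\tilde{Y}$ --- does not remove this burden; it merely relocates it to an unadjusted Hajek consistency result that you never establish.
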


As shown in the discussions in \Cref{sec:ndr}, the dependence of $R(\bs{\beta})$ on $\bs{\beta}$ is the main obstacle in deriving an optimal adjusted estimator. If we can derive $\phi$ such that $\bs{\tau}_{\phi(\bs{G}),i} = \bs{0}$, for $i\in \mathcal{N}_n$, we can remove the dependence of $R(\bs{\beta})$ on $\bs{\beta}$, yielding an optimal estimator. This idea motivates our new normalizing procedure, $\phi_{0}$. Note that $\bs{\tau}_{\bs{G},i} = \E w_{\HT,i} \bs{G}_{i}$. This is the correlation between $w_{\HT,i}$ and $\bs{G}_i$ at the individual level. We derive $\phi_{0}$ as a procedure to decorrelate $w_{\HT,i}$ and $\bs{G}_i$:
\begin{align}
\label{eq:point-wise-decorrelation}
    \phi_{0} (\bs{G}_{i}) = \bs{G}_i-\bs{\gamma}_i w_{\HT,i},\quad \bs{\gamma}_i =\frac{\E w_{\HT,i}\bs{G}_i}{\E w_{\HT,i}^2 }.
\end{align}
Apparently, $\bs{\tau}_{\phi_0(\bs{G}),i} = \bs{0}$, which satisfies our requirement. When $\bs{G}_i \equiv (\bs{X}_i^\top \bs{1}(\bs{T}_i=\bs{t}), \bs{X}_i^\top \bs{1}(\bs{T}_i=\bs{t}^\prime))^\top$ and $\pi_i(\bs{t}) + \pi_i(\bs{t}^\prime) = 1$, a straightforward calculation shows that $\phi_0(\bs{G}_i) = (\pi_i(\bs{t})\bs{X}_i^\top, \pi_i(\bs{t}^\prime)\bs{X}_i^\top)^\top$. Thus, $\phi_0(\bs{G}_i)$ replaces the exposure mapping indicators with their expected values. However, in general, the normalized variable $\phi_0(\bs{G}_i)$ does not always admit a simple closed-form expression, and we often rely on Monte Carlo methods to compute $\bs{\gamma}_i$ for $\phi_0(\bs{G}_i)$.

We now build the asymptotic normality and optimality of the network-dependent regression adjustment based on $\phi_{0}(\bs{G}_i)$. \Cref{a:for-CLT-of-estimated-optimal}  is parallel to \Cref{a:for-CLT-of-estimated-optimal-Fisher-Lin} under the framework of auxiliary variables. \Cref{a:bias-same-order-with-estimator} is part of \Cref{a:bias-same-order-with-estimator-Fisher-Lin}.
\begin{assumption} 
\label{a:for-CLT-of-estimated-optimal}
For $\star\in\{\HT,\haj\}$, We have
(i) Limit inferior of the smallest eigenvalues of the Hessian matrix of  ${\sigma}^2_{\star}(\bs{\beta})+R(\bs{\beta})$ (with respect to $\bs{\beta}$) at $\tilde{\bs{\beta}}_\star^\opt$ is greater than zero (ii) $\liminf_{n\rightarrow\infty}\sigma_\star^2(\tilde{\bs{\beta}}^\opt_{\star})>0$
(iii) There exist $\epsilon>0$ and a sequence of positive constants $\left\{m_n\right\}_{n \in \mathbb{N}}$ such that $m_n \rightarrow \infty$, we have
$$
\max \left\{ \frac{1}{n^2} \sum_{s=0}^n\left|\mathcal{H}_n\left(s, m_n\right)\right| \tilde{\theta}_{n, s}^{1-\epsilon},  n^{-1 / 2} M_n\left(m_n, 2\right),  n^{3 / 2} \tilde{\theta}_{n, m_n}^{1-\epsilon}\right\} \rightarrow 0.
$$
\end{assumption}

\begin{assumption}
    \label{a:bias-same-order-with-estimator}
 Let   $R = \frac{1}{n}\sumi\sumj B_{ij} (\tau_i-\tau)(\tau_j-\tau)$ and $R = O(1)$.
\end{assumption}

\begin{theorem}
\label{thm:estimator-with-optimal-estimated-precesion}
Under Assumption~\ref{a:overlap}--\ref{a:ANI}, \ref{a:consistency-for-variance-estimator}, \ref{a:bounded-G-dc}--\ref{a:bias-same-order-with-estimator}, with $\phi  \equiv \phi_{0}$,  We have 
(i) $n^{1/2}(\hat{\tau}_{\star,\ND}-\tau)/\sigma_\star(\tilde{\bs{\beta}}_\star^\opt)\xrightarrow{\textnormal{d}} \mathcal{N}(0,1)$ and (ii) $\hat{\sigma}_\star^2(\hat{\bs{\beta}}_{\star,\ND}) - \sigma_\star^2(\tilde{\bs{\beta}}_\star^\opt)- R = \op(1)$. 
\end{theorem}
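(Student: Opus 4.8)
The plan is to exploit the defining property of the normalizing map $\phi_0$, namely that $\bs{\tau}_{\phi_0(\bs{G}),i}=\bs{0}$ for every $i\in\mathcal{N}_n$. By \Cref{prop:bias-of-variance-estimator} applied with $\phi\equiv\phi_0$, this collapses the bias term to the constant $R=\frac{1}{n}\sum_{i}\sum_{j}B_{ij}(\tau_i-\tau)(\tau_j-\tau)$, so that for each fixed $\bs{\beta}$, $\hat\sigma_\star^2(\bs{\beta})=\sigma_\star^2(\bs{\beta})+R+\op(1)$ with an inflation that does not depend on $\bs{\beta}$. Consequently, minimizing the data-driven objective $\hat\sigma_\star^2(\cdot)$ is asymptotically equivalent to minimizing the population variance $\sigma_\star^2(\cdot)$, whose minimizer is $\tilde{\bs{\beta}}_\star^\opt$. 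The whole argument then mirrors the $\NDF$ part of \Cref{thm:estimator-with-optimal-estimated-precesion-Lin-Fisher}, with $\phi_0(\bs{G}_i)$ playing the role of the centered covariate; note also that the decorrelation keeps $\hat\tau_\star(\bs{\beta})$ (asymptotically) unbiased for $\tau$ since $\E[\phi_0(\bs{G}_i)w_{\HT,i}]=\bs{0}$, so the estimand is preserved.

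First I would establish consistency of the estimated coefficient, $\hat{\bs{\beta}}_{\star,\ND}-\tilde{\bs{\beta}}_\star^\opt=\op(1)$. Both $\hat\sigma_\star^2(\bs{\beta})$ and $\sigma_\star^2(\bs{\beta})$ are quadratic in $\bs{\beta}$, since the residuals $\hat V_{\star,i}(\bs{\beta})$ are affine in $\bs{\beta}$; hence \Cref{prop:bias-of-variance-estimator} evaluated at $\bs{0}$ and at finitely many fixed perturbations of $\bs{\beta}$ pins down the coefficients, showing that the Hessian and linear parts of $\hat\sigma_\star^2$ converge in probability to those of $\sigma_\star^2$ (as $n$-indexed objects), while the constants differ by $R+\op(1)$. \Cref{a:for-CLT-of-estimated-optimal}(i) bounds the smallest eigenvalue of the Hessian of $\sigma_\star^2+R=\sigma_\star^2$ away from zero uniformly in $n$, and boundedness of $\bs{G}_i$ (\Cref{a:bounded-G-dc}) bounds it from above; solving the normal equations and inverting the Hessian then yields $\hat{\bs{\beta}}_{\star,\ND}-\tilde{\bs{\beta}}_\star^\opt=\op(1)$, with $\tilde{\bs{\beta}}_\star^\opt$ bounded.

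For part (i) I would use the affinity of $\hat\tau_\star(\bs{\beta})$ in $\bs{\beta}$ to write
\[
n^{1/2}\big(\hat\tau_{\star,\ND}-\tau\big)=n^{1/2}\big(\hat\tau_\star(\tilde{\bs{\beta}}_\star^\opt)-\tau\big)-\Big(n^{-1}\sum_{i}\phi_0(\bs{G}_i)^\top w_{\star,i}\Big)\,n^{1/2}\big(\hat{\bs{\beta}}_{\star,\ND}-\tilde{\bs{\beta}}_\star^\opt\big).
\]
The first term is the base $\star$-estimator applied to the adjusted outcome $\tilde Y_i=Y_i-\phi_0(\bs{G}_i)^\top\tilde{\bs{\beta}}_\star^\opt$, which is bounded and, by the second part of \Cref{a:bounded-G-dc}, inherits the ANI property; its standardized version is asymptotically $\mathcal{N}(0,1)$ by the central limit theorem of \citet{leung2022causal,gao2023causal} with asymptotic variance $\sigma_\star^2(\tilde{\bs{\beta}}_\star^\opt)$. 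For the cross term, the decorrelation makes $A_n:=n^{-1}\sum_i\phi_0(\bs{G}_i)w_{\star,i}$ centered (for $\haj$, up to a lower-order correction from its random normalization), so a variance bound for weighted sums under weak dependence — the same machinery that controls $\sigma_\star^2$, invoking \Cref{a:for-CLT-of-estimated-optimal}(iii) — gives $A_n=\Op(n^{-1/2})$; combined with $\hat{\bs{\beta}}_{\star,\ND}-\tilde{\bs{\beta}}_\star^\opt=\op(1)$ the cross term is $\op(1)$, and Slutsky's theorem finishes part (i).

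Finally, for part (ii) I would chain the quadratic convergence with minimizer consistency: since $\hat\sigma_\star^2(\bs{\beta})-\sigma_\star^2(\bs{\beta})-R$ is a quadratic in $\bs{\beta}$ with $\op(1)$ coefficients, it is $\op(1)$ uniformly on the bounded set containing $\hat{\bs{\beta}}_{\star,\ND}$, giving $\hat\sigma_\star^2(\hat{\bs{\beta}}_{\star,\ND})=\sigma_\star^2(\hat{\bs{\beta}}_{\star,\ND})+R+\op(1)$; and because $\tilde{\bs{\beta}}_\star^\opt$ minimizes the smooth quadratic $\sigma_\star^2$ with bounded Hessian, $\sigma_\star^2(\hat{\bs{\beta}}_{\star,\ND})-\sigma_\star^2(\tilde{\bs{\beta}}_\star^\opt)=\Op(\|\hat{\bs{\beta}}_{\star,\ND}-\tilde{\bs{\beta}}_\star^\opt\|^2)=\op(1)$, yielding the claim. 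I expect the main obstacle to be making both the argmin-consistency and the rate $A_n=\Op(n^{-1/2})$ rigorous in the triangular-array regime, where every population object (including $\tilde{\bs{\beta}}_\star^\opt$, the Hessian, and $\sigma_\star^2$) depends on $n$: one cannot invoke off-the-shelf $M$-estimation continuity results and must instead propagate uniform-in-$n$ bounds through the quadratic structure and the eigenvalue condition of \Cref{a:for-CLT-of-estimated-optimal}(i); the $\haj$ case additionally requires linearizing the random normalization in $w_{\haj,i}$ around the concentration of its denominators, exactly as in the proof of \Cref{thm:estimator-with-optimal-estimated-precesion-Lin-Fisher}.
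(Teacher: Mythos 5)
Your proposal is correct and follows essentially the same route as the paper's proof: exploit $\bs{\tau}_{\phi_0(\bs{G}),i}=\bs{0}$ so that the HAC bias $R(\bs{\beta})$ collapses to the constant $R$ and $\tilde{\bs{\beta}}_{\star,\ND}$ coincides with $\tilde{\bs{\beta}}_{\star}^{\opt}$, prove $\hat{\bs{\beta}}_{\star,\ND}=\tilde{\bs{\beta}}_{\star}^{\opt}+\op(1)$ from the quadratic structure together with the eigenvalue condition in \Cref{a:for-CLT-of-estimated-optimal}(i), kill the cross term $\bigl(n^{-1}\sum_i\phi_0(\bs{G}_i)^\top w_{\star,i}\bigr)\,n^{1/2}(\hat{\bs{\beta}}_{\star,\ND}-\tilde{\bs{\beta}}_{\star}^{\opt})$ using the $\Op(n^{-1/2})$ rate of the auxiliary-variable mean difference (the paper's \Cref{lem:order-of-haj-ht-mean} combined with $\bs{\tau}_{\phi_0(\bs{G})}=\bs{0}$), and obtain the CLT for the leading term by applying \citet{leung2022causal} to the adjusted outcomes $Y_i-\phi_0(\bs{G}_i)^\top\tilde{\bs{\beta}}_{\star}^{\opt}$ (which inherit boundedness and ANI from \Cref{a:bounded-G-dc}), with the Hajek case handled by linearizing the random normalization. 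The only presentational difference is that the paper proves a general result (\Cref{thm:estimator-with-optimal-estimated-precesion-SM}) for any $\phi$ with $\bs{\tau}_{\phi(\bs{G})}=\bs{0}$ by solving the normal equations explicitly and invoking the $\psi$-dependence law of large numbers of \citet{kojevnikov2021limit} on the joint residual vector, whereas you recover the same coefficient convergence by evaluating \Cref{prop:bias-of-variance-estimator} at finitely many fixed $\bs{\beta}$'s—the same machinery repackaged, since that proposition's proof rests on exactly those bounds.
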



\Cref{thm:estimator-with-optimal-estimated-precesion} (ii) shows the asymptotic limit of the variance estimator. \Cref{thm:estimator-with-optimal-estimated-precesion} (i) shows the asymptotic normality of network dependent estimator with asymptotic variance equal to the optimal adjusted estimator, thus demonstrating its optimality. Since $\sigma_\star^2(\tilde{\bs{\beta}}_\star^\opt) = \argmin_{\bs{\beta}\in \mathbf{R}^Q}\sigma_\star^2(\bs{\beta})\leq \sigma_\star^2(\bs{0})$, \Cref{thm:estimator-with-optimal-estimated-precesion} (ii) shows that under the choice of $\phi_{0}$ and $\hat{\bs{\beta}}_{\star,\ND}$, the resulting estimator is more efficient than without using $\bs{G}$ at all. We emphasize that the procedure $\phi_0$ is essential not only for ensuring optimality but also for guaranteeing the asymptotic normality. Since $\tau_{\phi_0(\bs{G}),i} = \bs{0}$, $(i\in \mathcal{N}_n)$, $n^{1/2}(\hat{\tau}_{\star}(\bs{\beta})-\tau)$ with $\phi\equiv\phi_0$ remains asymptotic normal for any $\bs{\beta}$, providing a class of asymptotic normal estimators to choose from. Fisher’s regression-based estimators using $\phi_0(\bs{G}_i)$ falls within the class of $\hat{\tau}_{\star}(\bs{\beta})$, ensuring its asymptotic normality as well. This means we can include $\phi_0(\bs{G}_i)$ in Fisher's regression, just as we do with $\bs{X}_i$. However, while this approach guarantees asymptotic normality, it does not necessarily lead to an improvement in efficiency compared to no regression adjustment—a characteristic also observed in Fisher's regression with $\bs{X}_i$.

\section{Simulation studies and real data analysis }\label{sec:experiment}

In this section, we conduct numerical simulation studies and analyze a real-world experiment to evaluate the finite-sample performance of our network-dependent estimators with general auxiliary variables.

\subsection{Simulation studies}

To facilitate a fair comparison of our network-dependent estimators with established benchmarks in \citet{gao2023causal} and \citet{leung2022causal}, we adopt a similar data generation process as theirs.

\paragraph{Data generation and experimental setting}

The data generation process inherited from~{\cite{gao2023causal} includes the linear and nonlinear settings which is generated as follows:
\begin{equation*}
    \begin{aligned}
    V_i := \alpha_0 + \alpha_1 \frac{\sum_{j=1}^n A_{i j} Y_j}{\sum_{j=1}^n A_{i j}}+\alpha_2 \frac{\sum_{j=1}^n A_{i j} D_j}{\sum_{j=1}^n A_{i j}}+\alpha_3 D_i+ {\alpha_4} {\bs{X}_i}+\varepsilon_i \text{~with~} Y_i = f(V_i).
    \end{aligned}
\end{equation*}
We choose the population size $n=3000$ and $ \prob(D_i=1) = 0.5$ for each node. We specify (i) for the linear-in-means model, $f(t) = t$ with $(\alpha_0, \alpha_1, \alpha_2, \alpha_3, \alpha_4 )=(-1,0.1,1,1,1)$; (ii) for the nonlinear contagion model, $f(t) = \bs{1}(t\geq 0)$ with $(\alpha_0, \alpha_1, \alpha_2$, $\alpha_3, \alpha_4) = (-1,1.5,1,1,1)$. We relegate more details of the generation process to the Supplementary Material.

For the adjacency matrix $\bs{A}$, we first independently generate the two-dimensional coordinate $\bs{\rho}_i$ uniformly in $[0,1]^2$ for each node $i$. For each pair of nodes $(i,j)$, if $||\bs{\rho}_i - \bs{\rho}_j||_2 \leq \sqrt{1.5/ (\pi n)}$, then we add an edge between them. We delete the edgeless nodes to avoid violating Assumption~\ref{a:overlap}. For the noise $\bs{\varepsilon}$, we choose $\varepsilon_i:= v_i + \rho_{i1} -0.5$ to generate the correlation between node noise, where $\rho_{i1}$ is the first coordinate of $\bs{\rho}_i$ and $v_i \overset{\text{i.i.d.}}{\sim}\mathcal{N}(0,1)$. Moreover, we generate each covariate $\bs{X}_i \overset{\text{i.i.d.}}{\sim}\mathcal{N}(0,1)$.

\paragraph{Repeated sampling evaluation}  After we generate the adjacency matrix $\bs{A}$, noise $\bs{\varepsilon}$, and covariates $\bs{X}$ in the first draw, we keep them fixed throughout the simulation study. Then, we sample $\bs{D}$ for $10^4$ times and evaluate the performance of different methods in the $10^4$ random assignments.

We choose the exposure mapping $T_i = \bs{1}(\sum_{j=1}^n A_{ij}D_j > \lfloor \sum_{j=1}^n A_{ij}/2 \rfloor)$, and then compute the corresponding propensity scores $\pi_i(1), \pi_i(0)$ accordingly. We consider the target $\tau = \mu(T_i = 1) - \mu(T_i = 0)$, i.e., the average treatment effect of more than half of the neighbors treated versus the opposite.

 Moreover, we compare these estimators' Oracle standard errors (Oracle SE), empirical absolute bias and 
the empirical average of the estimated standard errors (Estimated SE) in $10^4$ random assignments. Here, the Oracle SE of $\hat{\tau}$ is defined as $\Var(\hat{\tau})^{\frac{1}{2}}$, approximated via the $10^4$ random assignments. On the other hand, the Estimated SE is derived by the HAC estimator in \Cref{sec:varest}. We choose $b_n = 3$ for $B_{ij}$ in our variance estimator. Finally, we report the empirical coverage probabilities of $95\%$ confidence intervals derived by Oracle SE and Estimated SE denoted by ``Oracle coverage probability" and ``Empirical coverage probability", respectively.

\paragraph{Method comparison} 
We report the performance of estimators $\hat{\tau}_{\haj, \NDF}$ and $\hat{\tau}_{\haj, \NDL}$ defined in~\Cref{sec:ndr}, denoted by \NDF, \NDL. We also report $\hat{\tau}_{\haj, \ND}$ defined in~\Cref{sec:nddnorm}, with the general auxiliary variables $\phi_{0} (\bs{G}_{1,i})$ and $\phi_{0} (\bs{G}_{2,i})$, respectively; where motivated by \Cref{example:auxiliary-variable-linear-in-means}, the $\bs{G}_{1,i}$'s and $\bs{G}_{2,i}$'s are defined as
\begin{equation}
\begin{aligned}
\bs{G}_{1} &:= \{\bs{G}_{1,i}\}_{i \in \mathcal{N}_n} = \big\{\big( D_i,~ \sum_{j=1}^n A_{ij}D_j /\sum_{j=1}^n A_{ij}, ~{\bs{X}}_i,~ \sum_{j=1}^n A_{ij}\bs{X}_j /\sum_{j=1}^n A_{ij} \big)^\top \big\}_{i \in \mathcal{N}_n},\\
\bs{G}_2 &:= \{\bs{G}_{2,i}\}_{i \in \mathcal{N}_n} = \{(\bs{G}^\top_{1,i}\bs{1}(T_i = 1),\bs{G}_{1,i}^\top\bs{1}(T_i =0) )^\top \}_{i \in \mathcal{N}_n}.
\end{aligned}\label{syn_def_G}
\end{equation}
We denote these estimators by $\ND$-$\phi_{0}(G_1)$ and $\ND$-$\phi_{0}(G_2)$. To implement $\phi_{0}(\cdot)$, i.e., the normalizing procedure introduced in~\eqref{eq:point-wise-decorrelation}, we conduct $10^5$ random assignments to approximate $\bs{\gamma}_i$.  We also report the performance of the estimators $\text{ND-}{G}_{\text{1}}$ and $\text{ND-}{G}_{\text{2}}$, which corresponds to the case without using normalization procedure~\eqref{eq:point-wise-decorrelation}. Moreover, we report $\text{F-}\phi_0({G}_\text{1}), \text{F-}\phi_0({G}_\text{2})$, which is constructed by substituting the covariates $\bs{X}_i$ in Fisher's regression with $\phi_0(\bs{G}_{1,i})$ and $\phi_0(\bs{G}_{2,i})$, respectively. 
For benchmarks, we report the HT estimator without covariates (HT) and the three methods in \citet{gao2023causal}: (i) Hajek estimator without covariates (Haj), (ii) Fisher's regression-based estimator with covariates $\bs{X}_i$ (F), (iii) Lin's regression-based estimator with interacted covariates $({\bs{X}^\top_i}\bs{1}(T_i = 1), \bs{X}^\top_i\bs{1}(T_i =0) )^\top$ (L). 
\begin{table}[t]
\scalebox{0.60}{
\begin{tabular}{c cccc cc cc cccc}

\hline\hline Outcome model & \multicolumn{12}{c}{ Linear-in-Means } \\
\hline Method & $\text{HT}$ & $\text{Haj}$ & ${\text{F}}$  & $\text{L}_{\text{}}$ & ${\text{F-}\phi_0({G}_1)}$ & $\text{F-}\phi_0({G}_2)$  & $\text{ND-F}$ &$\text{ND-}\phi_0({G}_1)$ &$\text{ND-}{G}_{\text{1}}$  & $\text{ND-L}_{\text{}}$  &  $\text{ND-}\phi_0({G}_2)$ & $\text{ND-}{G}_2$ \\
\hline
\text{Empirical absolute bias}  &0.001 &0.001 &0.000 &0.000 & 0.001 & 0.001 &\textbf{0.000} &\textbf{0.001} &0.895 &\textbf{0.000} &\textbf{0.002} & 0.100\\
\hline Oracle SE &0.295 &0.293 &0.260 &0.260 &0.235& 0.233&\textbf{0.260} &\textbf{0.235} & 0.466&\textbf{0.260} &\textbf{0.236} & 0.822\\
Estimated SE  &0.294 &0.292 &0.259 &0.259 &0.232&0.232 &\textbf{0.259} &\textbf{0.231} & 0.230 &\textbf{0.259} &\textbf{0.233} &0.230\\
 \hline Oracle coverage probability & 0.949& 0.949 &0.950 &0.950 & 0.953& 0.951 & \textbf{0.953} &\textbf{0.949}&0.015& \textbf{0.950} &\textbf{0.949}&0.679\\
 Empirical coverage probability & 0.946& 0.947 &0.948 &0.948 & 0.944& 0.946 & \textbf{0.947} &\textbf{0.945}&0.000& \textbf{0.947} &\textbf{0.941}&0.039\\

\hline\hline Outcome model & \multicolumn{12}{c}{ Nonlinear Contagion } \\
\hline Method & $\text{HT}$ & $\text{Haj}$ & ${\text{F}}$  & $\text{L}_{\text{}}$ & ${\text{F-}\phi_0({G}_1)}$ & $\text{F-}\phi_0({G}_2)$  & $\text{ND-F}$ &$\text{ND-}\phi_0({G}_1)$ &$\text{ND-}{G}_{\text{1}}$  & $\text{ND-L}_{\text{}}$  &  $\text{ND-}\phi_0({G}_2)$ & $\text{ND-}{G}_2$ \\
\hline
\text{Empirical absolute bias} &0.000 &0.000 &0.000 &0.000 &0.000 & 0.000&\textbf{0.000} &\textbf{0.000} &0.176&\textbf{0.000} &\textbf{0.000}&0.044\\
\hline Oracle SE &0.227 &0.142 &0.136 &0.136 &0.131 &0.129 &\textbf{0.136} &\textbf{0.132} &0.271&\textbf{0.136} &\textbf{0.131}&0.461\\
 Estimated SE &0.228 &0.146 &0.140 &0.139 &0.134 &0.133  &\textbf{0.140} &\textbf{0.134} &0.134&\textbf{0.139} &\textbf{0.133}&0.132\\
\hline 
Oracle coverage probability &0.950 &0.947 &0.949 &0.949 &0.947&0.948  & \textbf{0.947} &\textbf{0.949} &0.337&\textbf{0.949} &\textbf{0.949}&0.945\\
 Empirical coverage probability &0.952 &0.959 &0.961 &0.959 &0.957 & 0.961 & \textbf{0.960} &\textbf{0.952} &0.024&\textbf{0.958} &\textbf{0.958}&0.134\\
\hline\hline
\end{tabular} } \caption{Simulation results: network size $n=3000$ ($b_n = 3$). $\tau = 0.934$ for the linear-in-means setting and $\tau = 0.193$ for the nonlinear contagion setting.} \label{tab_1}
\end{table}

\paragraph{Results} We present the result in Table~\ref{tab_1}. First, for the network-dependent estimators, our $\text{ND-F}, \text{ND-L}$ decrease the Oracle SE by $11.86\%$, compared with $\text{HT}, \text{Haj}$. It validates the efficiency improvement of our network-dependent estimators compared with the unadjusted ones. There is no significant improvement in Oracle SE and Estimated SE when comparing the \text{ND-F} with the F. However, in some instances, the precision of F, along with L and ND-L, is even inferior to that of HT and Haj as demonstrated by the counterexample in the Supplementary Material. In contrast, \text{ND-F} ensures an improvement in efficiency compared to the unadjusted estimators, as we mentioned in~\Cref{sec:review-of-Gao-and-Ding-regression}.


Second, equipped with the general auxiliary variables, $\text{ND-}\phi_0({G}_{\text{1}})$ and $\text{ND-}\phi_0({G}_{\text{2}})$ improve the efficiency, compared with \text{ND-F} and $\text{ND-L}$. For instance, in the linear-in-means setting, the improvement of $\text{ND-}\phi_0({G}_{\text{1}})$, $\text{ND-}\phi_0({G}_{\text{2}})$ upon Oracle SE is at least $9.23\%$ compared with $\text{ND-}\text{F}$, $\text{ND-}\text{L}_\text{}$. It validates that our general auxiliary variables $\bs{G}_{1,i}, \bs{G}_{2,i} $, which incorporate the network information, can gain more efficiency for our network-dependent estimator than the traditional covariates $\bs{X}_i$ and $({\bs{X}^\top_i}\bs{1}(T_i = 1), \bs{X}^\top_i\bs{1}(T_i =0) )^\top$. 

Third, results show that empirical absolute bias of $\text{ND-F}$, $\text{ND-L}$, $\text{ND-}\phi_0{({G}_\text{1})}$, $\text{ND-}\phi_0({G}_\text{2})$ is lower than $ 0.002$, which is negligible. In contrast, the estimators $\text{ND-}{G}_\text{1}$, $\text{ND-}{G}_\text{2}$ which lack the normalizing procedure are inconsistent. It demonstrates that the normalizing procedure is necessary.

Noteworthy, there is no significant difference between the performance of $\text{ND-}\phi_0{({G}_\text{1})}$ and $\text{ND-}\phi_0{({G}_\text{2})}$, or \text{ND-F} and $\text{ND-L}$. It is potentially due to the potential outcomes not exhibiting significant heterogeneity between different exposure mapping values. Finally, the empirical coverage probabilities of all these methods are close to $0.95$.


\subsection{Real data analysis}
In this section, we reanalyze the experiment in~\cite{cai2015social} to compare our network-dependent estimators with the current benchmarks~\citep{leung2022causal,gao2023causal}.

\cite{cai2015social} studied the effect of insurance seminars on Chinese farmers’ purchasing intentions within a social network by a randomized experiment. $D_i \in \{0,1\}$, which will be specified later. $Y_i$ is the subsequent purchasing behavior of farmer $i$, with $Y_i=1$ if the farmer $i$ purchased the insurance after the seminars.

To formally define $D_i$, it will be helpful to introduce the design of the experiment. This experiment was composed of two large-scale insurance promotional lectures. There was a three-day delay between these two lectures. Such a gap allowed farmers who attended the first lecture to share the insurance information they acquired with their friends. In each lecture, there were two parallel sessions, one with an intense introduction to the insurance plan, and one with a simpler introduction. In the experiment, the farmers were first randomly assigned into two subgroups, each participating in different lectures; then in each lecture, the farmers were further randomly assigned into two subgroups to participate in different sessions. We use $\text{Delay}_i, \text{Int}_i \in \{0,1\}$ to represent the lecture and session farmer $i$ was assigned to take, respectively. In specific, $\text{Delay}_i = 1$ represents the delayed lecture, and $\text{Int}_i = 1$ represents the intensive session. $ (\text{Delay}_i, \text{Int}_i) \in \{(0,0), (0,1), (1,0), (1,1)\}$, appear $\{1078, 1098,1353, 1373\}$ times in the original dataset, respectively. Following \cite{cai2015social}, our analysis focuses on investigating the impact of an individual’s and their neighbor’s attendance at the intensive session on the individual’s final purchase choices, i.e., the direct and spillover effect of the intensive session. Since the allocation of $\text{Delay}_i \in \{0,1\}$ is independent of $\text{Int}_i \in \{0,1\}$, we condition on $\{\text{Delay}_i\}_{i \in \mathcal{N}_n}$ and then assume that the randomness in this experiment solely comes from $D_i  \equiv \text{Int}_i$, which conforms i.i.d. to a Bernoulli distribution with probability $0.5$. 

We define two exposure mappings for direct effect and spillover effect following~\cite{gao2023causal} and \cite{leung2022causal}: (i) $T_i = \text{Int}_i$ for direct effect and (ii) $T_i = \bs{1}(\sum_{j=1}^n A'_{ij}(1-\text{Delay}_j) \text{Int}_j>0)$ (whether having at least one friend participating in the first lecture’s intensive session) for spillover effect. Here $\bs{A'} = \{A'_{ij}\}_{i,j \in \mathcal{N}_n}$ is the directed social network obtained by questionnaire, where each farmer filled in the five closest friends of theirs. We set $A'_{ij} = 1$ when $j$ is among the top five friends of $i$, otherwise $A'_{ij} = 0$. We construct the undirected adjacency matrix $\bs{A} = \{A_{ij}\}_{i,j\in \mathcal{N}_n}$, where $A_{ij} =  \max \{A'_{ij}, A'_{ji}\} {}$ for the HAC variance estimator (choosing $b_n = 3$) as in~\Cref{sec:varest}.

We restrict the target population as follows: those (i) who attended the second round session ($\text{Delay}_i = 1$), and (ii) whose friends were not all in either the first or the second lecture, i.e., $\sum_{j=1}^n A'_{ij} (1-\text{Delay}_j) \ne 0$ and $\sum_{j=1}^n A'_{ij} \text{Delay}_j \ne 0$.

We conduct $10^4$ repetitions to compute the normalizing procedure~\eqref{eq:point-wise-decorrelation}. From the dataset, we select seven covariates for each farmer: \{household size, rice area, rice income, education, repayment of insurance, comprehension of the session, village affiliation\} and denote them as $\bs{X}$. 
Inspired by the linear regression in~\cite{cai2015social}, we construct $\bs{G}_{1,i} = (\bs{X}^T_i, \sum_{j}{A}'_{ij}{(1-\text{Delay}_j)\text{Int}_j}/\sum_{j}{A}'_{ij}, \sum_{j}{A}'_{ij}{\text{Int}_j}/\sum_{j}{A}'_{ij})^\top$ in the simulation studies ; and then construct $\bs{G}_{2,i} = (\bs{G}_{1,i}^\top\bs{1}(T_i=1),\bs{G}_{1,i}^\top\bs{1}(T_i=0))^\top$. Here, the second and third terms of $\bs{G}_{1,i}$ describe the proportions of the neighborhood nodes assigned to the intensive session in the first lecture, and to the intensive sessions, respectively.

\begin{table}[t]
\scalebox{0.9}{
\begin{tabular}{c cccc cccc}

\hline\hline \multicolumn{9}{c}{Direct effect ($T_i = \text{Int}_i$)} \\
\hline Method & $\text{HT}$ & $\text{Haj}$ & ${\text{F}}$ & $\text{L}_{\text{}}$   & $\text{ND-F}$ &$\text{ND-}\phi_0({G}_1)$ & $\text{ND-L}$  &  $\text{ND-}\phi_0({G}_2)$ \\
\hline $\hat{\tau}$ & 0.0082 & 0.0146 &  0.0170 &0.0168 & \textbf{0.0167 }& \textbf{0.0161} &\textbf{0.0164}&\textbf{0.0175}\\
\hline Estimated SE & 0.0272 & 0.0225 & 0.0218 & 0.0197 & \textbf{0.0211} & \textbf{0.0205} & \textbf{0.0195} &    \textbf{0.0193} \\
\hline\hline \multicolumn{9}{c}{ Spillover effect ($T_i = \sum_{j=1}^n A'_{ij} (1-\text{Delay}_j) \text{Int}_j > 0$ )} \\
\hline Method & $\text{HT}$ & $\text{Haj}$ & ${\text{F}}$ & $\text{L}_{\text{}}$   & $\text{ND-F}$ &$\text{ND-}\phi_0({G}_1)$ & $\text{ND-L}$  &  $\text{ND-}\phi_0({G}_2)$ \\
\hline
$\hat{\tau}$&  0.0381 & 0.0611 & 0.0604 & 0.0581  & \textbf{0.0660} & \textbf{0.0686} & \textbf{0.0592} & \textbf{0.0603}  \\
\hline Estimated SE & 0.0447 & 0.0292 & 0.0270 & 0.0241 & \textbf{0.0258} & \textbf{0.0250} & \textbf{0.0236} & \textbf{0.0233}\\
\hline\hline
\end{tabular} } \caption{Estimates $\hat{\tau}$ and Estimated SE of empirical experiments.} \label{tab_empirical}
\end{table}

We consider the direct effect and spillover effect, respectively. We present $\text{HT}$, $\text{Haj}$, $\text{F}$, $\text{L}$, $\text{ND-F}$, $\text{ND-}\phi_0({G}_1)$, $\text{ND-L}$, $\text{ND-}\phi_0({G}_2)$, which are all inherited from the simulation studies. Given a single randomized experiment in~\citet{cai2015social}, we report the {point estimate} $\hat{\tau}$ and {estimated standard error (Estimated SE)}.

The results are presented in Table~\ref{tab_empirical} and Figure~\ref{fig_empirical}. For the direct effect, the point estimates obtained by these methods are all positive. However, as shown in Figure~\ref{fig_empirical}, they are all insignificant. This suggests that we do not have enough evidence to support the claim that participating in intensive sessions affects farmers' purchasing intention, which aligns with~\cite{cai2015social}. Nevertheless, our $\text{ND-}\phi_0({G}_{\text{1}})$ reduces the Estimated SE by $5.96\%$ compared with $\text{F}$. 

For the spillover effect, except for \text{HT}, the $95\%$ confidence intervals of all the regression-based methods are above $0$; this is also in line with Section 4 of~\cite{cai2015social}, where they claim the spillover effect in this experiment is significantly positive. Similarly, the Estimated SE of our $\text{ND-}\phi_0({G}_{\text{1}})$ is $ 7.41\%$ smaller than that of $\text{F}$~\citep{gao2023causal}.

In sum, regression-based estimators ($\text{F}, \text{L}_{}$ and our network-dependent estimators $\text{ND-F}$, $\text{ND-}\phi_0({G}_1)$, $\text{ND-L}$, $\text{ND-}\phi_0({G}_2)$) all exhibit a smaller Estimated SE than unadjusted HT~\citep{leung2022causal} and Haj~\citep{gao2023causal}; 
and ND-$\phi_0({G}_2)$ and ND-L outperform all other methods in terms of Estimated SE with ND-$\phi_0({G}_2)$ slightly exceeding ND-L.

\begin{figure}[t]
    \centering
\includegraphics[width=0.9\textwidth]{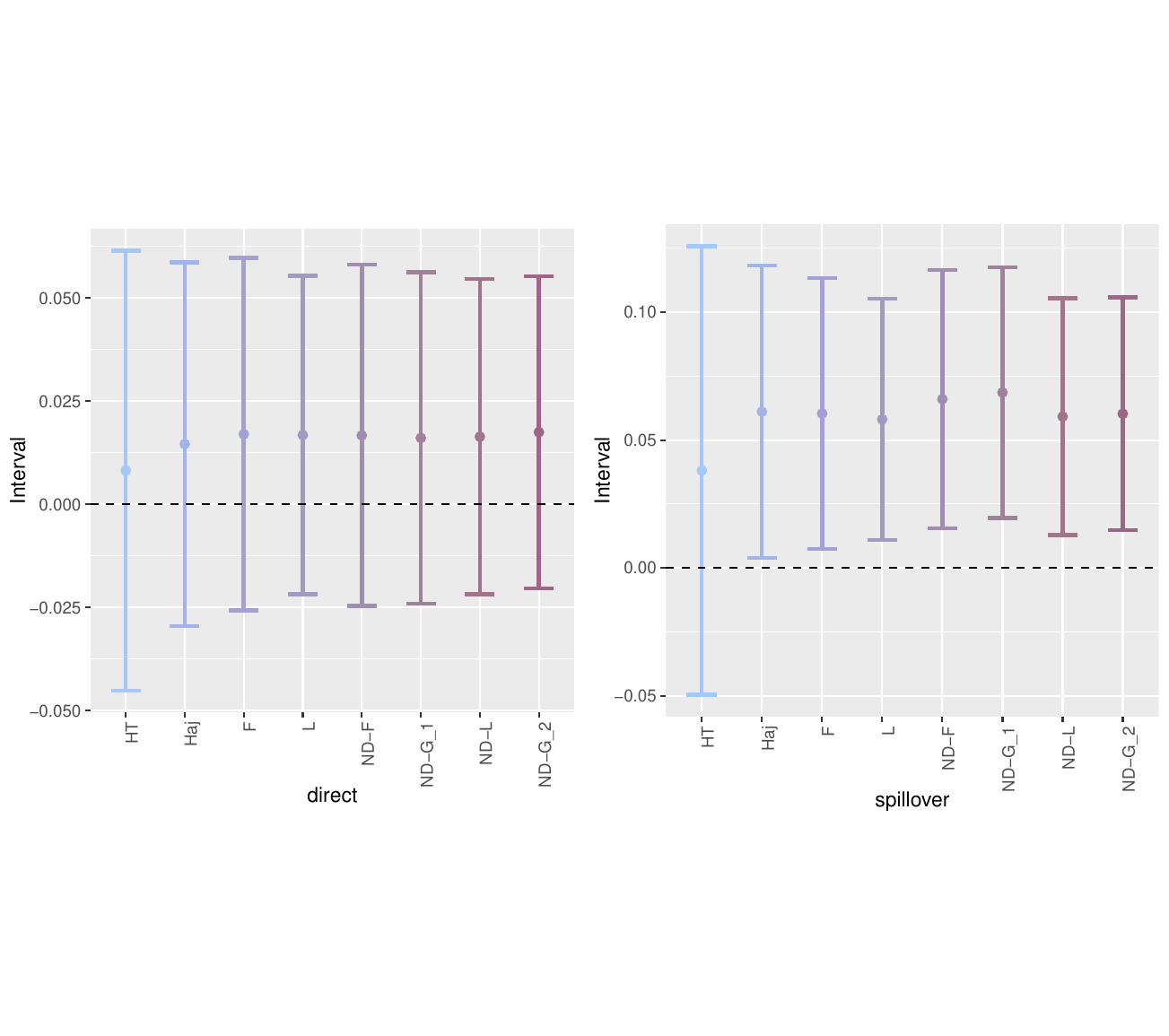}
    \caption{$95\%$ confidence interval of direct effect (left)/ spillover effect (right) via different methods.
    }
    \label{fig_empirical}
\end{figure}

\section{Conclusion}
\label{sec:conclusion}
In this article, we consider regression adjustment with experimental data in the presence of network interference. Unlike the SUTVA setting, we leverage the network information by defining the auxiliary variables to improve the estimation efficiency. Moreover, we obtain estimators that never hurt the asymptotic efficiency compared to the unadjusted estimators by using a network-dependent regression adjustment and a special normalizing function. We use the new regression adjustment method to replicate the results from \cite{cai2015social}, estimating the direct and spillover effects of insurance seminars on Chinese farmers' 
purchasing intentions. Our methods yield shorter confidence intervals compared to the existing methods.

We focus on the regime of a fixed number of covariates. It would be interesting to extend the framework for a diverging number of covariates \citep{lei2021regression,lu2023debiased}.
Moreover, rerandomization is frequently used in the design stage to balance covariates \citep{morgan2012rerandomization,li2018asymptotic,wang2022rerandomization}. With the presence of network interference, we conjecture that balancing the auxiliary variables that incorporate network information can lead to a further precision improvement than the classic rerandomization that leverages only individual-level characteristics.

\bibliographystyle{chicago}
\bibliography{causal}

\begin{thebibliography}{}

\bibitem[\protect\citeauthoryear{Aronow and Samii}{Aronow and Samii}{2017}]{aronow2017estimating}
Aronow, P.~M. and C.~Samii (2017).
\newblock {Estimating average causal effects under general interference, with application to a social network experiment}.
\newblock {\em The Annals of Applied Statistics\/}~{\em 11}, 1912 -- 1947.

\bibitem[\protect\citeauthoryear{Beaman and Dillon}{Beaman and Dillon}{2018}]{BEAMAN2018147}
Beaman, L. and A.~Dillon (2018).
\newblock Diffusion of agricultural information within social networks: Evidence on gender inequalities from mali.
\newblock {\em Journal of Development Economics\/}~{\em 133}, 147--161.

\bibitem[\protect\citeauthoryear{Bloniarz, Liu, Zhang, Sekhon, and Yu}{Bloniarz et~al.}{2016}]{bloniarz2016lasso}
Bloniarz, A., H.~Liu, C.-H. Zhang, J.~S. Sekhon, and B.~Yu (2016).
\newblock Lasso adjustments of treatment effect estimates in randomized experiments.
\newblock {\em Proceedings of the National Academy of Sciences\/}~{\em 113}, 7383--7390.

\bibitem[\protect\citeauthoryear{Bramoull{\'e}, Djebbari, and Fortin}{Bramoull{\'e} et~al.}{2009}]{bramoulle2009identification}
Bramoull{\'e}, Y., H.~Djebbari, and B.~Fortin (2009).
\newblock Identification of peer effects through social networks.
\newblock {\em Journal of econometrics\/}~{\em 150}, 41--55.

\bibitem[\protect\citeauthoryear{Cai, Janvry, and Sadoulet}{Cai et~al.}{2015}]{cai2015social}
Cai, J., A.~D. Janvry, and E.~Sadoulet (2015).
\newblock Social networks and the decision to insure.
\newblock {\em American Economic Journal: Applied Economics\/}~{\em 7}, 81--108.

\bibitem[\protect\citeauthoryear{Carter, Laajaj, and Yang}{Carter et~al.}{2021}]{carter2021subsidies}
Carter, M., R.~Laajaj, and D.~Yang (2021).
\newblock Subsidies and the african green revolution: direct effects and social network spillovers of randomized input subsidies in mozambique.
\newblock {\em American Economic Journal: Applied Economics\/}~{\em 13}, 206--229.

\bibitem[\protect\citeauthoryear{Cochran}{Cochran}{1977}]{cochran1977sampling}
Cochran, W.~G. (1977).
\newblock Sampling techniques.
\newblock {\em Johan Wiley \& Sons Inc\/}.

\bibitem[\protect\citeauthoryear{Ding}{Ding}{2024}]{ding2024first}
Ding, P. (2024).
\newblock {\em A first course in causal inference}.
\newblock CRC Press.

\bibitem[\protect\citeauthoryear{Fisher}{Fisher}{1935}]{fisher1935design}
Fisher, R.~A. (1935).
\newblock {\em The Design of Experiments\/} (1st ed.).
\newblock Oliver and Boyd, Edinburgh.

\bibitem[\protect\citeauthoryear{Freedman}{Freedman}{2008}]{freedman2008regression}
Freedman, D.~A. (2008).
\newblock On regression adjustments to experimental data.
\newblock {\em Advances in Applied Mathematics\/}~{\em 40}, 180--193.

\bibitem[\protect\citeauthoryear{Gao and Ding}{Gao and Ding}{2023}]{gao2023causal}
Gao, M. and P.~Ding (2023).
\newblock Causal inference in network experiments: regression-based analysis and design-based properties.
\newblock {\em arXiv preprint arXiv:2309.07476\/}.

\bibitem[\protect\citeauthoryear{Haushofer and Shapiro}{Haushofer and Shapiro}{2018}]{haushofer2018long}
Haushofer, J. and J.~Shapiro (2018).
\newblock The long-term impact of unconditional cash transfers: experimental evidence from kenya.
\newblock {\em Busara Center for Behavioral Economics, Nairobi, Kenya\/}.

\bibitem[\protect\citeauthoryear{Hoshino and Yanagi}{Hoshino and Yanagi}{2023}]{hoshino2023causal}
Hoshino, T. and T.~Yanagi (2023).
\newblock Causal inference with noncompliance and unknown interference.
\newblock {\em Journal of the American Statistical Association\/}, 1--12.

\bibitem[\protect\citeauthoryear{Hudgens and Halloran}{Hudgens and Halloran}{2008}]{hudgens2008toward}
Hudgens, M.~G. and M.~E. Halloran (2008).
\newblock Toward causal inference with interference.
\newblock {\em Journal of the American Statistical Association\/}~{\em 103}, 832--842.

\bibitem[\protect\citeauthoryear{Imai, Jiang, and Malani}{Imai et~al.}{2021}]{imai2021causal}
Imai, K., Z.~Jiang, and A.~Malani (2021).
\newblock Causal inference with interference and noncompliance in two-stage randomized experiments.
\newblock {\em Journal of the American Statistical Association\/}~{\em 116}, 632--644.

\bibitem[\protect\citeauthoryear{Imbens and Rubin}{Imbens and Rubin}{2015}]{imbens2015causal}
Imbens, G.~W. and D.~B. Rubin (2015).
\newblock {\em Causal Inference in Statistics, Social, and Biomedical Sciences: An Introduction}.
\newblock Cambridge University Press.

\bibitem[\protect\citeauthoryear{Kojevnikov, Marmer, and Song}{Kojevnikov et~al.}{2021}]{kojevnikov2021limit}
Kojevnikov, D., V.~Marmer, and K.~Song (2021).
\newblock Limit theorems for network dependent random variables.
\newblock {\em Journal of Econometrics\/}~{\em 222}, 882--908.

\bibitem[\protect\citeauthoryear{Lei and Ding}{Lei and Ding}{2021}]{lei2021regression}
Lei, L. and P.~Ding (2021).
\newblock Regression adjustment in completely randomized experiments with a diverging number of covariates.
\newblock {\em Biometrika\/}~{\em 108}, 815--828.

\bibitem[\protect\citeauthoryear{Leung}{Leung}{2019}]{leung2019inference}
Leung, M.~P. (2019).
\newblock Inference in models of discrete choice with social interactions using network data.
\newblock {\em arXiv preprint arXiv:1911.07106\/}.

\bibitem[\protect\citeauthoryear{Leung}{Leung}{2022}]{leung2022causal}
Leung, M.~P. (2022).
\newblock Causal inference under approximate neighborhood interference.
\newblock {\em Econometrica\/}~{\em 90}, 267--293.

\bibitem[\protect\citeauthoryear{Li and Wager}{Li and Wager}{2022}]{li2022random}
Li, S. and S.~Wager (2022).
\newblock Random graph asymptotics for treatment effect estimation under network interference.
\newblock {\em The Annals of Statistics\/}~{\em 50}, 2334--2358.

\bibitem[\protect\citeauthoryear{Li and Ding}{Li and Ding}{2017}]{li2017general}
Li, X. and P.~Ding (2017).
\newblock General forms of finite population central limit theorems with applications to causal inference.
\newblock {\em Journal of the American Statistical Association\/}~{\em 112}, 1759--1769.

\bibitem[\protect\citeauthoryear{Li and Ding}{Li and Ding}{2020}]{li2020rerandomization}
Li, X. and P.~Ding (2020).
\newblock Rerandomization and regression adjustment.
\newblock {\em Journal of the Royal Statistical Society Series B: Statistical Methodology\/}~{\em 82}, 241--268.

\bibitem[\protect\citeauthoryear{Li, Ding, and Rubin}{Li et~al.}{2018}]{li2018asymptotic}
Li, X., P.~Ding, and D.~B. Rubin (2018).
\newblock Asymptotic theory of rerandomization in treatment--control experiments.
\newblock {\em Proceedings of the National Academy of Sciences\/}~{\em 115}, 9157--9162.

\bibitem[\protect\citeauthoryear{Lin}{Lin}{2013}]{Lin2013Agnostic}
Lin, W. (2013).
\newblock Agnostic notes on regression adjustments to experimental data: Reexamining freedman's critique.
\newblock {\em The Annals of Applied Statistics\/}~{\em 7}, 295--318.

\bibitem[\protect\citeauthoryear{Lu, Yang, and Wang}{Lu et~al.}{2023}]{lu2023debiased}
Lu, X., F.~Yang, and Y.~Wang (2023).
\newblock Debiased regression adjustment in completely randomized experiments with moderately high-dimensional covariates.
\newblock {\em arXiv preprint arXiv:2309.02073\/}.

\bibitem[\protect\citeauthoryear{Manski}{Manski}{1993}]{manski1993identification}
Manski, C.~F. (1993).
\newblock Identification of endogenous social effects: The reflection problem.
\newblock {\em The review of economic studies\/}~{\em 60}, 531--542.

\bibitem[\protect\citeauthoryear{Morgan and Rubin}{Morgan and Rubin}{2012}]{morgan2012rerandomization}
Morgan, K.~L. and D.~B. Rubin (2012).
\newblock Rerandomization to improve covariate balance in experiments.
\newblock {\em Annals of Statistics\/}~{\em 40}, 1263--1282.

\bibitem[\protect\citeauthoryear{Negi and Wooldridge}{Negi and Wooldridge}{2021}]{negi2021revisiting}
Negi, A. and J.~M. Wooldridge (2021).
\newblock Revisiting regression adjustment in experiments with heterogeneous treatment effects.
\newblock {\em Econometric Reviews\/}~{\em 40}, 504--534.

\bibitem[\protect\citeauthoryear{Newey and West}{Newey and West}{1987}]{newey1987simple}
Newey, W.~K. and K.~D. West (1987).
\newblock A simple, positive semi-definite, heteroskedasticity and autocorrelation consistent covariance matrix.
\newblock {\em Econometrica\/}~{\em 55}, 703--708.

\bibitem[\protect\citeauthoryear{Paluck, Shepherd, and Aronow}{Paluck et~al.}{2016}]{paluck2016changing}
Paluck, E.~L., H.~Shepherd, and P.~M. Aronow (2016).
\newblock Changing climates of conflict: A social network experiment in 56 schools.
\newblock {\em Proceedings of the National Academy of Sciences\/}~{\em 113}, 566--571.

\bibitem[\protect\citeauthoryear{Ren, Liu, and Ma}{Ren et~al.}{2024}]{ren4783803design}
Ren, J., H.~Liu, and Y.~Ma (2024).
\newblock Design-based covariate adjustment for causal inference with interference and noncompliance.
\newblock {\em Available at SSRN 4783803\/}.

\bibitem[\protect\citeauthoryear{Rubin}{Rubin}{1980}]{rubin1980randomization}
Rubin, D.~B. (1980).
\newblock Randomization analysis of experimental data: The fisher randomization test comment.
\newblock {\em Journal of the American Statistical Association\/}~{\em 75}, 591--593.

\bibitem[\protect\citeauthoryear{S{\"a}vje}{S{\"a}vje}{2024}]{savje2024causal}
S{\"a}vje, F. (2024).
\newblock Causal inference with misspecified exposure mappings: separating definitions and assumptions.
\newblock {\em Biometrika\/}~{\em 111}, 1--15.

\bibitem[\protect\citeauthoryear{Sussman and Airoldi}{Sussman and Airoldi}{2017}]{sussman2017elements}
Sussman, D.~L. and E.~M. Airoldi (2017).
\newblock Elements of estimation theory for causal effects in the presence of network interference.
\newblock {\em arXiv preprint arXiv:1702.03578\/}.

\bibitem[\protect\citeauthoryear{Viviano, Lei, Imbens, Karrer, Schrijvers, and Shi}{Viviano et~al.}{2023}]{viviano2023causal}
Viviano, D., L.~Lei, G.~Imbens, B.~Karrer, O.~Schrijvers, and L.~Shi (2023).
\newblock Causal clustering: design of cluster experiments under network interference.
\newblock {\em arXiv preprint arXiv:2310.14983\/}.

\bibitem[\protect\citeauthoryear{Wang and Li}{Wang and Li}{2022}]{wang2022rerandomization}
Wang, Y. and X.~Li (2022).
\newblock Rerandomization with diminishing covariate imbalance and diverging number of covariates.
\newblock {\em The Annals of Statistics\/}~{\em 50}, 3439--3465.

\end{thebibliography}

\newpage

\appendix

\centerline{ \Large\bf SUPPLEMENTARY MATERIAL}
\vspace{2mm}

\spacingset{1.5}
\noindent Section~\ref{sec:A} provides the proofs of Proposition~\ref{prop:bias-of-variance-estimator-Lin-Fisher} and \Cref{prop:bias-of-variance-estimator}

\noindent Section~\ref{sec:B} provides proofs of \Cref{thm:estimator-with-optimal-estimated-precesion-Lin-Fisher} and \Cref{thm:estimator-with-optimal-estimated-precesion}.

\noindent Section~\ref{sec:counter} presents a constructive setting and an additional simulation study when the asymptotic variance of Fisher's and Lin's regression adjusted estimator is strictly less efficient than the unadjusted estimator.

\noindent Section~\ref{sec:details-gen} provides additional details of the data-generation process in the simulation study.


\section{Proofs of Proposition~\ref{prop:bias-of-variance-estimator-Lin-Fisher} and \Cref{prop:bias-of-variance-estimator}}
\label{sec:A}
\begin{lemma}
\label{lem:order-of-haj-ht-mean}
 Under Assumptions \ref{a:overlap}-\ref{a:weak-dependency-for-LLN}, \ref{a:bounded-G-dc}, we have
\begin{align*}
    &\hat{\mu}_{\star}(\bs{t})-\mu(\bs{t}) =\Op\Big( n^{-1/2}(1+\sum_{s=0}^n M_n^\partial (s)\tilde{\theta}_{n,s})^{1/2}\Big),\\
    &\hat{\bs{\mu}}_{\star, \bs{G}}(\bs{t})-\bs{\mu}_{\bs{G}}(\bs{t}) = \Op\Big(n^{-1/2}(1+\sum_{s=0}^n M_n^\partial (s)\tilde{\theta}_{n,s})^{1/2}\Big).
\end{align*}  
\end{lemma}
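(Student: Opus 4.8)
The plan is to reduce everything to a single variance bound for a Horvitz--Thompson-type average of a bounded, approximately local summand, and then read off the Hajek and $\bs{G}$ versions as corollaries. Throughout write $Z_i := Y_i\bs{1}(\bs{T}_i=\bs{t})/\pi_i(\bs{t})$. First I would record that $\hat{\mu}_{\HT}(\bs{t})$ is exactly unbiased: since $\prob(\bs{D}=\bs{d}\mid \bs{T}_i=\bs{t}) = \prob(\bs{D}=\bs{d})\bs{1}(\bs{T}(i,\bs{d},\bs{A})=\bs{t})/\pi_i(\bs{t})$, a direct computation gives $\E Z_i = \mu_i(\bs{t})$, so $\hat{\mu}_{\HT}(\bs{t})-\mu(\bs{t}) = n^{-1}\sum_{i=1}^n(Z_i-\E Z_i)$. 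By \Cref{a:overlap} and \Cref{a:bounded-outcome} each $Z_i$ is bounded by $c_Y/\underline{\pi}$. It therefore suffices to bound $\Var(\hat{\mu}_{\HT}(\bs{t})) = n^{-2}\sum_{i,j}\Cov(Z_i,Z_j)$ and invoke Chebyshev.

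The heart of the argument is a covariance-decoupling step. For a pair $(i,j)$ with $\ell_{\bs{A}}(i,j)=s$, I would approximate $Z_i$ by a local truncation $Z_i^{(s)}$ obtained by replacing $Y_i$ with $Y_i(\bs{D}^{(i,\lfloor s/2\rfloor)})$; the exposure indicator $\bs{1}(\bs{T}_i=\bs{t})$ is already $\mathcal{N}(i,K;\bs{A})$-measurable by \Cref{a:bounded-exposure-mapping}. Boundedness together with the ANI bound of \Cref{a:ANI} controls the replacement error in $L^1$ by $O(\theta_{n,\lfloor s/2\rfloor})$. When $s>2\max\{K,1\}$ the truncations $Z_i^{(s)}$ and $Z_j^{(s)}$ become measurable with respect to separated collections of the independent coordinates of $\bs{D}$ (and independent copies), hence uncorrelated; combining this with the replacement error and the uniform bound on $Z_i$ yields $|\Cov(Z_i,Z_j)| = O(\theta_{n,\lfloor s/2\rfloor})$. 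For the finitely many short distances $s\le 2\max\{K,1\}$ I bound the covariance crudely by the uniform bound, which is $O(1)$; both cases are summarized by the single rate $O(\tilde{\theta}_{n,s})$ with $\tilde{\theta}_{n,s}$ as in \Cref{a:weak-dependency-for-LLN}. This step is the main obstacle, since one must align the $K$-neighborhood measurability of the exposure indicator with the $\lfloor s/2\rfloor$-neighborhood ANI approximation of the outcome so that the two truncations genuinely separate at the right threshold.

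Summing over pairs grouped by distance, the number of ordered pairs at distance exactly $s$ is $\sum_i |\mathcal{N}^\partial(i,s;\bs{A})| = nM_n^\partial(s)$, so $\sum_{i,j}|\Cov(Z_i,Z_j)| = O\big(n + n\sum_{s}M_n^\partial(s)\tilde{\theta}_{n,s}\big)$, the leading $n$ coming from the diagonal variances. Dividing by $n^2$ gives $\Var(\hat{\mu}_{\HT}(\bs{t})) = O\big(n^{-1}(1+\sum_s M_n^\partial(s)\tilde{\theta}_{n,s})\big)$, and Chebyshev's inequality delivers the claimed $\Op$ rate for $\hat{\mu}_{\HT}$. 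The $\bs{G}$ version is identical coordinatewise, using \Cref{a:bounded-G-dc} (boundedness and $\max_i\E\|\bs{G}_i(\bs{D})-\bs{G}_i(\bs{D}^{(i,s)})\|_\infty\le c_{\bs{G}}\theta_{n,s}$) in place of \Cref{a:bounded-outcome} and \Cref{a:ANI}.

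Finally, for the Hajek estimator I would write $\hat{\mu}_{\haj}(\bs{t}) = \hat{\mu}_{\HT}(\bs{t})/\hat{\pi}(\bs{t})$ with $\hat{\pi}(\bs{t}) := n^{-1}\sum_i \bs{1}(\bs{T}_i=\bs{t})/\pi_i(\bs{t})$. Since $\hat{\pi}(\bs{t})$ is itself a Horvitz--Thompson average of the bounded, purely $K$-local ``outcome'' $Y_i\equiv 1$ with mean $1$, the same variance bound gives $\hat{\pi}(\bs{t})-1 = \Op\big(n^{-1/2}(1+\sum_s M_n^\partial(s)\tilde{\theta}_{n,s})^{1/2}\big)$, which is $\op(1)$ under \Cref{a:weak-dependency-for-LLN}; hence $\hat{\pi}(\bs{t})$ is bounded away from zero with probability tending to one. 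The exact ratio identity $\hat{\mu}_{\haj}(\bs{t})-\mu(\bs{t}) = \big(\hat{\mu}_{\HT}(\bs{t})-\mu(\bs{t})\big)/\hat{\pi}(\bs{t}) - \mu(\bs{t})\big(\hat{\pi}(\bs{t})-1\big)/\hat{\pi}(\bs{t})$ then transfers the rate from the numerator, establishing the bound for $\star=\haj$; the $\bs{G}$ case follows in the same way.
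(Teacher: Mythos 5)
Your proposal is correct in substance and shares the paper's overall architecture: establish the Horvitz--Thompson rate first, then transfer it to the Hajek estimator through the ratio with the normalizer $1_{\HT}(\bs{t}) := n^{-1}\sum_{i=1}^n \bs{1}(\bs{T}_i=\bs{t})/\pi_i(\bs{t})$, which concentrates at $1$ at the same $\Op$ rate. The difference is the middle step. The paper never proves the HT-type rates itself: it invokes the proof of Theorem 2 of \citet{leung2022causal} applied to the transformed outcomes $Y_i\bs{1}(\bs{T}_i=\bs{t})$, $G_{iq}\bs{1}(\bs{T}_i=\bs{t})$ and $\bs{1}(\bs{T}_i=\bs{t})$, with \Cref{a:bounded-G-dc} supplying the boundedness and ANI-type conditions that the $\bs{G}$-coordinates need to qualify for that citation. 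You instead re-derive the rate from first principles---unbiasedness of the HT summands, covariance decoupling via ANI truncation and independence of the treatment coordinates, pair counting through $nM_n^\partial(s)$, and Chebyshev---which amounts to inlining the $\psi$-dependence argument of \citet{leung2022causal} and \citet{kojevnikov2021limit}. The citation buys the paper brevity and inherited bookkeeping; your version buys transparency about why $\sum_{s} M_n^\partial(s)\tilde{\theta}_{n,s}$ is exactly the quantity that appears.

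Because you inline that argument, you inherit its delicate points, and one is stated incorrectly in your sketch: for a pair with $\ell_{\bs{A}}(i,j)=s$ and $s$ even, the neighborhoods $\mathcal{N}(i,\lfloor s/2\rfloor;\bs{A})$ and $\mathcal{N}(j,\lfloor s/2\rfloor;\bs{A})$ need not be disjoint (a unit can lie at distance exactly $s/2$ from both), so the two truncations are not measurable with respect to separated sets of independent coordinates and their covariance need not vanish. The standard repair is asymmetric radii summing to strictly less than $s$ (e.g.\ $\lfloor s/2\rfloor$ and $\lceil s/2\rceil-1$) combined with the one-sided telescoping $\Cov(Z_i,Z_j)=\Cov(Z_i-Z_i^{(r_1)},Z_j)+\Cov(Z_i^{(r_1)},Z_j-Z_j^{(r_2)})+\Cov(Z_i^{(r_1)},Z_j^{(r_2)})$; this shifts the index of $\theta_{n,\cdot}$ by at most one, which is harmless here, but it is precisely the bookkeeping the cited proof already resolves. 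Relatedly, your parenthetical ``(and independent copies)'' should be made explicit: the resampled coordinates used to truncate different units must come from mutually independent copies of $\bs{D}$, otherwise the truncations are correlated through the shared copy; this is compatible with \Cref{a:ANI}, which constrains each unit only marginally. With these two points fixed your proof goes through, and your Hajek step is if anything more careful than the paper's, since you make explicit that transferring the rate requires the $\Op$ rate for $1_{\HT}(\bs{t})-1$ and boundedness of $\mu(\bs{t})$, not merely $1/1_{\HT}(\bs{t})=1+\op(1)$.
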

\begin{proof}[Proof of \Cref{lem:order-of-haj-ht-mean}]
Let $\bs{G}_i = (\bs{G}_{iq})_{q=1}^Q$, $\bs{\mu}_{\bs{G}}(\bs{t}) = (\mu_{{G}_q}(\bs{t}))_{q=1}^Q$ and $\hat{\bs{\mu}}_{\star, \bs{G}}(\bs{t}) = (\hat{{\mu}}_{\star,{G}_q}(\bs{t}))_{q=1}^Q$, $\star\in\{\HT,\haj\}$.
    Define $1_{\HT}(\bs{t})=n^{-1}\sumi  \bs{1}(\bs{T}_i = \bs{t})/\pi_i(\bs{t})$.
    By the proof of \citet[Theorem 2]{leung2022causal} with $Y_i\equiv Y_i\bs{1}(\bs{T}_i=\bs{t}),{G}_{iq}\bs{1}(\bs{T}_i=\bs{t})$, and $\bs{1}(\bs{T}_i=\bs{t})$, we have from Assumptions \ref{a:overlap}-\ref{a:weak-dependency-for-LLN}, \ref{a:bounded-G-dc} that
    \begin{align*}
        &\hat{\mu}_{\HT}(\bs{t})-\mu(\bs{t}) = \Op\Big( n^{-1/2}(1+\sum_{s=0}^n M_n^\partial (s)\tilde{\theta}_{n,s})^{1/2}\Big),\\ &\hat{\mu}_{\HT,G_q}(\bs{t})-\mu_{G_q}(\bs{t}) = \Op\Big( n^{-1/2}(1+\sum_{s=0}^n M_n^\partial (s)\tilde{\theta}_{n,s})^{1/2}\Big),\\
        &1_{\HT}(\bs{t})-1 = \Op\Big( n^{-1/2}(1+\sum_{s=0}^n M_n^\partial (s)\tilde{\theta}_{n,s})^{1/2}\Big).
    \end{align*}
    Assumption~\ref{a:weak-dependency-for-LLN} implies that $1_{\HT}(\bs{t}) = 1+\op(1)$ and thereby
    \[
    1/1_{\HT}(\bs{t}) = 1+ \op(1).
    \]
     As a consequence, we have
    \begin{align*}
        &\hat{\mu}_{\haj}(\bs{t}) = \hat{\mu}_{\HT}(\bs{t})/1_{\HT}(\bs{t}) = \mu(\bs{t}) + \Op\Big( n^{-1/2}(1+\sum_{s=0}^n M_n^\partial (s)\tilde{\theta}_{n,s})^{1/2}\Big)\\
        &\hat{\mu}_{\haj,G_q}(\bs{t})  = \hat{\mu}_{\HT,G_q}(\bs{t})/1_{\HT}(\bs{t}) = \mu_{G_q}(\bs{t}) + \Op\Big( n^{-1/2}(1+\sum_{s=0}^n M_n^\partial (s)\tilde{\theta}_{n,s})^{1/2}\Big).
    \end{align*}
   
    The conclusion follows.
\end{proof}

We first prove \Cref{prop:bias-of-variance-estimator}. To do this, we first prove that $Y_i-\phi(\bs{G}_i)^\top \bs{\beta}$ satisfies Assumptions \ref{a:bounded-outcome}, \ref{a:ANI}, \ref{a:weak-dependency-for-LLN} for $Y_i$ and then apply Theorem~4 from \cite{leung2022causal} with $Y_i\equiv Y_i-\phi(\bs{G}_i)^\top \bs{\beta}$. For ease of notation, we write $\tilde{\bs{G}}_i\equiv \phi(\bs{G}_i)$. Since \Cref{a:bounded-G-dc} holds for $\phi(\bs{G}_i)$, we have, for $i\in\mathcal{N}_n$ and $d\in \{0,1\}^n$,
\begin{align*}
   |Y_i(\bs{d})-\tilde{\bs{G}}_i(\bs{d})^\top \bs{\beta}| \leq |Y_i(\bs{d})|+\|\tilde{\bs{G}}_i(\bs{d})\|_{\infty} \|\bs{\beta}\|_{1} \leq  c_{Y}+\|\bs{\beta}\|_1c_{\bs{G}}
\end{align*}
 and
\begin{align*}
&\max_{i\in\mathcal{N}_n}\mathbb{E}[|Y_i(\bs{D})-Y_i(\bs{D}^{(i,s)})-\{\tilde{\bs{G}}_i(\bs{D})-\tilde{\bs{G}}_i(\bs{D}^{(i,s)})\}^\top \bs{\beta}|] \\
\leq &\max_{i\in\mathcal{N}_n}\mathbb{E}[|Y_i(\bs{D})-Y_i(\bs{D}^{(i,s)})|] +  \max_{i\in\mathcal{N}_n}\mathbb{E}[\|\tilde{\bs{G}}_i(\bs{D})-\tilde{\bs{G}}_i(\bs{D}^{(i,s)})\|_{\infty}]  \|\bs{\beta}\|_{1} = O(\theta_{n,s}),
\end{align*}
which concludes that $Y_i-\phi(\bs{G}_i)^\top \bs{\beta}$ satisfies Assumption \ref{a:bounded-outcome}, \ref{a:ANI}, \ref{a:weak-dependency-for-LLN} for $Y_i$. 

Therefore, by \cite[Theorem~4]{leung2022causal} with $Y_i\equiv Y_i-\phi(\bs{G}_i)^\top \bs{\beta}$, we have
\[
\hat{\sigma}^2_{\HT}(\bs{\beta})={\sigma}^2_{\HT}(\bs{\beta})+R(\bs{\beta})+\op(1).
\]

For $\hat{\sigma}^2_{\haj}(\bs{\beta})$, define $\hat{\bs{V}}_{\star,\tilde{\bs{G}},i}$ and $\bs{V}_{\star,\tilde{\bs{G}},i}$ analogously as $\hat{V}_{\star,i}$ and $V_{\star,i}$, with $Y_i$ replaced with $\tilde{\bs{G}}_i$. $\hat{V}_{\haj,i}(\bs{\beta})$ and $V_{\haj,i}(\bs{\beta})$ can be expressed as $\hat{V}_{\haj,i}(\bs{\beta}) = \hat{V}_{\haj,i}-\hat{\bs{V}}_{\haj,\tilde{\bs{G}},i}^\top\bs{\beta}$ and $V_{\haj,i}(\bs{\beta}) = {V}_{\haj,i}-{\bs{V}}_{\haj,\tilde{\bs{G}},i}^\top\bs{\beta}$, respectively. We have $\E V_{\haj,i}(\bs{\beta}) = \tau_i-\tau -\bs{\beta}^\top(\bs{\tau}_{\tilde{\bs{G}},i}-\bs{\tau}_{\tilde{\bs{G}}})$ and $n^{-1}\sumi  \sumi  \E V_{\haj,i}(\bs{\beta})\E V_{\haj,j}(\bs{\beta}) B_{ij} = R(\bs{\beta})$. We have
\begin{align*}
    &\hat{\sigma}^2_{\haj}(\bs{\beta}) = n^{-1}\sumi  \sumj \hat{V}_{\haj,i}(\bs{\beta}) \hat{V}_{\haj,j}(\bs{\beta}) B_{ij} \\
    =& n^{-1}\sumi  \sumj \{V_{\haj,i}(\bs{\beta})-\E V_{\haj,i}(\bs{\beta})\}\{V_{\haj,j}(\bs{\beta})-\E V_{\haj,j}(\bs{\beta})\} B_{ij}+ R(\bs{\beta})+\\
    & n^{-1}\sumi  \sumj  \{\hat{V}_{\haj,i}(\bs{\beta})\hat{V}_{\haj,j}(\bs{\beta})-V_{\haj,i}(\bs{\beta}) V_{\haj,j}(\bs{\beta})\}B_{ij} + \\
    & 2n^{-1}\sumi  \sumj  \{V_{\haj,i}(\bs{\beta})-\E V_{\haj,i}(\bs{\beta})\} \E V_{\haj,j}(\bs{\beta}) B_{ij}. 
\end{align*}

Since $Y_i-\phi(\bs{G}_i)^\top \bs{\beta}$ satisfies Assumption \ref{a:bounded-outcome}, \ref{a:ANI}, \ref{a:weak-dependency-for-LLN} for $Y_i$ and by the proof of \cite[Theorem~4]{leung2022causal}, we have
\begin{align*}
    &n^{-1}\sumi  \sumi \{V_{\haj,i}(\bs{\beta})-\E V_{\haj,i}(\bs{\beta})\}\{V_{\haj,j}(\bs{\beta})-\E V_{\haj,j}(\bs{\beta})\} B_{ij} = \sigma_{\haj}^2(\bs{\beta}) +\op(1),\\
    &n^{-1}\sumi  \sumi  \{V_{\haj,i}(\bs{\beta})-\E V_{\haj,i}(\bs{\beta})\} \E V_{\haj,j}(\bs{\beta}) B_{ij} = \op(1).
\end{align*}

It remains to prove 
\[
n^{-1}\sumi  \sumi  \{\hat{V}_{\haj,i}(\bs{\beta})\hat{V}_{\haj,j}(\bs{\beta})-V_{\haj,i}(\bs{\beta}) V_{\haj,j}(\bs{\beta})\}B_{ij} =\op(1).
\]

We see that
\begin{align*}
    \hat{V}_{\haj,i}(\bs{\beta})-V_{\haj,i}(\bs{\beta}) = \big[\mu(\bs{t})-\hat{\mu}_{\haj}(\bs{t})-\bs{\beta}^\top\{\bs{\mu}_{\tilde{\bs{G}}}(\bs{t})-\hat{\bs{\mu}}_{\tilde{\bs{G}},\haj}(\bs{t})\}\big]\bs{1}(\bs{T}_i=\bs{t})/\pi(\bs{t})-\\
    \big[\mu(\bs{t}^\prime)-\hat{\mu}_{\haj}(\bs{t}^\prime)-\bs{\beta}^\top\{\bs{\mu}_{\tilde{\bs{G}}}(\bs{t}^\prime)-\hat{\bs{\mu}}_{\tilde{\bs{G}},\haj}(\bs{t}^\prime)\}\big]\bs{1}(\bs{T}_i=\bs{t}^\prime)/\pi(\bs{t}^\prime).
\end{align*}
Assumption \ref{a:ANI} and \ref{a:consistency-for-variance-estimator} (i) implies that $\sum_{s=0}^n M_n^\partial (s)\tilde{\theta}_{n,s} = O(1)$ and therefore by \Cref{lem:order-of-haj-ht-mean}, we have
\begin{align*}
    &\max_{i\in \mathcal{N}_n}|\hat{V}_{\haj,i}(\bs{\beta})-V_{\haj,i}(\bs{\beta})| \leq \frac{1}{\underline{\pi}} \big|\mu(\bs{t}^\prime)-\hat{\mu}_{\haj}(\bs{t}^\prime)-\bs{\beta}^\top\{\bs{\mu}_{\tilde{\bs{G}}}(\bs{t}^\prime)-\hat{\bs{\mu}}_{\haj, \tilde{\bs{G}}}(\bs{t}^\prime)\}\big| +\\
    &\qquad \qquad\qquad\qquad\frac{1}{\underline{\pi}} \big|\mu(\bs{t})-\hat{\mu}_{\haj}(\bs{t})-\bs{\beta}^\top\{\bs{\mu}_{\tilde{\bs{G}}}(\bs{t})-\hat{\bs{\mu}}_{\haj, \tilde{\bs{G}}}(\bs{t})\}\big| = \Op(n^{-1/2}).
\end{align*}
 On the other hand, we have by \Cref{a:overlap}, \ref{a:bounded-outcome},  and \ref{a:bounded-G-dc} we have, for $\tilde{\bs{t}}\in \mt$, $|\hat{\mu}_{\haj}(\tilde{\bs{t}})| \leq c_{Y}$, $|\mu(\tilde{\bs{t}})| \leq c_{Y}$, $|\hat{\bs{\mu}}_{\haj,\tilde{\bs{G}}}(\tilde{\bs{t}})| \leq c_{\bs{G}}$, $|\bs{\mu}_{\tilde{\bs{G}}}(\tilde{\bs{t}})| \leq c_{\bs{G}}$ and therefore $\max_{i\in \mathcal{N}_n}\{|\hat{V}_{\haj,i}(\bs{\beta})|,|V_{\haj,i}(\bs{\beta})|\}\leq 2\underline{\pi}^{-1}(2c_{Y}+2c_{\bs{G}}\|\bs{\beta}\|_1) = O(1)$.

As a consequence, we have
\begin{align*}
    &|n^{-1}\sumi  \sumj  (\hat{V}_{\haj,i}(\bs{\beta})\hat{V}_{\haj,j}(\bs{\beta})-V_{\haj,i}(\bs{\beta}) V_{\haj,j}(\bs{\beta}))B_{ij}| \\
    \leq& |n^{-1}\sumi  \sumj  (\hat{V}_{\haj,i}(\bs{\beta})-V_{\haj,i}(\bs{\beta}) )V_{\haj,j}(\bs{\beta}) B_{ij}| + \\
    &\quad |n^{-1}\sumi  \sumi  (\hat{V}_{\haj,j}(\bs{\beta})-V_{\haj,j}(\bs{\beta}) )\hat{V}_{\haj,i}(\bs{\beta}) B_{ij}| \\
    &= O(1)O(n^{-1/2}) n^{-1}\sumi  \sumi   B_{ij} = O(1)\Op(n^{-1/2})M_n\left(b_n, 1\right) = \op(1).
\end{align*}
Hence, we prove \Cref{prop:bias-of-variance-estimator}. For \Cref{prop:bias-of-variance-estimator-Lin-Fisher}, we let $\phi(\bs{G}_i) \equiv (\bs{X}_i^\top\bs{1}(\bs{T}_i=\bs{t}),\bs{X}_i^\top\bs{1}(\bs{T}_i=\bs{t}^\prime))^\top$. To conclude \Cref{prop:bias-of-variance-estimator-Lin-Fisher}, it remains to check $\phi(\bs{G}_i)$ satisfies \Cref{a:bounded-G-dc}. First, we see that for any $i \in \mathcal{N}_n$, by \Cref{a:bounded-X}
\[
\|\phi(\bs{G}_i)\|_{\infty} \leq c_{\bs{X}}/\underline{\pi}.
\]
On the other hand, when $s>2\max\{1,K\}$, we have $\bs{T}_i(\bs{D}^{(i,s)}) = \bs{T}_i(\bs{D})$. As a consequence, we have for any $\tilde{\bs{t}}\in \mt$
\[
\max_{i\in\mathcal{N}_n}\mathbb{E}\Big\|\-\bs{X}_i\{\bs{1}(\bs{T}_i(\bs{D})=\tilde{\bs{t}})-\bs{1}(\bs{T}_i(\bs{D}^{(i,s)})=\tilde{\bs{t}})\}\Big\|_{\infty}=0,\quad \text{for},\quad s>2\max\{1,K\}.
\]

In light of above, we show that $\phi(\bs{G}_i) \equiv (\bs{X}_i^\top\bs{1}(\bs{T}_i=\bs{t}),\bs{X}_i^\top\bs{1}(\bs{T}_i=\bs{t}^\prime))^\top$ satisfies \Cref{a:bounded-G-dc}. Therefore, by \Cref{prop:bias-of-variance-estimator}, we prove \Cref{prop:bias-of-variance-estimator-Lin-Fisher}. 

\section{Proofs of \Cref{thm:estimator-with-optimal-estimated-precesion-Lin-Fisher} and \Cref{thm:estimator-with-optimal-estimated-precesion}}
\label{sec:B}

We prove \Cref{thm:estimator-with-optimal-estimated-precesion-SM}, which is a more general result regarding $\phi(\bs{G}_i)$. Then the results of \Cref{thm:estimator-with-optimal-estimated-precesion-Lin-Fisher} and \Cref{thm:estimator-with-optimal-estimated-precesion} easily follows by applying \Cref{thm:estimator-with-optimal-estimated-precesion-SM} with $\phi(\bs{G}_i) \equiv (\bs{X}_i^\top\bs{1}(\bs{T}_i = \bs{t}), \bs{X}_i^\top\bs{1}(\bs{T}_i = \bs{t}^\prime))^\top$ and $\phi(\bs{G}_i) \equiv \phi_0(\bs{G}_i)$, respectively. Let $\tilde{\bs{\beta}}_{\star, \ND} = \argmin_{\bs{\beta}} {\sigma}^2_{\star}(\bs{\beta})+R(\bs{\beta})$.

\begin{assumption} 
\label{a:for-CLT-of-estimated-optimal-SM}
For $\star\in\{\HT,\haj\}$, We have
(i) Limit inferior of the smallest eigenvalues of the Hessian matrix of  ${\sigma}^2_{\star}(\bs{\beta})+R(\bs{\beta})$ (with respect to $\bs{\beta}$) at $\tilde{\bs{\beta}}_{\star, \ND}$ is greater than zero (ii) $\liminf_{n\rightarrow\infty}\sigma_\star^2(\tilde{\bs{\beta}}_{\star, \ND})>0$
(iii) There exist $\epsilon>0$ and a sequence of positive constants $\left\{m_n\right\}_{n \in \mathbb{N}}$ such that $m_n \rightarrow \infty$, we have
$$
\max \left\{ \frac{1}{n^2} \sum_{s=0}^n\left|\mathcal{H}_n\left(s, m_n\right)\right| \tilde{\theta}_{n, s}^{1-\epsilon},  n^{-1 / 2} M_n\left(m_n, 2\right),  n^{3 / 2} \tilde{\theta}_{n, m_n}^{1-\epsilon}\right\} \rightarrow 0.
$$
\end{assumption}

\begin{assumption}
    \label{a:bias-same-order-with-estimator-SM}
    The following $3$ terms are of order $O(1)$:
    \begin{align*}
        &\frac{1}{n}\sumi\sumj B_{ij} (\tau_i-\tau)(\tau_j-\tau),\quad \frac{1}{n}\sumi\sumj B_{ij} (\tau_j-\tau )(\bs{\tau}_{\phi(\bs{G}),j}-\bs{\tau}_{\phi(\bs{G})}),\\
        &\frac{1}{n}\sumi\sumj B_{ij} (\bs{\tau}_{\phi(\bs{G}),i}-\bs{\tau}_{\phi(\bs{G})})(\bs{\tau}_{\phi(\bs{G}),j}-\bs{\tau}_{\phi(\bs{G})})^\top.
    \end{align*}
\end{assumption}

\begin{theorem}
\label{thm:estimator-with-optimal-estimated-precesion-SM}
Suppose we are under \Cref{a:bounded-G-dc}, but with $\bs{G}_i$ replaced with $\phi(\bs{G}_i)$. Suppose further Assumption~\ref{a:overlap}--\ref{a:ANI}, \ref{a:consistency-for-variance-estimator},  \Cref{a:for-CLT-of-estimated-optimal-SM}--\ref{a:bias-same-order-with-estimator-SM}, with $\bs{\tau}_{\phi(\bs{G})}=0$,  We have 
(i) $n^{1/2}(\hat{\tau}_{\star,\ND}-\tau)/\sigma_\star(\tilde{\bs{\beta}}_{\star, \ND})\xrightarrow{\textnormal{d}} \mathcal{N}(0,1)$ and (ii) $\hat{\sigma}_\star^2(\hat{\beta}_{\star,\ND}) - \sigma_\star^2(\tilde{\bs{\beta}}_{\star, \ND})- R(\tilde{\bs{\beta}}_{\star, \ND}) = \op(1)$. 
\end{theorem}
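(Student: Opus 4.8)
The plan is to reduce the data-dependent estimator $\hat\tau_{\star,\ND}=\hat\tau_\star(\hat{\bs\beta}_{\star,\ND})$ to the fixed-coefficient estimator $\hat\tau_\star(\tilde{\bs\beta}_{\star,\ND})$, to which the network central limit theorem of \citet{leung2022causal} and \citet{gao2023causal} applies, and then to show that the error incurred by estimating $\tilde{\bs\beta}_{\star,\ND}$ is asymptotically negligible. The starting observation is that $\hat V_{\star,i}(\bs\beta)$ is affine in $\bs\beta$, so $\hat\sigma_\star^2(\bs\beta)$ is a random quadratic form $\bs\beta^\top\hat{\bs A}\bs\beta-2\hat{\bs b}^\top\bs\beta+\hat c$, while $\sigma_\star^2(\bs\beta)+R(\bs\beta)$ is a deterministic quadratic $\bs\beta^\top\bs A\bs\beta-2\bs b^\top\bs\beta+c$.

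First I would establish convergence of the minimizer. Applying \Cref{prop:bias-of-variance-estimator} at $\bs0$, at the standard basis vectors, and at their pairwise sums—a finite collection of fixed $\bs\beta$'s—identifies the coefficients and yields $\hat{\bs A}\cp\bs A$, $\hat{\bs b}\cp\bs b$, $\hat c\cp c$, while \Cref{a:bias-same-order-with-estimator-SM} keeps the limiting coefficients $O(1)$. \Cref{a:for-CLT-of-estimated-optimal-SM}(i) forces $\bs A$ to be eventually positive definite, so the quadratic is strongly convex with bounded minimizer $\tilde{\bs\beta}_{\star,\ND}=\bs A^{-1}\bs b$ and empirical minimizer $\hat{\bs\beta}_{\star,\ND}=\hat{\bs A}^{-1}\hat{\bs b}$; the resulting uniform invertibility then gives $\hat{\bs\beta}_{\star,\ND}-\tilde{\bs\beta}_{\star,\ND}=\op(1)$.

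Next I would linearize the point estimator. Writing $\hat\tau_\star(\bs\beta)=\hat\tau_\star(\bs0)-\hat{\bs h}_\star^\top\bs\beta$ with $\hat{\bs h}_\star:=n^{-1}\sum_{i=1}^n w_{\star,i}\phi(\bs G_i)$, we obtain
\[
\sqrt n(\hat\tau_{\star,\ND}-\tau)=\sqrt n\big(\hat\tau_\star(\tilde{\bs\beta}_{\star,\ND})-\tau\big)-\big(\sqrt n\,\hat{\bs h}_\star\big)^\top\big(\hat{\bs\beta}_{\star,\ND}-\tilde{\bs\beta}_{\star,\ND}\big).
\]
Here the hypothesis $\bs\tau_{\phi(\bs G)}=\bs0$ is decisive: $\hat{\bs h}_\star$ is exactly (for $\star=\HT$, since $\E[w_{\HT,i}\phi(\bs G_i)]=\bs\tau_{\phi(\bs G),i}$) or asymptotically (for $\star=\haj$, via \Cref{lem:order-of-haj-ht-mean}) an unbiased estimator of $\bs\tau_{\phi(\bs G)}=\bs0$, so the weak-dependence variance bounds give $\sqrt n\,\hat{\bs h}_\star=\Op(1)$; multiplied by the $\op(1)$ coefficient error, the cross term is $\op(1)$. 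This is precisely where $\phi_0$ (and, in the Fisher/Lin specialization, the centering of $\bs X_i$) earns its keep—without $\bs\tau_{\phi(\bs G)}=\bs0$ the cross term would be $\Op(1)$ and asymptotic normality would break.

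Finally, since $Y_i-\phi(\bs G_i)^\top\tilde{\bs\beta}_{\star,\ND}$ inherits \Cref{a:bounded-outcome}, \ref{a:ANI}, and \ref{a:weak-dependency-for-LLN} (the verification being identical to that in the proof of \Cref{prop:bias-of-variance-estimator}), the fixed-coefficient CLT under \Cref{a:for-CLT-of-estimated-optimal-SM}(ii)--(iii) gives $\sqrt n(\hat\tau_\star(\tilde{\bs\beta}_{\star,\ND})-\tau)/\sigma_\star(\tilde{\bs\beta}_{\star,\ND})\xrightarrow{\textnormal{d}}\mathcal N(0,1)$, and Slutsky combined with the preceding step yields (i). For (ii), strong convexity with coefficient convergence gives $\hat\sigma_\star^2(\hat{\bs\beta}_{\star,\ND})=\min_{\bs\beta}\hat\sigma_\star^2(\bs\beta)=\min_{\bs\beta}\{\sigma_\star^2(\bs\beta)+R(\bs\beta)\}+\op(1)=\sigma_\star^2(\tilde{\bs\beta}_{\star,\ND})+R(\tilde{\bs\beta}_{\star,\ND})+\op(1)$, after invoking \Cref{prop:bias-of-variance-estimator} at the fixed limit. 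The hard part will be that the population, network, quadratic form, and the target $\tilde{\bs\beta}_{\star,\ND}$ all vary with $n$, so ``convergence of the argmin'' cannot appeal to a single fixed limiting objective; instead one must argue directly that $\hat{\bs A}^{-1}\hat{\bs b}-\bs A^{-1}\bs b=\op(1)$ with the eigenvalue lower bound of \Cref{a:for-CLT-of-estimated-optimal-SM}(i) supplying uniform invertibility, while simultaneously controlling the network-dependent remainder $\sqrt n\,\hat{\bs h}_\star$ under only HAC-type weak-dependence conditions.
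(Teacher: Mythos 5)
Your proposal is correct and, in its overall architecture, matches the paper's proof: both reduce $\hat{\tau}_{\star,\ND}$ to the fixed-coefficient estimator $\hat{\tau}_\star(\tilde{\bs{\beta}}_{\star,\ND})$; both obtain $\hat{\bs{\beta}}_{\star,\ND}-\tilde{\bs{\beta}}_{\star,\ND}=\op(1)$ from the quadratic structure of the objective plus the eigenvalue bound in \Cref{a:for-CLT-of-estimated-optimal-SM}(i); both kill the cross term by noting that your $\hat{\bs{h}}_\star$, which equals $\hat{\bs{\mu}}_{\star,\phi(\bs{G})}(\bs{t})-\hat{\bs{\mu}}_{\star,\phi(\bs{G})}(\bs{t}')$, is $\Op(n^{-1/2})$ precisely because $\bs{\tau}_{\phi(\bs{G})}=\bs{0}$ (\Cref{lem:order-of-haj-ht-mean}); both then apply the CLT of \citet{leung2022causal} to the transformed outcome $Y_i-\phi(\bs{G}_i)^\top\tilde{\bs{\beta}}_{\star,\ND}$ and obtain (ii) by combining the $\op(1)$ coefficient error with \Cref{prop:bias-of-variance-estimator} evaluated at the fixed limit. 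The one genuine difference is the coefficient-convergence step: the paper proves convergence of the Hessian and gradient blocks $\hat{\bs{H}}_\star,\hat{\bs{L}}_\star$ directly, by showing the joint vector $(V_{\star,i},\bs{V}_{\star,\phi(\bs{G}),i}^\top)^\top$ is $\psi$-dependent (\Cref{lem:W-i-psi-dependent}) and invoking Proposition 4.1 of \citet{kojevnikov2021limit} together with explicit bias computations, whereas you recover the same convergence by polarization, applying \Cref{prop:bias-of-variance-estimator} at finitely many fixed $\bs{\beta}$'s and solving for the coefficients. Your route is a clean shortcut that reuses the proposition as a black box instead of re-running the weak-dependence machinery, and it buys a shorter argument at no loss of generality; two small caveats are that your evaluation set must also separate the diagonal entries of $\hat{\bs{A}}$ from $\hat{\bs{b}}$, so the ``pairwise sums'' must include repeated indices (points of the form $2\bs{e}_q$), and that for $\star=\haj$ the ``fixed-coefficient CLT'' is not an off-the-shelf citation but requires the ratio expansion (the $T_1-T_2$ argument) that the paper spells out, since the Hajek weights are themselves random. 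Neither caveat affects correctness.
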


We define $\psi$-dependence in line with Definition 2.2 of \cite{kojevnikov2021limit} For $d\in \mathbb{N} $, let $\mathcal{L}_d$ be the set of real-valued bounded Lipschitz functions on $\mathbb{R}^d{:}$
$$
\mathcal{L}_{d}:=\{f:\mathbb{R}^{d}\rightarrow\mathbb{R}:\:\|f\|_{\infty}<\infty,\:\mathrm{Lip}(f)<\infty\},
$$

where $\|f\|_{\infty}:=\sup_{\bs{x}\in\mathbb{R}^d}|f(\bs{x})|$ and Lip$(f)$ indicates the Lipschitz constant of $f$, that is $|f(\bs{x}_1)-f(\bs{x}_2)|\leq \operatorname{Lip}(f)\|\bs{x}_1-\bs{x}_2\|_1$. We write the distance between subsets $H,H^{\prime}\subset \mathcal{N}_n$ by $\ell_{\boldsymbol{A}}(H,H^{\prime}):=$ $\min\{\ell_{\bs{A}}(i,j):i\in H,j\in H^{\prime}\}.$ For $h, h^{\prime}\in \mathbb{N},$ denote the collection of pairs $( H, H^{\prime}) $ whose sizes are $h$ and $h^{\prime}$, respectively, with distance at least $s$ as
$$
\mathcal{P}_{n}(h,h^{\prime},s):=\{(H,H^{\prime}):H,H^{\prime}\subset \mathcal{N}_{n},|H|=h,|H^{\prime}|=h^{\prime},\ell_{\bs{A}}(H,H^{\prime})\geqslant s\}.
$$

For a generic random vector $\bs{W}_{n,i}\in\mathbb{R}^v$, let $\bs{W}_{n, H}= ( \bs{W}_{n, i}) _{i\in H}$ and $\bs{W}_{n, H^{\prime}}= ( \bs{W}_{n, i}) _{i\in H^{\prime}}.$
\begin{definition}[$\psi$ Dependent]
     A triangular array $\{\bs{W}_{n,i}\}_{i\in S_n}$ is called $\psi$-dependent, if for each
 $n\in\mathbb{N}$, there exist a sequence of uniformly bounded constants $\{\tilde{\theta}_{n,s}\}_{s\geqslant0}$ with $\tilde{\theta}_{n,0}=1$ and a collection of nonrandom functions $\{\psi_{h,h^{\prime}}\}_{h,h^{\prime}\in\mathbb{N}}$, where $\psi_{h,h^{\prime}}:\mathcal{L}_{hv}\times\mathcal{L}_{h^{\prime}v}\to[0,\infty)$, such that for all $s> 0, ( H, H^\prime) \in \mathcal{P} _{n}( h, h^{\prime}, s) , f\in \mathcal{L} _{hv}, $ and $f^{\prime}\in \mathcal{L} _{h^{\prime}v}, $
$$
|\operatorname{Cov}[f(\boldsymbol{W}_{n,H}),f'(\boldsymbol{W}_{n,H'})]|\leqslant\psi_{h,h'}(f,f')\tilde{\theta}_{n,s}.
$$
The sequence $\{\tilde{\theta}_{n,s}\}_{s\geqslant0}$ is called the \emph{dependence coefficients} of $\{\bs{W}_{n,i}\}_{i\in \mathcal{N}_n}$.
\end{definition}
Define $\bs{V}_{\star,\phi(\bs{G}),i}$ and $\hat{\bs{V}}_{\star,\phi(\bs{G}),i}$ the same as $V_{\star,i}$ and $\hat{V}_{\star,i}$, respectively, with $Y_i$ replaced with $\phi(\bs{G}_i)$. Let $\bs{W}_{i} := (V_{\star,i},\bs{V}_{\star,\phi(\bs{G}),i}^\top)^\top\in\mathbb{R}^{1+Q}$. We have
\begin{lemma}
\label{lem:W-i-psi-dependent}
  Under Assumptions \ref{a:overlap}--\ref{a:ANI}, the triangular array $\{\bs{W}_i\}_{i\in \mathcal{N}_n}$ is $\psi$-weakly dependent with the dependence coefficients $\{\tilde{\theta}_{n,s}\}_{s\geqslant0}$ defined in  \Cref{a:ANI} and \ref{a:weak-dependency-for-LLN}, and
 $\psi_{h,h^{\prime}}(f,f^{\prime})=C[\|f\|_{\infty}\|f^{\prime}\|_{\infty}+h(Q+1)\|f^{\prime}\|_{\infty}\operatorname{Lip}(f)+h^{\prime}(Q+1)\|f\|_{\infty}\operatorname{Lip}(f^{\prime})]$, $\forall h,h^{\prime}\in\mathbb{N},f\in\mathcal{L}_{(1+Q)h},f^{\prime}\in\mathcal{L}_{(1+Q)h^{\prime}}$,
 with some positive constant $C$.
\end{lemma}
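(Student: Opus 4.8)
The plan is to verify the covariance inequality in the definition of $\psi$-dependence directly, following the truncation-and-coupling argument of \cite{kojevnikov2021limit} and \cite{leung2022causal} but extended from a scalar influence function to the stacked vector $\bs{W}_i$. First I would record two structural facts. By \Cref{a:overlap}, \Cref{a:bounded-outcome} and the (modified) \Cref{a:bounded-G-dc} in force in \Cref{thm:estimator-with-optimal-estimated-precesion-SM}, every coordinate of $\bs{W}_i$ is uniformly bounded, say $\|\bs{W}_i\|_\infty\le C_0$; consequently for $s\le 2\max\{K,1\}$ the trivial bound $|\Cov[f(\bs{W}_H),f'(\bs{W}_{H'})]|\le 4\|f\|_\infty\|f'\|_\infty$ together with $\tilde{\theta}_{n,s}=1$ already yields the claim through the first term of $\psi_{h,h'}$. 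Second, because the exposure-mapping indicators and propensities entering $V_{\star,i}$ and $\bs{V}_{\star,\phi(\bs{G}),i}$ depend only on the $K$-neighborhood of $i$ (\Cref{a:bounded-exposure-mapping}), and $\mu(\cdot)$ and $\bs{\mu}_{\phi(\bs{G})}(\cdot)$ are deterministic finite-population constants, each $\bs{W}_i$ is---up to the ANI error carried by $Y_i$ and $\phi(\bs{G}_i)$---a function of the treatments in a neighborhood of $i$.

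For the main range $s>2\max\{K,1\}$ I would set $s'=\lfloor s/2\rfloor\ge\max\{K,1\}$ and introduce the truncated variables $\bs{W}_i^{s'}:=\bs{W}_i(\bs{D}^{(i,s')})$, resampling the far coordinates from an independent copy as in \Cref{a:ANI} and using independent copies on the two blocks. Then $(\bs{W}_i^{s'})_{i\in H}$ and $(\bs{W}_j^{s'})_{j\in H'}$ are measurable with respect to disjoint collections of the original treatment coordinates once the $s'$-neighborhoods of $H$ and $H'$ do not overlap; since $\ell_{\bs{A}}(H,H')\ge s$ and the indicators are already $K$-local, the treatments genuinely driving the two truncated blocks are disjoint, so by mutual independence of $\{D_j\}$ the truncated blocks are independent. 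I would then telescope
\begin{align*}
\Cov[f(\bs{W}_H),f'(\bs{W}_{H'})]
&= \Cov[f(\bs{W}_H)-f(\bs{W}_H^{s'}),\,f'(\bs{W}_{H'})]\\
&\quad + \Cov[f(\bs{W}_H^{s'}),\,f'(\bs{W}_{H'})-f'(\bs{W}_{H'}^{s'})] + \Cov[f(\bs{W}_H^{s'}),\,f'(\bs{W}_{H'}^{s'})],
\end{align*}
where the last, fully truncated term vanishes by the independence just established.

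The two remaining terms are controlled by combining boundedness with the Lipschitz property: each obeys $|\Cov[A,B]|\le 2\|B\|_\infty\,\E|A|$ with $A$ a function difference, and $\E|f(\bs{W}_H)-f(\bs{W}_H^{s'})|\le\mathrm{Lip}(f)\sum_{i\in H}\E\|\bs{W}_i-\bs{W}_i^{s'}\|_1$. The crux is the per-unit modulus bound $\E\|\bs{W}_i-\bs{W}_i^{s'}\|_1\le C(Q+1)\theta_{n,s'}$: because $s'\ge K$ the exposure-mapping indicators are frozen under truncation, so the only discrepancy comes from $Y_i(\bs{D})-Y_i(\bs{D}^{(i,s')})$ and, coordinatewise, from $\phi(\bs{G}_i)(\bs{D})-\phi(\bs{G}_i)(\bs{D}^{(i,s')})$, whose expectations are bounded by a constant times $\theta_{n,s'}$ via \Cref{a:ANI} and the dc-clause of \Cref{a:bounded-G-dc}, with the bounded weights of \Cref{a:overlap} supplying the constant; summing the $1+Q$ coordinates produces the factor $(Q+1)$. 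Substituting, recalling $\theta_{n,\lfloor s/2\rfloor}=\tilde{\theta}_{n,s}$ in this range, and reading off coefficients gives exactly $\psi_{h,h'}(f,f')=C[\|f\|_\infty\|f'\|_\infty+h(Q+1)\|f'\|_\infty\mathrm{Lip}(f)+h'(Q+1)\|f\|_\infty\mathrm{Lip}(f')]$. The main obstacle is the index bookkeeping that simultaneously guarantees disjointness of the truncated neighborhoods (the even-$s$ boundary and its interplay with the $K$-local indicators) while keeping the truncation radius large enough, $s'\ge\max\{K,1\}$, for those indicators to be frozen; this is precisely the step that the definition of $\tilde{\theta}_{n,s}$, with its floor and its $\bs{1}\{s\le2\max\{K,1\}\}$ correction, is engineered to make close, and it is the part I would treat most carefully while otherwise following \cite{leung2022causal}.
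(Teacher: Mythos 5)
Your proposal is correct and takes essentially the same route as the paper: the paper's own proof of this lemma consists of a single remark that it is ``very similar to that of Theorem 1 in \citet{leung2022causal}'' and omits all details, and your truncation-and-coupling argument---truncating both blocks at radius $\lfloor s/2\rfloor$ with independent resampled copies, telescoping the covariance, freezing the $K$-local exposure indicators, and summing the per-coordinate ANI moduli to produce the $(Q+1)$ factors---is exactly that argument adapted to the stacked vector $\bs{W}_i$. The even-$s$ neighborhood-disjointness bookkeeping you flag is likewise left to the cited construction by the paper, so your writeup is in fact more explicit than the proof the paper provides.
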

\begin{proof}[Proof of \Cref{lem:W-i-psi-dependent}]
    The proof is very similar to that of \cite[Theorem 1]{leung2022causal}. Therefore, we omit it.
\end{proof}

\begin{proof}[Proof of \Cref{thm:estimator-with-optimal-estimated-precesion-SM}]
    
\end{proof}
We first prove that $\hat{\bs{\beta}}_{\star, \ND} = \tilde{\bs{\beta}}_{\star, \ND}+\op(1)$. 
We see that
\[
\hat{\bs{\beta}}_{\star, \ND} = \Big(n^{-1}\sumi\sumj B_{ij} \hat{\bs{V}}_{\star,\phi(\bs{G}),i} \hat{\bs{V}}_{\star,\phi(\bs{G}),j}^\top\Big)^{-1}\Big(n^{-1}\sumi\sumj B_{ij} \hat{\bs{V}}_{\star,\phi(\bs{G}),i} \hat{V}_{\star,j}\Big).
\]

Let $\tilde{\bs{G}}_i\equiv\phi(\bs{G}_i) = (\tilde{\bs{G}}_{iq})_{q=1}^Q$. Let $\hat{\bs{V}}_{\star,\phi(\bs{G}),i} = (\hat{{V}}_{\star,\tilde{G}_q,i})_{q=1}^Q$ and ${\bs{V}}_{\star,\phi(\bs{G}),i} = ({{V}}_{\star,\tilde{G}_q,i})_{q=1}^Q$ and $\bs{\tau}_{\phi(\bs{G}),i} = (\tau_{\tilde{G}_q,i})_{q=1}^Q$. 

For simplicity, we write $\hat{{\bs{W}}}_i =(\hat{V}_{\star,i},\hat{\bs{V}}_{\star,\phi(\bs{G}),i}^\top)^\top\in\mathbb{R}^{1+Q}$ and ${\bs{W}}_i =(V_{\star,i},\bs{V}_{\star,\phi(\bs{G}),i}^\top)^\top\in\mathbb{R}^{1+Q}$. Let $\hat{\bs{W}}_i = (\hat{W}_{iq})_{q=1}^{Q+1}$, ${\bs{W}}_i = ({W}_{iq})_{q=1}^{Q+1}$, $\bar{\bs{W}} = n^{-1}\sum_{i=1}^n \bs{W}_{i}$. We now prove that, for any $1\leq q_1,q_2 \leq Q+1$
\begin{align}
\label{eq:oracle-variance-consistent}
    &n^{-1}\sumi\sumj ({W}_{iq_1}-\E {W}_{iq_1}) ({W}_{jq_2}-\E {W}_{jq_2}) B_{ij} = \Cov(n^{-1/2}\sumi {W}_{iq_1}, n^{-1/2}\sumi {W}_{iq_2}) + \op(1),\\
    &n^{-1}\sumi\sumj \hat{W}_{iq_1}\hat{W}_{iq_2}B_{ij} = n^{-1}\sumi\sumj ({W}_{iq_1}-\E {W}_{iq_1}) ({W}_{jq_2}-\E {W}_{jq_2}) B_{ij} + \nonumber\\
    \label{eq:formula-of-bias-of-variance-estimator}
    &\qquad\qquad\qquad\qquad\qquad n^{-1}\sumi\sumj \E {W}_{iq_1} \E {W}_{jq_2} B_{ij} + \op(1).
\end{align}
  Assumption~\ref{a:ANI}, \ref{a:consistency-for-variance-estimator} implies Assumption 4.1 of \cite{kojevnikov2021limit}. By \Cref{lem:W-i-psi-dependent}, under Assumption~\ref{a:overlap}--\ref{a:ANI} and \ref{a:consistency-for-variance-estimator}, we apply \cite[Proposition 4.1]{kojevnikov2021limit}, thereby leading to 
\[
\Big\|n^{-1}\sumi\sumj (\bs{W}_{i}-\E \bs{W}_{i}) (\bs{W}_{j}-\E \bs{W}_{j}) B_{ij}-\Cov(n^{-1/2}\sumi \bs{W}_{i}, n^{-1/2}\sumi \bs{W}_{i})\Big\|_{\textnormal{F}}= \op(1),
\]
where $\|\cdot\|_{\textnormal{F}}$ is the Frobenius norm. Hence, \eqref{eq:oracle-variance-consistent} holds.

To prove \eqref{eq:formula-of-bias-of-variance-estimator}, we note that the alleged $\op(1)$ term is equal to
\begin{align*}
    n^{-1}\sumi  \sumi  (\hat{W}_{iq_1}\hat{W}_{jq_2}-W_{iq_1} W_{jq_2})B_{ij} + 2n^{-1}\sumi  \sumi  (W_{iq_1}-\E W_{iq_1}) \E W_{jq_2} B_{ij}. 
\end{align*}
Similarly, as Proof of \Cref{prop:bias-of-variance-estimator}, we can verify $\max_{i\in\mathcal{N}_n} |\hat{W}_{iq}-W_{iq}| = \Op(n^{-1/2})$, $\max_{i\in \mathcal{N}_n}\max\{|\hat{W}_{iq}|,|W_{iq}|\} = O(1)$, for any $1\leq q\leq Q+1$.
This leads to
\[
n^{-1}\sumi  \sumi  (\hat{W}_{iq_1}\hat{W}_{jq_2}-W_{iq_1} W_{jq_2})B_{ij} \leq \Op(n^{1/2}) n^{-1}\sumi \sumi B_{ij} = \Op(n^{1/2}) M_n(b_n,1) = \op(1).
\]
Similar as the proof of \cite[Theorem 4]{leung2022causal}, we have
\[
n^{-1}\sumi  \sumi  (W_{iq_1}-\E W_{iq_1}) \E W_{jq_2} B_{ij} = \op(1).
\]
Putting together, we have
\begin{align*}
    n^{-1}\sumi\sumj \hat{W}_{iq_1}\hat{W}_{iq_2}B_{ij} &= \Cov(n^{-1/2}\sumi {W}_{iq_1}, n^{-1/2}\sumi {W}_{jq_2}) + \\
    &\qquad \qquad \qquad n^{-1}\sumi\sumj \E {W}_{iq_1} \E {W}_{jq_2} B_{ij} + \op(1).
\end{align*}

As a consequence, we have
\begin{align*}
    &n^{-1}\sumi\sumj B_{ij} \hat{\bs{V}}_{\star,\phi(\bs{G}),i} \hat{\bs{V}}_{\star,\phi(\bs{G}),j}^\top = \Cov\Big(n^{-1/2}\sumi V_{\star,\phi(\bs{G}),i}\Big)+ \\
    &\qquad \qquad\qquad\qquad \frac{1}{n}\sumi \sumj B_{ij} (\bs{\tau}_{\phi(\bs{G}),i}-\bs{\tau}_{\phi(\bs{G})})(\bs{\tau}_{\phi(\bs{G}),j}-\bs{\tau}_{\phi(\bs{G})})^\top + \op(1),\\
    &n^{-1}\sumi\sumj B_{ij} \hat{\bs{V}}_{\star,\phi(\bs{G}),i} \hat{V}_{\star,j} = \Cov\Big(n^{-1/2}\sumi  V_{\star,\phi(\bs{G}),i},n^{-1/2}\sumi  V_{\star,i}\Big)+ \nonumber\\
    &\qquad \qquad\qquad\qquad \frac{1}{n}\sumi \sumj B_{ij} (\bs{\tau}_{\phi(\bs{G}),i}-\bs{\tau}_{\phi(\bs{G})})(\tau_j-\tau)^\top + \op(1).
\end{align*}
In light of the above, we have
$\tilde{\bs{\beta}}_{\star,\ND} = (\bs{H}_\star+\op(1))^{-1}(\bs{L}_\star+\op(1))$, where 
\begin{align*}
&\bs{L}_\star:=\Cov\Big(n^{-1/2}\sumi  V_{\star,\phi(\bs{G}),i},n^{-1/2}\sumi  V_{\star,i}\Big)+\frac{1}{n}\sumi \sumj B_{ij} (\bs{\tau}_{\phi(\bs{G}),i}-\bs{\tau}_{\phi(\bs{G})})(\tau_j-\tau)^\top\\
&\bs{H}_\star = \Cov\Big(n^{-1/2}\sumi V_{\star,\phi(\bs{G}),i}\Big)+\frac{1}{n}\sumi \sumj B_{ij} (\bs{\tau}_{\phi(\bs{G}),i}-\bs{\tau}_{\phi(\bs{G})})(\bs{\tau}_{\phi(\bs{G}),j}-\bs{\tau}_{\phi(\bs{G})})^\top.
\end{align*}

We see that $\tilde{\bs{\beta}}_{\star,\ND} = \bs{H}_\star^{-1}\bs{L}_\star$.

By Lemma~\ref{lem:order-of-haj-ht-mean} with $\bs{G}_i \equiv \phi(\bs{G}_i)$, we have
\begin{align*}
    \Cov\Big(n^{-1/2}\sumi V_{\star,\phi(\bs{G}),i}\Big) = O(1),\quad \Cov\Big(n^{-1/2}\sumi  V_{\star,\phi(\bs{G}),i},n^{-1/2}\sumi  V_{\star,i}\Big) = O(1).
\end{align*}
Combining with \Cref{a:bias-same-order-with-estimator-SM}, we have $\bs{H}_{\star} = O(1)$ and $\bs{L}_{\star} = O(1)$.

Let $\lambda_{\min}(\bs{H}_{\star})$ be the smallest eigenvalue of $\bs{H}_{\star}$. Assumption \ref{a:for-CLT-of-estimated-optimal-SM}~(i) suggest that $\liminf_{n\rightarrow \infty}\lambda_{\min}(\bs{H}_{\star})>0$ and therefore $(\bs{H}_{\star} + \op(1))^{-1} = \bs{H}_{\star}^{-1} + \op(1)$.

In light of the above, we have
\[
\hat{\bs{\beta}}_{\star,\ND} = \tilde{\bs{\beta}}_{\star,\ND} + \op(1).
\]

We are ready to prove prove (i). We see that
\begin{align*}
\hat{\tau}_\star(\hat{\bs{\beta}}_{\star,\ND})-\tau =& \hat{\tau}_\star(\hat{\bs{\beta}}_{\star,\ND})-\hat{\tau}_\star(\tilde{\bs{\beta}}_{\star,\ND})+\hat{\tau}_\star(\tilde{\bs{\beta}}_{\star,\ND})-\tau\\
    =&(\hat{\bs{\beta}}_{\star,\ND}-\tilde{\bs{\beta}}_{\star,\ND})^\top\{\hat{\bs{\mu}}_{\star,\phi(\bs{G})}(\bs{t})-\hat{\bs{\mu}}_{\star,\phi(\bs{G})}(\bs{t}^\prime)\}+\hat{\tau}_\star(\tilde{\bs{\beta}}_{\star,\ND})-\tau\\
    =& \op(1)\Op(n^{1/2}) + \hat{\tau}_\star(\tilde{\bs{\beta}}_{\star,\ND})-\tau \quad \text{(\Cref{lem:order-of-haj-ht-mean})} \\
    =& \hat{\tau}_\star(\tilde{\bs{\beta}}_{\star,\ND})-\tau +\op(n^{-1/2}).
\end{align*}
In light of the above, applying \cite[Theorem~2]{leung2022causal} with $Y_i\equiv Y_i-\phi(\bs{G}_i)^\top\tilde{\bs{\beta}}_{\HT,\ND}$, we have
 $n^{1/2}(\hat{\tau}_\HT(\hat{\bs{\beta}}_{\HT,\ND})-\tau)/\sigma_\HT(\tilde{\bs{\beta}}_{\HT,\ND})\xrightarrow{\textnormal{d}} \mathcal{N}(0,1)$. 
 On the other hand, we see that, $\tau_{\phi(\bs{G})} = \bs{0}$
\begin{align*}
    &\hat{\tau}_\haj(\tilde{\bs{\beta}}_{\haj,\ND})-\tau\\
    =& \hat{\tau}_\haj(\tilde{\bs{\beta}}_{\haj,\ND})-\tau - \tau_{\phi(\bs{G})}^\top \tilde{\bs{\beta}}_{\haj,\ND} \quad\quad\quad\quad \quad\quad (\tau_{\phi(\bs{G})} = \bs{0})\\
    =& \frac{\hat{\mu}_\HT(\bs{t})-\mu(\bs{t})\hat{1}_\HT(\bs{t})-\tilde{\bs{\beta}}_{\haj,\ND}^\top\{\hat{\bs{\mu}}_{\HT,\phi(\bs{G})}(\bs{t})-{\bs{\mu}}_{\phi(\bs{G})}(\bs{t})\hat{1}_\HT(\bs{t})\}}{\hat{1}_\HT(\bs{t})}-\\
    &\qquad
    \frac{\hat{\mu}_\HT(\bs{t}^\prime)-\mu(\bs{t}^\prime)\hat{1}_\HT(\bs{t}^\prime)-\tilde{\bs{\beta}}_{\haj,\ND}^\top\{\hat{\bs{\mu}}_{\HT,\phi(\bs{G})}(\bs{t}^\prime)-{\bs{\mu}}_{\phi(\bs{G})}(\bs{t}^\prime)\hat{1}_\HT(\bs{t}^\prime)\}}{1_\HT(\bs{t}^\prime)}\\
    =&\underset{=: T_1}{\underbrace{\Big[\hat{\mu}_\HT(\bs{t})-\mu(\bs{t})\hat{1}_\HT(\bs{t})-\tilde{\bs{\beta}}_{\haj,\ND}^\top\{\hat{\bs{\mu}}_{\HT,\phi(\bs{G})}(\bs{t})-{\bs{\mu}}_{\phi(\bs{G})}(\bs{t})\hat{1}_\HT(\bs{t})\}\Big]}}(1+\Op(n^{-1/2}))-\\
    &\qquad \underset{=:T_2}{\underbrace{\Big[\hat{\mu}_\HT(\bs{t}^\prime)-\mu(\bs{t}^\prime)\hat{1}_\HT(\bs{t}^\prime)-\tilde{\bs{\beta}}_{\haj,\ND}^\top\{\hat{\bs{\mu}}_{\HT,\phi(\bs{G})}(\bs{t}^\prime)-{\bs{\mu}}_{\phi(\bs{G})}(\bs{t}^\prime)\hat{1}_\HT(\bs{t}^\prime)\}\Big]}}(1+\Op(n^{-1/2}))
      \end{align*}
   Using \Cref{lem:order-of-haj-ht-mean} with $\bs{G}_i \equiv \phi(\bs{G}_i)$, we have $\hat{\mu}_\HT(\bs{t}^\prime)-\mu(\bs{t}^\prime)\hat{1}_\HT(\bs{t}) = \Op(n^{-1/2})$ and $\hat{\bs{\mu}}_{\HT,\phi(\bs{G})}(\bs{t})-{\bs{\mu}}_{\phi(\bs{G})}(\bs{t})\hat{1}_\HT(\bs{t}) = \Op(n^{-1/2})$, we have 
   \[
   \hat{\tau}_\haj(\tilde{\bs{\beta}}_{\haj,\ND})-\tau = T_1-T_2 + \op(n^{-1/2}).
   \]
   
Applying \cite[Theorem~2]{leung2022causal} with $Y_i\equiv Y_i-\mu(\bs{t})1(\bs{T}_i=\bs{t})-\mu(\bs{t}^\prime)1(\bs{T}_i=\bs{t}^\prime)-(\phi(\bs{G}_i)-\bs{\mu}_{\phi(\bs{G})}(\bs{t})1(\bs{T}_i=\bs{t})-\bs{\mu}_{\phi(\bs{G})}(\bs{t}^\prime)1(\bs{T}_i=\bs{t}^\prime))^\top\tilde{\bs{\beta}}_{\haj,\ND}$, we have $n^{1/2}(T_1-T_2)\xrightarrow{\textnormal{d}} \mathcal{N}(0,1)$. As a consequence, we have
 $n^{1/2}(\hat{\tau}_\haj(\hat{\bs{\beta}}_{\haj,\ND})-\tau)/\sigma_\haj(\tilde{\bs{\beta}}_{\haj,\ND})\xrightarrow{\textnormal{d}} \mathcal{N}(0,1)$. 

Now we prove \Cref{thm:estimator-with-optimal-estimated-precesion} (ii).
\begin{align*}
    &|\hat{\sigma}_{\star}^2(\hat{\bs{\beta}}_{\star,\ND})- \hat{\sigma}_{\star}^2(\tilde{\bs{\beta}}_{\star,\ND})|\leq  \Big|(\hat{\bs{\beta}}_{\star,\ND}-\tilde{\bs{\beta}}_{\star,\ND})^\top \Big\{n^{-1}\sumi\sumj B_{ij} \hat{\bs{V}}_{\star,\phi(\bs{G}),i} \hat{\bs{V}}_{\star,\phi(\bs{G}),j}^\top  \Bigr\}(\hat{\bs{\beta}}_{\star,\ND}-\tilde{\bs{\beta}}_{\star,\ND})\Big| + \\
    &\Big|n^{-1}\sumi\sumj B_{ij} (\hat{V}_{\star,i}-\hat{\bs{V}}_{\star,\phi(\bs{G}),i}^\top \tilde{\bs{\beta}}_{\star,\ND}) \hat{\bs{V}}_{\star,\phi(\bs{G}),j}^\top(\hat{\bs{\beta}}_{\star,\ND}-\tilde{\bs{\beta}}_{\star,\ND})\Big|.
\end{align*}
We have proven that
\begin{align*}
    &n^{-1}\sumi\sumj B_{ij} \hat{\bs{V}}_{\star,\phi(\bs{G}),i} \hat{\bs{V}}_{\star,\phi(\bs{G}),j}^\top = \bs{H}_\star + \op(1) = \Op(1)\\
    &n^{-1}\sumi\sumj B_{ij} \hat{\bs{V}}_{\star,\phi(\bs{G}),i} \hat{V}_{\star,j} = \bs{L}_\star + \op(1) = \Op(1).
\end{align*}
As a consequence, we have
\[
|\hat{\sigma}_{\star}^2(\hat{\bs{\beta}}_{\star,\ND})- \hat{\sigma}_{\star}^2(\tilde{\bs{\beta}}_{\star,\ND})| = \op(1)\Op(1)\op(1)+ \Op(1)\op(1) = \op(1).
\]

It is easy to verify that $\phi(\bs{G}_i) \equiv (\bs{X}_i^\top\bs{1}(\bs{T}_i = \bs{t}), \bs{X}_i^\top\bs{1}(\bs{T}_i = \bs{t}^\prime))^\top$, $\phi(\bs{G}_i) \equiv \phi_0(\bs{G}_i)$ satisfies \Cref{a:bounded-G-dc} for $\bs{G}_i$. Applying \Cref{thm:estimator-with-optimal-estimated-precesion-SM} with $\phi(\bs{G}_i) \equiv (\bs{X}_i^\top\bs{1}(\bs{T}_i = \bs{t}), \bs{X}_i^\top\bs{1}(\bs{T}_i = \bs{t}^\prime))^\top$, $\phi(\bs{G}_i) \equiv \phi_0(\bs{G}_i)$, \Cref{thm:estimator-with-optimal-estimated-precesion-Lin-Fisher} and \Cref{thm:estimator-with-optimal-estimated-precesion} follow.


\section{Constructive setting}\label{sec:counter}

In this section, we showcase an example where the efficiency gain of Lin's and Fisher's estimator proposed by \cite{gao2023causal} are negative. 
Consider the Bernoulli trial under the SUTVA setting, where $T_i \equiv D_i \in \{0,1\}$, $Y_i = \mu_i(1)D_i + \mu_i(0) (1-D_i)$. According to \cite[Lemma S1, S3]{gao2023causal}, let 
\[
\bs{\beta}_{\Lin}(t) := (\sum_{i=1}^n \bs{X}_i \bs{X}_i^\top)^{-1} \sum_{i=1}^n \bs{X}_i \mu_i(\tilde{t}), \quad \tilde{t}=0,1,\quad \bs{\beta}_{\Fisher} := \frac{1}{2}\Big(\bs{\beta}_{\Lin}(1) + \bs{\beta}_{\Lin}(0)\Big),
\]
be the asymptotic limit of $\hat{\bs{\beta}}_{\Lin}(\tilde{t})$ and $\hat{\bs{\beta}}_{\Fisher}$. The asymptotic variance of $n^{1/2}\hat{\tau}_{\Lin}$ and $n^{1/2}\hat{\tau}_{\Fisher}$ is then \citep[Theorem, 4.1, 4.4]{gao2023causal}:
\begin{align*}
&\sigma^2_{\Lin} = \frac{1}{n} \sum_{i=1}^n \pi_i(0)\pi_i(1)\Big(\frac{\mu_i(1)-\mu(1)}{\pi_i(1)} + \frac{\mu_i(0)-\mu(0)}{\pi_i(0)}- \bs{X}_i^\top\frac{\bs{\beta}_{\Lin}(1)}{\pi_i(1)}-\bs{X}_i^\top\frac{\bs{\beta}_{\Lin}(0)}{\pi_i(0)}\Big)^2,\\
&\sigma^2_{\Fisher} = \frac{1}{n} \sum_{i=1}^n \pi_i(0)\pi_i(1)\Big(\frac{\mu_i(1)-\mu(1)}{\pi_i(1)} + \frac{\mu_i(0)-\mu(0)}{\pi_i(0)}- \frac{\bs{X}_i}{\pi_i(1)\pi_i(0)}\bs{\beta}_{\Fisher}\Big)^2.
\end{align*}

For $n^{1/2}\hat{\tau}_{\haj}$, its asymptotic variance is instead
\[
\sigma^2_{\haj} = \frac{1}{n} \sum_{i=1}^n \pi_i(0)\pi_i(1)\Big(\frac{\mu_i(1)-\mu(1)}{\pi_i(1)} + \frac{\mu_i(0)-\mu(0)}{\pi_i(0)}\Big)^2.
\]

Let 
\begin{align*}
   \begin{pmatrix}
       \tilde{\bs{\beta}}(1) \\
       \tilde{\bs{\beta}}(0)
   \end{pmatrix} :=\begin{pmatrix}
       \sum_{i=1}^n \frac{\pi_i(0)}{\pi_i(1)} \bs{X}_i\bs{X}_i^\top & \sum_{i=1}^n  \bs{X}_i\bs{X}_i^\top\\
       \sum_{i=1}^n \bs{X}_i\bs{X}_i^\top & \sum_{i=1}^n \frac{\pi_i(1)}{\pi_i(0)} \bs{X}_i\bs{X}_i^\top
    \end{pmatrix}^{-1}\begin{pmatrix}
        \sum_{i=1}^n \frac{\pi_i(0)}{\pi_i(1)} \bs{X}_i\mu_i(1)+ \sum_{i=1}^n  \bs{X}_i\mu_i(0)\\
        \sum_{i=1}^n  \bs{X}_i\mu_i(1)+ \sum_{i=1}^n \frac{\pi_i(1)}{\pi_i(0)} \bs{X}_i\mu_i(0)
    \end{pmatrix},
\end{align*}
(assume that the denominator is invertible). We can construct the denominator to be invertible with varying propensity score. The efficiency gain of $n^{1/2}\hat{\tau}_{\Fisher}$ and $n^{1/2}\hat{\tau}_{\Lin}$ over $n^{1/2}\hat{\tau}_{\haj}$ can then be expressed as
\begin{align*}
   &\sigma_{\haj}^2 - \sigma_{\Lin}^2 =  \Delta - \frac{1}{n} \sum_{i=1}^n \pi_i(0)\pi_i(1)\Big(\bs{X}_i^\top\frac{\tilde{\bs{\beta}}(1)-\bs{\beta}_{\Lin}(1)}{\pi_i(1)}+\bs{X}_i^\top\frac{\tilde{\bs{\beta}}(0)-\bs{\beta}_{\Lin}(0)}{\pi_i(0)}\Big)^2, \\
    &\sigma_{\haj}^2 - \sigma_{\Fisher}^2 =  \Delta - \frac{1}{n} \sum_{i=1}^n \pi_i(0)\pi_i(1)\Big(\bs{X}_i^\top\frac{\tilde{\bs{\beta}}(1)-\bs{\beta}_{\Fisher}}{\pi_i(1)}+\bs{X}_i^\top\frac{\tilde{\bs{\beta}}(0)-\bs{\beta}_{\Fisher}}{\pi_i(0)}\Big)^2
\end{align*}
where
\begin{align*}
   \Delta = \frac{1}{n} \sum_{i=1}^n \pi_i(0)\pi_i(1)\Big(\bs{X}_i^\top\frac{\tilde{\bs{\beta}}(1)}{\pi_i(1)}+\bs{X}_i^\top\frac{\tilde{\bs{\beta}}(0)}{\pi_i(0)}\Big)^2.
\end{align*}
We can construct $\tilde{\bs{\beta}}(\tilde{t})=\bs{0}$, $\tilde{t}=0,1$, by finding an instance satisfying the following linear constraint
\begin{align}
\label{eq:construction-eq-1}
    \sum_{i=1}^n \frac{\pi_i(0)}{\pi_i(1)} \bs{X}_i\mu_i(1)+ \sum_{i=1}^n  \bs{X}_i\mu_i(0) = \bs{0},\quad 
        \sum_{i=1}^n  \bs{X}_i\mu_i(1)+ \sum_{i=1}^n \frac{\pi_i(1)}{\pi_i(0)} \bs{X}_i\mu_i(0) = \bs{0}.
\end{align}

 Now that $\tilde{\bs{\beta}}(\tilde{t})=0$, $\tilde{t}=0,1$, we further have
\begin{align*}
  \sigma_{\haj}^2 - \sigma_{\Lin}^2 = &-\frac{1}{n} \sum_{i=1}^n \pi_i(0)\pi_i(1)\Big(\bs{X}_i^\top\frac{\bs{\beta}_{\Lin}(1)}{\pi_i(1)}+\bs{X}_i^\top\frac{\bs{\beta}_{\Lin}(0)}{\pi_i(0)}\Big)^2\\
  =  & -\begin{pmatrix}
       \bs{\beta}_{\Lin}(1) \\
       \bs{\beta}_{\Lin}(0)
\end{pmatrix}^\top\begin{pmatrix}
       \sum_{i=1}^n \frac{\pi_i(0)}{\pi_i(1)} \bs{X}_i\bs{X}_i^\top & \sum_{i=1}^n  \bs{X}_i\bs{X}_i^\top\\
       \sum_{i=1}^n \bs{X}_i\bs{X}_i^\top & \sum_{i=1}^n \frac{\pi_i(1)}{\pi_i(0)} \bs{X}_i\bs{X}_i^\top
    \end{pmatrix}
    \begin{pmatrix}
       \bs{\beta}_{\Lin}(1) \\
       \bs{\beta}_{\Lin}(0).
\end{pmatrix} 
\end{align*}
This means that by further restricting 
\begin{align}
    \label{eq:construction-eq-2}
    \sum_{i=1}^n \bs{X}_i \mu_i(\tilde{t}) \ne \bs{0}, \quad \tilde{t} =0,1,
\end{align}
${\bs{\beta}}_{\Lin}(\tilde{t}) \ne \bs{0}$ and since the middle matrix is invertible, we have $\sigma_{\haj}^2 - \sigma_{\Fisher}^2$ is strictly negative.

Similarly, by restricting
\begin{align}
\label{eq:construction-eq-3}
\sum_{i=1}^n \bs{X}_i \mu_i(1) + \sum_{i=1}^n \bs{X}_i \mu_i(0) \ne \bs{0},
\end{align}
$\sigma_{\haj}^2 - \sigma_{\Lin}^2$ is strictly negative.

In summary, we have we can construct \eqref{eq:construction-eq-1}--\eqref{eq:construction-eq-3}, such that the efficiency gain of Lin's regression and Fisher's regression are negative.

\subsection{An numerical experiment inspired by the counter-example}
\label{sec:additional-simulation}

In this subsection, we conduct a numerical experiment that approximately satisfies the counter-example shown above. We consider the SUTVA setting with $n = 1000$ experimental candidates. We follow the generation process:
\begin{equation}
    \begin{aligned}
    {Y}_i = \bs{X}_i\left(-(1 - D_i) + \frac{\pi_i(1)}{\pi_i(0)} \cdot D_i\right) + \varepsilon_i. 
    \end{aligned}\label{generation}
\end{equation}
Here $\varepsilon_i$ and $\bs{X}_i$ are i.i.d sampled from $\mathcal{N}(0, 0.25)$, and then propensity score $\{\pi_i(1)\}_{i \in \mathcal{N}_n}$ is uniformly sampled from $[0.1,0.9]$. As in the main paper, we sample them just in the first draw then keep them fixed throughout the experiment. To see why such data generation process can approximately satisfy~\ref{eq:construction-eq-1}--\ref{eq:construction-eq-3}, observe that
\begin{align*}
& \frac{1}{n} \sum_{i=1}^n \frac{\pi_i(0)}{\pi_i(1)} \bs{X}_i\mu_i(1)+ \frac{1}{n} \sum_{i=1}^n  \bs{X}_i\mu_i(0) \\
&\quad = \frac{1}{n} \sum_{i=1}^n \frac{\pi_i(0)}{\pi_i(1)} \bs{X}_i \left(\frac{\pi_i(1)}{\pi_i(0)} \bs{X}_i + \varepsilon_i\right) + \sum_{i=1}^n \bs{X}_i \left(-\bs{X}_i + \varepsilon_i\right) \\
&\quad = \frac{1}{n} \sum_{i=1}^n \bs{X}_i \varepsilon_i \left(\frac{\pi_i(0)}{\pi_i(1)} 
 + 1\right),
\end{align*}
which is approximately zero as $n \to \infty$. Using an analogous argument it is also straightforward to verify that the second term of~\eqref{eq:construction-eq-1} can be satisfied approximately. To see how~\eqref{eq:construction-eq-2} and~\eqref{eq:construction-eq-3} can be satisfied, we notice that
\begin{align*}
    & \frac{1}{n} \sum_{i=1}^n \bs{X}_i \mu_i(1) = \frac{1}{n} \sum_{i=1}^n \bs{X}_i^2 \frac{\pi_i(1)}{\pi_i(0)} + \frac{1}{n} \sum_{i=1}^n \bs{X}_i\varepsilon_i \approx \frac{1}{n} \sum_{i=1}^n \bs{X}_i^2 \frac{\pi_i(1)}{\pi_i(0)} \approx 0.437 \\
    & \frac{1}{n} \sum_{i=1}^n \bs{X}_i \mu_i(0) = - \frac{1}{n} \sum_{i=1}^n \bs{X}_i^2 + \frac{1}{n} \sum_{i=1}^n \bs{X}_i\varepsilon_i \approx - \frac{1}{n} \sum_{i=1}^n \bs{X}_i^2 \approx -0.250.
\end{align*}

We report ``$\text{ND-F}, \text{ND-L}$'' mentioned in our main text. Moreover, we set the general auxiliary variables $\bs{G}_i \equiv (\bs{X}_i^\top \bs{1}(T_i=1), \bs{X}_i^\top \bs{1}(T_i=0))^\top$ and report it as $\text{ND-}\phi_0(\text{G})$ with the normalizing procedure~\eqref{eq:point-wise-decorrelation}. 

We compare our methods with ``HT, Haj'' and Fisher's and Lin's regression adjusted estimator (``F, L'') in~\citet{gao2023causal}, respectively. We conduct $10^5$ random assignments. Table~\ref{counter_tab} illustrates that ``F, L'' are obviously less efficient than the unadjusted Hajek (``Haj'') estimator, with a decrease of $10.81\%, 35.14\%$, respectively. However, Our method ``$\text{ND-}\text{F}^{}$, $\text{ND-}\phi_0(\text{G}^{\text{}})$'' holds approximately the same efficiency as ``Haj''. Moreover, the ``ND-L'' estimator also harms the efficiency of the ``Haj'', which validates the necessity of normalizing procedure~\eqref{eq:point-wise-decorrelation}.


\begin{table}[t]
\scalebox{1}{
\begin{tabular}{c cccc ccc}
\hline\hline Outcome model & \multicolumn{7}{c}{ Counter example } \\
\hline Method & $\text{HT}$ & $\text{Haj}$ & ${\text{F}}$  & $\text{L}_{\text{}}$ & ${\text{ND-}\text{F}}$ & $\text{ND-L}$ & $\text{ND-}\phi_0(\text{G})$  \\
\hline
Empirical absolute bias & 0.001 & 0.001 & 0.000 & 0.000 & \textbf{0.000} & \textbf{0.000} & \textbf{0.001} \\
\hline 
Oracle SE & 0.037 &0.037 & 0.041 &0.050 &\textbf{0.037} &     \textbf{0.043} & \textbf{0.038}\\
Estimated SE & 0.066 & 0.066 &0.068 &0.058&    \textbf{ 0.066}  &       \textbf{0.054} & \textbf{0.066}\\
\hline 
Oracle coverage probability & 0.950 & 0.950 
& 0.950 
& 0.950 
&\textbf{0.950}
& \textbf{0.950} & \textbf{0.950}
\\
Empirical coverage probability & 0.999&0.999&0.999&0.978 &\textbf{0.999}&\textbf{0.986} & \textbf{0.999} \\
\hline\hline
\end{tabular} }
\caption{The counter-example. $\tau = 0.024$.} \label{counter_tab}
\end{table}



\section{Additional details of the data-generation process in the simulation study}
\label{sec:details-gen}

In this section, we provide additional details regarding the data-generation process of the $Y_i$'s in both the linear and nonlinear models. For the linear-in-means model (i), the generation model is naturally equivalent to the following closed form:
$$\bs{Y} =  (\bs{I}_n - \alpha_1 \bs{O})^{-1} [ \alpha_0 \bs{1}_n + (\alpha_2 \bs{O} +\alpha_3 \bs{I}_n)\bs{D} + {\alpha}_4 \bs{X} + \bs{\varepsilon}]. $$
   Here $\bs{I}_n$ is the $n-$dimensional identity matrix, and $\bs{1}_n$ is an all-one vector. 
Moreover, $\bs{\varepsilon} = (\varepsilon_1,\ldots,\varepsilon_n)^\top$ and $\bs{O}$ is row-normalized from $\bs{A}$, i.e., $\bs{O} = (A_{ij}/\sum_{l=1}^n A_{il})_{i,j\in \mathcal{N}_n}$. On the other hand, For the nonlinear contagion model (ii), $\bs{Y}$ is generated via the following iteration:
\begin{equation*}
    \begin{aligned}
    Y_i^t= \bs{1} \Big( \alpha_0 + \alpha_1 \frac{\sum_{j=1}^n A_{i j} Y^{t-1}_j}{\sum_{j=1}^n A_{i j}}+\alpha_2 \frac{\sum_{j=1}^n A_{i j} D_j}{\sum_{j=1}^n A_{i j}}+\alpha_3 D_i+ {\alpha_4} \bs{X}_i+\varepsilon_i > 0 \Big), ~ t \geq 1.
    \end{aligned}
\end{equation*} The initialization step is defined as $Y_i^0 = 0$. Such iteration continues until it achieves the convergence $Y_i^t = Y_i^{t-1}, \forall i \in \mathcal{N}_n $, then we finally set $Y_i = Y_i^t$.

\end{document}